\title{Circuit optimization of Hamiltonian simulation by\newline simultaneous
  diagonalization of Pauli clusters}
\author{Ewout van den Berg and Kristan Temme}
\affil{IBM Quantum, IBM T.J.~Watson Research Center, Yorktown Heights,
  NY, USA}
\newtheorem{theorem}{Theorem}[section]
\definecolor{darkblue}{rgb}{0.0,0.0,0.3}
\definecolor{highlight}{rgb}{0.90,0.92,1.0}
\newcommand{\ket}[1]{\ensuremath{\vert{#1}\rangle}}
\newcommand{\bra}[1]{\ensuremath{\langle{#1}\vert}}
\newcommand{\Pauli}[1]{{\sc{\MakeLowercase{#1}}}}
\newcommand{\cnot}[0]{{\sc{cnot}}}
\newcommand{\cz}[0]{{\sc{cz}}}
\providecommand{\customgenericname}{}
\newcommand{\newcustomtheorem}[2]{%
  \newenvironment{#1}[1]
  {%
   \renewcommand\customgenericname{#2}%
   \renewcommand\theinnercustomgeneric{##1}%
   \innercustomgeneric
  }
  {\endinnercustomgeneric}
}
\begin{document}

\maketitle

\begin{abstract}
  Many applications of practical interest rely on time evolution of
  Hamiltonians that are given by a sum of Pauli operators. Quantum
  circuits for exact time evolution of single Pauli operators are well
  known, and can be extended trivially to sums of commuting Paulis by
  concatenating the circuits of individual terms. In this paper we
  reduce the circuit complexity of Hamiltonian simulation by
  partitioning the Pauli operators into mutually commuting clusters
  and exponentiating the elements within each cluster after applying
  simultaneous diagonalization. We provide a practical algorithm for
  partitioning sets of Paulis into commuting subsets, and show that
  the proposed approach can help to significantly reduce both the
  number of \cnot{} operations and circuit depth for Hamiltonians
  arising in quantum chemistry. The algorithms for simultaneous
  diagonalization are also applicable in the context of stabilizer
  states; in particular we provide novel four- and five-stage
  representations, each containing only a single stage of conditional
  gates.
\end{abstract}

\section{Introduction}

Simulation of quantum systems by means of Hamiltonian time evolution
is an important application of quantum
computers~\cite{FEY1982a,LLO1996a} . The time evolution of a
Hamiltonian $H$ is given by $e^{itH}$, and the main challenge is to
generate an efficient circuit that implements or closely approximates
this time-evolution operator. Given the prominent position of
Hamiltonian time evolution in quantum computing, it should come as no
surprise that this area has been well studied, and that different
approaches have been developed, including those based on, for
instance, product formulas~\cite{TRO1959a,SUZ1991a}, quantum
walks~\cite{BER2012Ca}, linear combinations of
unitaries~\cite{CHI2012Wa}, truncated Taylor
series~\cite{BER2015CCKa}, and quantum signal
processing~\cite{LOW2017Ca} (see~\cite{CHI2018MNRa} for a good
overview).
Product formulas are applicate when, as is often the case, the
Hamiltonian can be decomposed as the sum $H = \sum_j H_j$, such that
the time evolution of each of the terms $H_j$ is readily
evaluated. Through successive application of the terms with
appropriately chosen time steps, it is then possible to simulate the
original Hamiltonian. For instance, using the Lie-Trotter product
formula~\cite{TRO1959a} we have that
\[
e^{itH} = \lim_{k\to\infty}\left(\textstyle\prod_j e^{i(t/k)H_j}\right)^k,
\]
whereas in the non-asymptotic regime, the Trotter scheme provides a
first-order approximation, with the norm of the difference between the
exact and approximate time evolution operators scaling as
$\mathcal{O}(t^2/k)$. More advanced higher-order schemes, such as
those by Suzuki~\cite{SUZ1991a}, are also available, and are analyzed
for example in~\cite{CHI2018MNRa}. The approximation errors arising in
the use of product formulas are ultimately caused by non-commuting
terms in the Hamiltonian.  Indeed, given any set of mutually commuting
operators $P_1$ through $P_m$, the exponent of the sum is equal to
products of the individual exponents, provided that the time slices
for each operator add up to $t$. As a simple example, it holds that
\begin{equation}\label{Eq:ExpP}
e^{it\sum_{j=1}^mP_j} = \prod_{j=1}^m e^{itP_j},
\end{equation}
whenever the operators commute. A natural idea, therefore, is to
partition the operators into mutually commuting subsets. This can be
done by applying graph coloring~\cite{BOL2013a} to a graph whose nodes
correspond to the operators and whose edges connecting nodes for which
the associated operators do not commute. The resulting coloring is
such that all nodes sharing the same color commute. Time evolution for
the sum of nodes within each subset is then trivial, and product
formulas can be applied to the sum of Hamiltonians formed as the sum
of each subset.
This approach is especially applicable in scenarios where the
Hamiltonian is expressed as a sum of Pauli operators, for which the
commutativity relations are easily evaluated. This situation arises by
definition in spin simulation of magnetic systems using the Heisenberg
model. In other applications, such as the quantum simulation of
fermionic systems, the terms in the Hamiltonian can be mapped to Pauli
operators using for example the Jordan-Wigner or Bravyi-Kitaev
transformation~\cite{BRA2002Ka,JOR1928Wa,TRA2018LMVa}.

In this paper we focus on quantum circuits for evaluating the product
of commuting exponentials, appearing on the right-hand side of
equation~\eqref{Eq:ExpP}. We also consider the partitioning of terms,
and application of the proposed methods to quantum chemistry.  Given
the limited qubit connectivity in near-term architectures, we largely
focus on reducing the number of \cnot\ gates, since these may
translate into large numbers of swap gates. For systems that use
error-correction codes, it may be important to reduce other gates,
such as the $T$-gate. These gates only appear in the exponentiation of
the diagonalized operators, and these parts of the circuit can be
independently simplified using techniques such as those described
in~\cite{AMY2014MMa,AMY2019Ma}. We further note that clustering of
Pauli operators and simultaneous diagonalization of commuting
operators also arises in variational quantum
eigensolvers~\cite{KAN2017MTTa,VER2020YIa,JEN2019GMa,YEN2020VIa,GOK2019ADGa,CRA2019SWPa,BON2019BBa}.
In that context, however, the techniques are used for an altogether
different purpose; namely, to reduce the number of measurements to
estimate inner-products of the initial state with different Pauli
operators. The schemes we develop for simultaneous diagonalization and
partitioning are also applicable in the context of variational quantum
eigensolvers.

The paper is organized as follows. In Section~\ref{Sec:Existing} we
review the basic circuit for exponentiation of individual Pauli
operators, and how these can be combined. Section~\ref{Sec:Proposed}
describes the proposed approach based on simultaneous
diagonalization. Synthesis and optimization of circuits for
diagonalization are studied in Section~\ref{Sec:Circuits}. In
Section~\ref{Sec:Experiments} we perform numerical experiments to
obtain the circuit complexity for simulating random Paulis and
Hamiltonians arising in quantum chemistry. Conclusions are given in
Section~\ref{Sec:Conclusions}.

\paragraph{Notation}
We denote the Pauli matrices by $\sigma_x$, $\sigma_y$, and
$\sigma_z$, and write $\sigma_i$ for the two-by-two identity matrix.
The tensor product of $n$ Pauli matrices gives an $n$-Pauli operators,
which we denote by the corresponding string of characters, for example
$\text{\Pauli{ZXI}} = \sigma_z\otimes\sigma_x\otimes\sigma_i$.
We write $[n] = \{1,\ldots,n\}$ and denote the binary group by $\mathbb{F}_2$.

\section{Direct exponentiation of Pauli operators}\label{Sec:Existing}

Given a Hermitian operator $M$ with eigendecomposition
$M = Q\Lambda Q^{\dag} =\sum_k \lambda_k\ket{q_k}\bra{q_k}$, it holds
that exponentiation of the matrix is equivalent to exponentiation of
the individual eigenvalues; that is,
\[
e^{i\theta M} = Qe^{i\theta\Lambda} Q^{\dag} = \sum_k e^{i\theta \lambda_k}\ket{q_k}\bra{q_k}.
\]
Alternatively, we can look at operators $D = Q^{\dag}$ that
diagonalize $M$, that is $DMD^{\dag} = \Lambda$. The identity and
Pauli $\sigma_z$ matrices are already diagonal, and therefore have a
trivial diagonalization with $D=I$. From this it follows directly that
\[
e^{i\theta \sigma_i} = e^{i\theta}I,\quad\mbox{and}\quad
e^{i\theta \sigma_z} = \left[\begin{array}{cc} e^{i\theta} & 0\\0 & e^{-i\theta}\end{array}\right]
=: R_z(\theta)
\]
The remaining two Pauli operators $\sigma_x$ and $\sigma_y$ can be
diagonalized to $\Lambda = \sigma_z$ with operators
\[
D_x = H = \frac{1}{\sqrt{2}}\left[\begin{array}{rr}1&1\\1&-1\end{array}\right],
\quad \mbox{and}\quad
D_y = HSX = \frac{1}{\sqrt{2}}\left(\begin{array}{cc}i & 1\\-i & 1\end{array}\right)
\quad\mbox{where}\quad
S = \left(\begin{array}{cc}1&0\\0&i\end{array}\right).\quad
\]
It then follows that
$
e^{i\theta \sigma_x} = e^{i\theta D_x^{\dag}\sigma_z D_x} =
D_x^{\dag}e^{i\theta \sigma_z}D_x = D_x^{\dag}R_z(\theta)D_x,
$
and likewise for $\sigma_y$. A direct way to exponentiate a Pauli
matrix is to first apply the appropriate diagonalization operator $D$,
followed by the rotation $R_z(\theta)$, and finally the adjoint
diagonalization operator $D^{\dag}$.

In order to exponentiate general $n$-Pauli operators we first
diagonalize the matrix, which is done by applying the tensor product
of the diagonalization operators corresponding to each of the
terms. The resulting diagonal is the tensor product of $\sigma_i$ and
$\sigma_z$ matrices; a $\sigma_i$ for each \Pauli{I} term, and
$\sigma_z$ for each of the \Pauli{X},\Pauli{Y}, or \Pauli{Z} terms.
For a given element in the computational basis we can determine the
sign induced by the $\sigma_z$ diagonal terms and maintain the overall
sign in an ancilla qubit using \cnot{} operators. The rotation
operator $R_z(\theta)$ is then applied to the ancilla to achieve the
exponentiation of the eigenvalue (see also \cite[Chapter
4]{NIE2010Ca}). We then uncompute the ancilla by reapplying the \cnot\
gates, and complete the procedure by applying the adjoint
diagonalization operator. An example for the successive
exponentiation of Pauli operators \Pauli{IXX}, \Pauli{ZYZ},
\Pauli{XXI} with angles $\theta_1$,$\theta_2$, and $\theta_3$, is
shown in Figure~\ref{Fig:IndividualExponentiation}(a). Several remarks
are in place here. First, in the diagonalization of $\sigma_y$ we
include a {\sc{not}} operator (X) to ensure diagonalization to
$\sigma_z$ rather than $-\sigma_z$. In practice this term can be
omitted, and for each occurrence of a $\sigma_y$ term we can simply
multiply the corresponding rotation angle $\theta$ by $-1$.  Second,
it is often the case that time evolution needs to be done as a
conditional circuit. Instead of making each gate conditional it
suffices to merely make the $R_z$ gates conditional. Third, for sets
of commuting Pauli operators it is possible to obtain circuits with
reduced complexity by rearranging the order in which the Pauli
operators are applied in such as way that as many gates as possible
cancel. In the example shown in
Figure~\ref{Fig:IndividualExponentiation}(b) we rearrange the blocks
and apply simple gate cancellation to adjacent pairs of identical
diagonalization operations and \cnot\ gates.

\begin{figure}
\centering
\begin{tabular}{c}
\includegraphics[width=0.99\textwidth]{./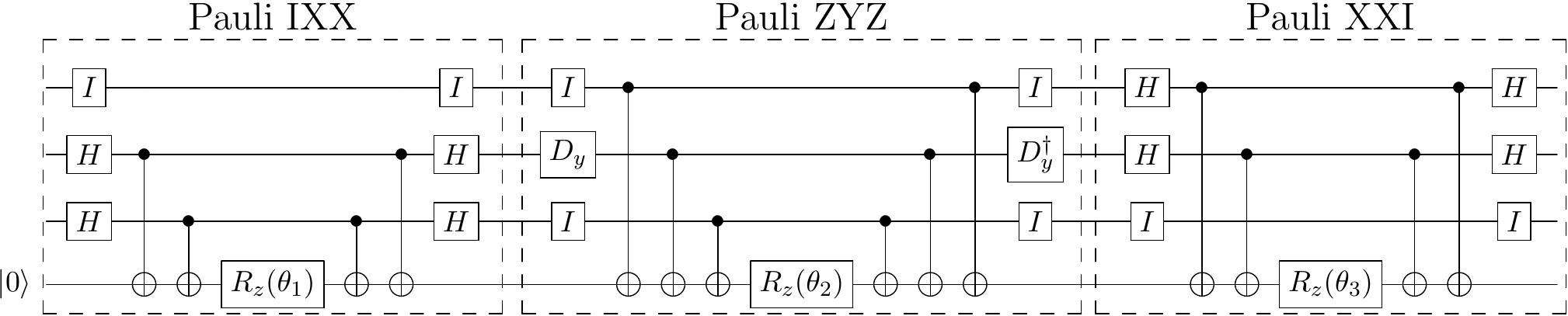}\\
\\[-8pt]
  ({\bf{a}}) Basic circuit\\[12pt]
\includegraphics[width=0.9\textwidth]{./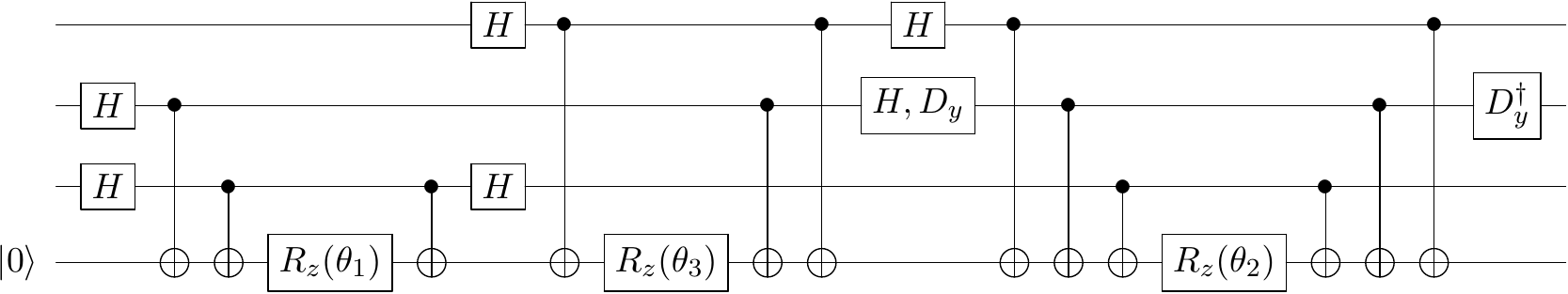}\\
\\[-8pt]
  ({\bf{b}}) Circuit after permuting blocks and applying gate cancellations
\end{tabular}
\caption{Individual exponentiation of the terms in the group of
  the Pauli operators. The top panel shows the basic circuit, the
  bottom panel gives the optimized circuit obtained by reordering the
Paulis and canceling unitaries where possible.}\label{Fig:IndividualExponentiation}
\end{figure}

\section{Proposed approach}\label{Sec:Proposed}

It is well known for any set of mutually commuting operators there
exists a unitary $\mathcal{U}$ that simultaneously diagonalizes each
of the operators in the set~\cite[Thm. 1.3.19]{HOR1985Ja}. Applying
this to a set of commuting $n$-Pauli operations $\{P_j\}_{j=1}^m$, we
know that there exists a unitary
$\mathcal{U}\in\mathbb{C}^{2^n\times 2^n}$, such that
$\mathcal{U}P_j\mathcal{U}^\dag = \Lambda_j$ is diagonal for all
$i\in[m]$. Moreover, not only are the resulting operators diagonal,
they are in fact Pauli operators themselves, consisting only of
$\sigma_i$ and $\sigma_z$ terms along with a sign. As an example we
apply the techniques we develop in Section~\ref{Sec:Circuits} to the
three commuting Paulis used in
Figure~\ref{Fig:IndividualExponentiation}. The resulting circuits that
each diagonalize all three Paulis, along with the resulting diagonals
are shown in Figure~\ref{Fig:CircuitU}.

\begin{figure}[h!]
\centering
\begin{tabular}{ccc}
\begin{minipage}{5cm}
\hspace*{-3pt}
\Qcircuit @C=1.0em @R=0.2em @!R {
& \gate{H} & \targ & \qw & \qw & \qw & \qw \\
& \gate{H} & \qw & \targ & \qw & \qw & \qw \\
& \gate{H} & \ctrl{-2} & \ctrl{-1} & \gate{S} & \gate{H} & \qw\\
}
\end{minipage}&
\begin{minipage}{5cm}
\hspace*{-3pt}
\Qcircuit @C=1.0em @R=0.2em @!R {
& \gate{S} & \targ & \qw & \gate{S} & \gate{H} & \qw \\
& \gate{S} & \qw & \targ & \gate{S} & \gate{H} & \qw \\
& \gate{H} & \ctrl{-2} & \ctrl{-1} & \gate{S} & \gate{H} & \qw \\
}
\end{minipage}&
\begin{minipage}{5cm}
\hspace*{-3pt}
\Qcircuit @C=1.0em @R=0.2em @!R {
& \ctrl{1} &  \qw & \qw & \gate{H} & \qw \\
& \targ   & \ctrl{1} & \gate{S} & \gate{H} & \qw  \\
& \gate{H} & \targ & \gate{S} & \gate{H}  & \qw \\
	 }
\end{minipage}
\\
\\[-3pt]
({\bf{a}}) \Pauli{izi}, \Pauli{izz}, \Pauli{zzi} &    
({\bf{b}}) \Pauli{-izi}, \Pauli{ziz}, \Pauli{zzi} &  
({\bf{c}}) \Pauli{-izz}, \Pauli{-izi}, \Pauli{zii}    
\end{tabular}
\caption{Circuits for simultaneous diagonalization of \Pauli{IXX}, \Pauli{ZYZ}, and
\Pauli{XXI}, along with the resulting Paulis.}\label{Fig:CircuitU}
\end{figure}
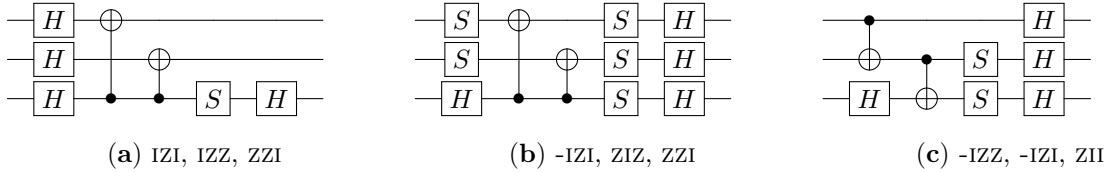

The advantage of simultaneous diagonalization become apparent when
looking at the exponentiation of the sum of commuting Paulis. 
From~\eqref{Eq:ExpP} we know that this is equal to the product of
individual exponents. Additionally using diagonalization then gives
\[
e^{i \sum_{j=1}^m \theta_jP_j} =
\prod_{j=1}^m e^{i\theta_jP_j} =
\prod_{j=1}^m\left( \mathcal{U}^{\dag}e^{i\theta_j\Lambda_j}\mathcal{U}\right) =
{\textstyle\mathcal{U}^{\dag}\left(\prod_{j=1}^m e^{i\theta_j\Lambda_j}\right)\mathcal{U}}.
\]
The last equality follows from the fact that successive
$\mathcal{U}\mathcal{U}^{\dag}$ terms cancel, thereby allowing us to
apply the diagonalization operator and its adjoin only once, instead
of once for each individual term. Since we know how to exponentiate
the diagonal Paulis, we can put everything together to obtain the
circuit shown in Figure~\ref{Fig:Circuit3}(a). If needed, the sign of
the diagonalized terms can be incorporated in the rotation
angle. Similar to the original approach we can often simplify the
circuit by canceling adjacent gates that multiply to identity. A key
advantage of diagonalization then is that, aside from the $R_z$ gates,
each term consists entirely of \cnot\ gates. This provides much more
room for optimization, since instead of having to match four terms
(\Pauli{I}, \Pauli{X}, \Pauli{Y}, and \Pauli{Z}), we only need to
consider two (\Pauli{I} and \Pauli{Z}). This makes is easier to find
orderings of the terms that reduce the number of \cnot\ gates in the
circuit. The circuit after simplification can be seen in
Figure~\ref{Fig:Circuit3}(b).  Compared to
Figure~\ref{Fig:IndividualExponentiation}(b), which requires twelve
\cnot\ gates, this example uses only a total of ten \cnot\ operation:
six for exponentiation and a further four for the diagonalization
circuit and its adjoint.

In practice it is unlikely that all terms in a Hamiltonian
commute. For these cases we first need to partition the terms into
subsets of commuting operators. For each of these subsets we can then
apply simultaneous diagonalization for simulating that part of the
Hamiltonian. When the number of terms in a subset is small we may find
that exponentiation using the original approach gives a better
circuit. We can use this to our advantage by simply choosing the best
method for each subset.

\begin{figure}
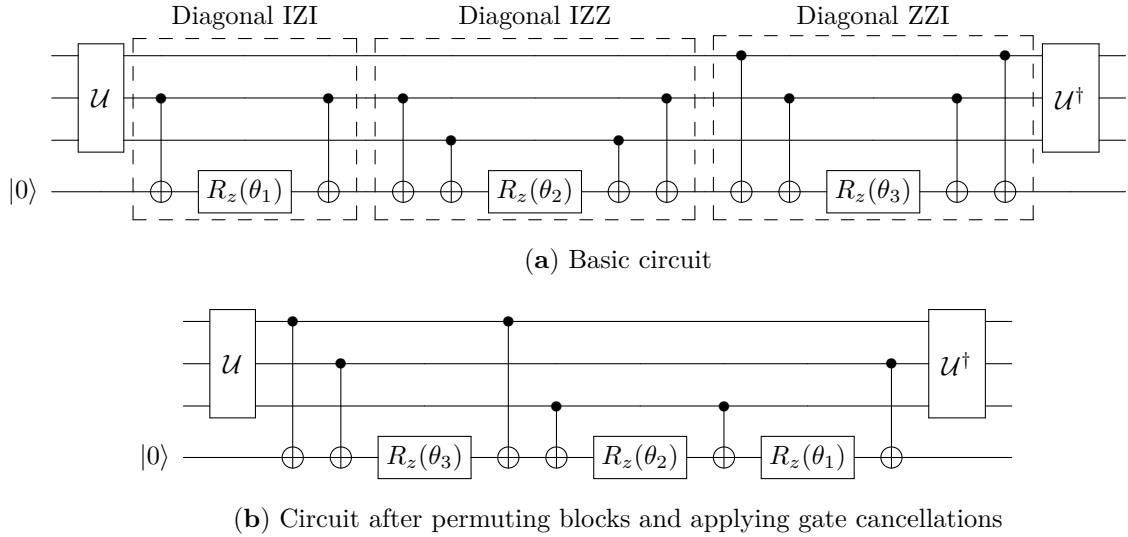

\centering
\begin{tabular}{c}
\input{fig/FigCircuit3a.tex}\\
\\[-2pt]
({\bf{a}}) Basic circuit\\[12pt]
\input{fig/FigCircuit3b.tex}\\
\\[-2pt]
({\bf{b}}) Circuit after permuting blocks and applying gate cancellations
\end{tabular}
\caption{Circuit for exponentiation of Paulis using simultaneous
  diagonalization operator $\mathcal{U}$. The top panel shows the basic
  circuit, the bottom panel gives the optimized circuit obtained by
  reordering the blocks and canceling adjacent \cnot\ gates.}\label{Fig:Circuit3}
\end{figure}

\section{Circuits for simultaneous diagonalization}\label{Sec:Circuits}

In this section we consider the construction of circuits that
simultaneously diagonalize a given set of commuting Pauli
operators. This is conveniently done using the tableau representation
originally used to simulate stabilizer
circuits~\cite{AAR2004Ga,GOT1998a} and reviewed next. The schemes for
simultaneous diagonalization presented
in~\cite{GOK2019ADGa,CRA2019SWPa} use the same techniques, but differ
from ours in the number and type of stages.

\subsection{Tableau representation and operations}

The tableau representation is a binary array in which each row
represents a single $n$-Pauli operator. The columns of the tableau are
partitioned as $[X,Z,s]$,  such that $(X_{i,j},Z_{i,j})$ represents the
$j$th component of the $i$th Pauli operator. The value is $(1,0)$ for
\Pauli{X}, $(0,1)$ for \Pauli{Z}, $(1,1)$ for \Pauli{Y}, and $(0,0)$
for \Pauli{I}. Entries in $s$ are set if the corresponding Pauli operator
has a negative sign. For instance:
\[
\left[\begin{array}{c|c|c}
1001 & 0101 & 0\\
0110 & 1101 & 1
\end{array}\right] =
\left[\begin{array}{r}
\mbox{\Pauli{XZIY}}\\
\mbox{-\Pauli{ZYXZ}}
\end{array}\right]
\]
To keep the exposition clear, we do not show the sign column in the
tableaus for the remainder of the paper. It is of crucial importance,
however, that the appropriate signs are maintained, as they eventually
appear in the exponent of the Paulis. Once the tableau is set up we
can apply different operations. The first two operations, illustrated
in Figures~\ref{Fig:SimpleOperations}(a)
and~\ref{Fig:SimpleOperations}(b), change the order of respectively
the Pauli operators and the qubits. A third operation, shown in
Figures~\ref{Fig:SimpleOperations}(c) sweeps one row with
another. This operation corresponds to multiplication of the
operators, which results in the given entries in the X and Z blocks to
be added modulo two. The sign update is more involved and we refer
to~\cite{AAR2004Ga} for details.  Even though these operations alter
the tableau they do not generate any corresponding gates in the
circuit.

\begin{figure}[h!]
\centering
\begin{tabular}{ccc}
\includegraphics[width=0.3\textwidth]{./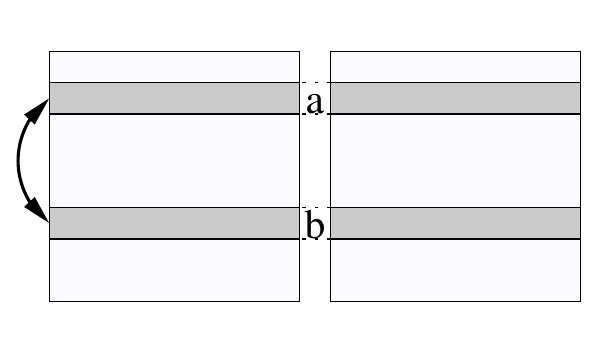} &
\includegraphics[width=0.3\textwidth]{./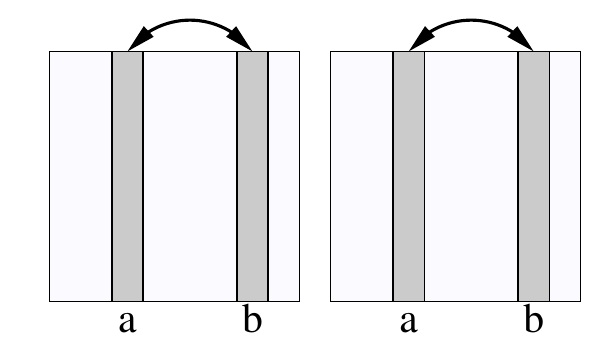} &
\includegraphics[width=0.3\textwidth]{./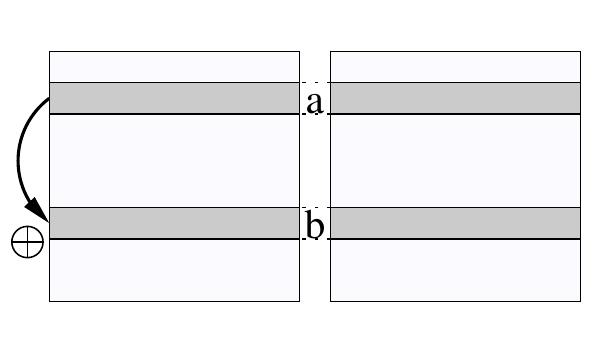}\\
({\bf{a}}) Swap rows $a$ and $b$ &
({\bf{b}}) Swap columns $a$ and $b$ &
({\bf{c}}) Sweep row $b$ with row $a$
\end{tabular}
\caption{Graphical illustration of tableau operations that do not
  generate any gates in the circuit. The column-swap operations
  changes the logical order of the qubits.}\label{Fig:SimpleOperations}
\end{figure}

In addition to these basic operations, we can apply operators from the
Clifford group. Operators $C$ in this group are unitary and have the
property that $\mathcal{C}P\mathcal{C}^{\dag}$ is a Pauli operator for
any Pauli operator $P$. The Clifford group can be generated by three
gates: the Hadamard gate (H), the phase gate (S), and the
conditional-{\sc{not}} (\cnot) gates. The definition of these gates
along with the respective update rules and effect on the tableau are
summarized in Figure~\ref{Fig:GateOperations}. The Hadamard gate
applied to a column (qubit) results in the exchange of the
corresponding columns in the X and Z blocks. The phase gate adds a
given column in the X block to the matching column in the Z block,
along with appropriate updates to the signs as shown in the
figure. The conditional-{\sc{not}} operation $CX(a,b)$, that is,
negation of qubit $b$ conditional on qubit $a$, has the effect of
adding column $a$ to column $b$ in block X, and adding column $b$ to
column $a$ in block Z. From the basic three operations we can form
another convenient gate, the conditional-Z gate (see
also~\cite{GAR2013MCa}). Denoted $CZ(a,b)$, this gate is equivalent to
successively applying $H(b)$, $CX(a,b)$, and $H(b)$, and has the
effect of adding columns $a$ and $b$ of block X to columns $a$ and $b$
of block Z, respectively.

\begin{figure}[t!]
\centering
\begin{tabular}{lcp{5.3cm}p{5.3cm}}
\hline
Operator & matrix & sign update & block update\\[1pt]
\hline
\\[-8pt]
$H(a)$
& {\scriptsize$\frac{1}{\sqrt{2}}\left(\begin{array}{cc}1&1\\1&-1\end{array}\right)$}
&  --
& $\mathrm{swap}(x_a,z_a)$
\\[10pt]
$S(a)$
& {\scriptsize$\left(\begin{array}{cc}1&0\\0&i\end{array}\right)$}
& $s \oplus x_a\otimes z_a$
& $z_a = z_a \oplus x_a$
\\[10pt]
$CX(a,b)$
& {\scriptsize$\left(\begin{array}{cccc}1&0&0&0\\0&1&0&0\\0&0&0&1\\0&0&1&0\end{array}\right)$}
& $s \oplus (x_a\otimes z_b\otimes (x_b\oplus z_a\oplus 1))$
& $z_a = z_a \oplus z_b$, $x_b = x_b \oplus x_a$
\\[20pt]
$CZ(a,b)$
&{\scriptsize$\left(\begin{array}{cccc}1&0&0&0\\0&1&0&0\\0&0&1&0\\0&0&0&-1\end{array}\right)$}
& $s \oplus (x_a\otimes x_b\otimes (z_a\oplus z_b\oplus 1))$
& $z_a = z_a \oplus x_b$, $z_b = z_b \oplus x_a$
\\[18pt]
\hline
\\[-10pt]
\end{tabular}
\begin{tabular}{cccc}
\includegraphics[width=0.22\textwidth]{./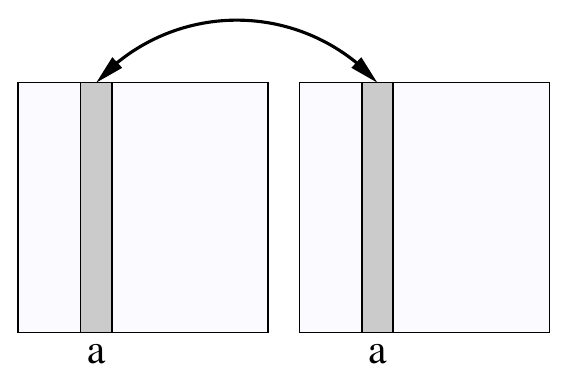} &
\includegraphics[width=0.22\textwidth]{./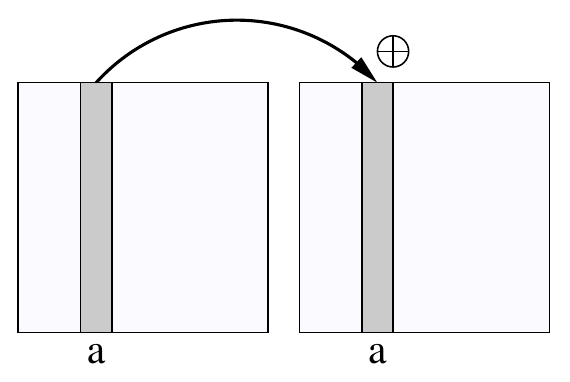} &
\includegraphics[width=0.22\textwidth]{./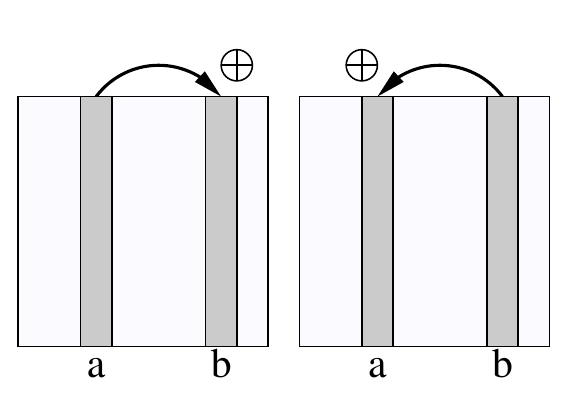}&
\includegraphics[width=0.22\textwidth]{./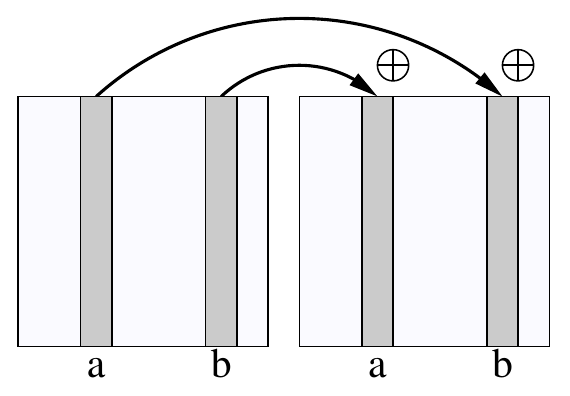}\\
$H(a)$& $S(a)$& $CX(a,b)$ & $CZ(a,b)$
\end{tabular}
\caption{Table and graphical illustration summarizing the effect
  different operations have on the tableau. Columns $a$ and $b$ of $X$
  and $Z$ are indicated by $x_a, x_b, z_a, z_b$, respectively. The
  sign vector $s$ is updated prior to the block updates.}\label{Fig:GateOperations}
\end{figure}

\subsection{Simultaneous diagonalization}

For simultaneous diagonalization we initialize the tableau with our
commuting set of Paulis. We then need to apply the different tableau
operations in such a way that the entries in the X block of the final
tableau are all zero. In our algorithms we use row swaps and row sweep
operations. Even though these operations do not generate any gates for
the circuit, they do alter the tableau, and the underlying Pauli
operations. In order to obtain the appropriate diagonalization of the
original Pauli operations we can do one of two things. First, since
these operations commute with the Clifford operations we can apply the
inverse operations of all row operations at the end. Second, we can
work with a parallel tableau on which only the Clifford operations are
applied. The desired diagonalized Pauli operators are then represented
by the final tableau. We now look at several algorithm that clear the
X block. In this we occasionally need to use the rank of the tableau,
which we define as the rank of the $[X,Z]$ matrix.

\subsection{Diagonalizing the X block}\label{Sec:DiagonalizeX}

For the simultaneous diagonalization process we proceed in phases. In
the first phase we manipulate the tableau such that only the entries
on the diagonal of the X block are nonzero. More precisely, let $r$ be
the rank of the matrix $[X,Z]$, then we would like the first $r$
diagonal elements of the X block to be one, and all remaining elements
of the block to be zero. The algorithm we use for this is given in
Algorithm~\ref{Alg:DiagonalizeX}. At the beginning of the algorithm we
are given a tableau corresponding to commuting Paulis. At this point
there is no clear structure, and the tableau therefore looks something
like Figure~\ref{Fig:DiagonalizeX}(a), where gray indicate possibly
nonzero entries (although we illustrate the procedure on a tableau
with $m > n$, the process applies equally to tableaus with other
shapes). In
steps~\ref{Line:DiagX_Begin_Stage1}--\ref{Line:DiagX_End_Stage1} we
iteratively diagonalize the X block. Starting at $k=1$ we first look
for a nonzero element in rows and columns of the X block with indices
at least $k$. If found, we move the one entry to location $(k,k)$ by
applying appropriate row and column swaps, sweep all other nonzero
entries in the new column, increment $k$, and continue. If no such
item we could be found we are done with the first stage and have a
tableau of the form illustrated in
Figure~\ref{Fig:DiagonalizeX}(b). In
steps~\ref{Line:DiagX_Begin_Stage2}--\ref{Line:DiagX_End_Stage2}, we
then repeat the same process on the Z block, starting off at the
current $k$. The tableau at the end of the second stage would look
like Figure~\ref{Fig:DiagonalizeX}(c). In the third stage, given by
steps~\ref{Line:DiagX_Begin_Stage3}--\ref{Line:DiagX_End_Stage3}, we
apply Hadamard gates to swap the diagonalized columns in the Z block
with the corresponding columns in the X block, resulting the a tableau
as shown in Figure~\ref{Fig:DiagonalizeX}(d).  If the rank $r$ is less
than $n$, there may be spurious nonzero elements to the right of the
diagonal block in X. These are swept using \cnot\ operations in
steps~\ref{Line:DiagX_Begin_Stage4}--\ref{Line:DiagX_End_Stage4}.  The
resulting tableau after the final fourth stage is depicted in
Figure~\ref{Fig:DiagonalizeX}(e).

Recalling that the tableau has rank $r$, it is immediate by
construction that any row in X with index exceeding $r$ will be
zero. It therefore follows immediately that the Paulis associated with
these rows contain only \Pauli{I} and \Pauli{Z} terms. The Pauli
string for rows $i$ with $i \leq k$ consist of all \Pauli{I} and
\Pauli{Z} terms, except for an \Pauli{X} or \Pauli{Y} term at location
$i$. We now show that rows $i$ in Z with $i > r$ are also all
zero. This certainly holds for column indices $j > k$, and we
therefore assume that we have $Z[i,j] =1$ with $i > r$ and $j \leq k$.
The terms in the Pauli operators for rows $i$ and $j$ commute at all
indices except $j$, where row $i$ has \Pauli{Z} and row $j$ has
\Pauli{X} or \Pauli{Y}. The Pauli operations therefore anticommute,
which contradicts our assumption that the Paulis in the tableau
commute, and it therefore follows that rows $i > r$ in Z are all
zero. Now, note that the \cnot\ operations in the fourth stage and the
Hadamard operations in the third stage, did not affect the values in
the bottom-left block of Z. We conclude that these values must
therefore already have been zero at the end of stage two, as shown in
Figure~\ref{Fig:DiagonalizeX}(f). The following result is a direct
consequence of the above discussion (if needed tableaus can always be
augmented to make them full rank, see for example~\cite{CRA2019SWPa}):

\begin{theorem}
  The X block of any tableau corresponding to commuting $n$-Paulis
  with rank $n$ can be diagonalized using only Hadamard gates.
\end{theorem}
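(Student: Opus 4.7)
My strategy is to show that when the input tableau has rank $n$, Algorithm~\ref{Alg:DiagonalizeX} succeeds using only the Hadamards of its third stage, with no \cnot{} gates needed. I would walk through the stages and track the relevant structural invariants.

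First, stage one produces no circuit gates at all (only row/column swaps and row sweeps). Writing $k_1$ for the terminal value of $k$, its effect is to clear every entry of $X$ in rows of index $\geq k_1$: the column sweeps zero $X[i,c]$ for $i\neq c$ in every pivoted column $c<k_1$, and the loop's termination condition zeroes $X[i,j]$ for $i,j\geq k_1$. The paper's commutativity argument then forces the $Z$-entries of those same rows to vanish in columns $<k_1$. Combined with the rank-$n$ hypothesis, which obliges the rank of the residual block $Z[k_1{:}m,k_1{:}n]$ to be exactly $n-k_1+1$, stage two (again purely bookkeeping) finds a pivot in every column $c\in\{k_1,\ldots,n\}$ and, after its column sweep, arranges $Z[:,c]=e_c$ for each such $c$. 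A key side observation is that stage two does not disturb $X$, because every pivot row it uses has index $\geq k_1$ and hence $X$-row equal to zero.

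Stage three then applies a Hadamard to each of the columns $c\in\{k_1,\ldots,n\}$. Since a Hadamard swaps the corresponding columns of $X$ and $Z$, the result is $X[:,c]=e_c$ for $c\geq k_1$; combined with the pivots $X[:,c]=e_c$ for $c<k_1$ inherited from stage one, the entire $X$ block becomes the identity and is therefore diagonal. The rank-$n$ hypothesis ensures no column is unpivoted, so no spurious entries in $X$ remain to the right of any diagonal block and stage four is unused.

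The one point that needs care is verifying that stage two both (i) finds a pivot in every column $c\geq k_1$ and (ii) leaves $X$ untouched. Both reduce to the structural fact that the rows $\geq k_1$ have $X=0$ and $Z$-support contained in columns $\geq k_1$, so stage-two row sweeps act only inside the $Z$ block and, by the rank count, cannot stall before exhausting all the remaining columns.
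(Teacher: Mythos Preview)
Your overall strategy is sound, but there is a genuine gap in the step where you invoke ``the paper's commutativity argument'' immediately after stage one to conclude that the $Z$-entries of rows $i\geq k_1$ vanish in columns $j<k_1$. That claim is false at that point in the algorithm. Consider the commuting pair $\{\text{\Pauli{XX}},\text{\Pauli{ZZ}}\}$ on $n=2$ qubits (rank~$2$). Stage one terminates with $k_1=2$, leaving
\[
X=\begin{pmatrix}1&1\\0&0\end{pmatrix},\qquad
Z=\begin{pmatrix}0&0\\1&1\end{pmatrix},
\]
so $Z[2,1]=1\neq 0$. The reason the paper's commutativity argument does not transfer is that it relies on row $j$ having $X[j,:]=e_j$, so that positions other than $j$ are guaranteed to commute; after stage one the top-right block $X_{12}:=X[1{:}k_1{-}1,\,k_1{:}n]$ need not be zero, and those extra $X$-entries can contribute to the symplectic product. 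Concretely, commutativity of row $i\geq k_1$ with row $j<k_1$ at this stage only yields
\[
Z[i,j]\;=\;\sum_{c\geq k_1}X[j,c]\,Z[i,c]\pmod 2,
\]
i.e.\ $Z_{21}=Z_{22}X_{12}^{T}$, not $Z_{21}=0$.

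Fortunately this weaker identity is enough to rescue your rank step: since $[\,Z_{21}\mid Z_{22}\,]=Z_{22}[\,X_{12}^{T}\mid I\,]$, the columns of $Z_{21}$ lie in the column span of $Z_{22}$, and hence $\mathrm{rank}(Z_{22})=\mathrm{rank}([Z_{21}\mid Z_{22}])=n-k_1+1$. From there your argument goes through. The paper sidesteps the issue altogether by applying the commutativity argument only \emph{after} stage four, where $X$ really is diagonal and the clean conclusion $Z[i,j]=0$ holds, and then observing that stages three and four leave the bottom-left of $Z$ untouched. (A minor additional imprecision: your claim that ``stage two does not disturb $X$'' overlooks the column swaps, which do permute the columns $\geq k_1$ of $X$; this is harmless since those columns are overwritten by the Hadamards in stage three, but the stated justification via pivot rows covers only the row operations.)
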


The fourth stage of the algorithm for diagonalizing X is applicable
whenever the rank of the tableau is less than $n$. In the
implementation given in Algorithm~\ref{Alg:DiagonalizeX} we clear the
spurious entries using \cnot\ operations. There are several ways in
which this stage could be improved. We could determine, for instance,
if the corresponding column in Z has fewer nonzero entries. If that
were the case, we could swap the column using a Hadamard operation and
sweep the alternative column instead. Likewise, it would be possible
to see if sweeping the Z column with that of X using a phase gate,
followed by a swap would be more efficient. In both these cases the
number of \cnot\ operations would be reduced at the cost of
single-qubit operations. If two columns in the residual column block
are similar, one could be simplified by sweeping with the other using
a \cnot\ operation. Further optimization is possible using a
combination of these techniques.

\begin{figure}
\centering
\setlength{\tabcolsep}{12pt}
\begin{tabular}{ccc}
\includegraphics[width=0.24\textwidth]{./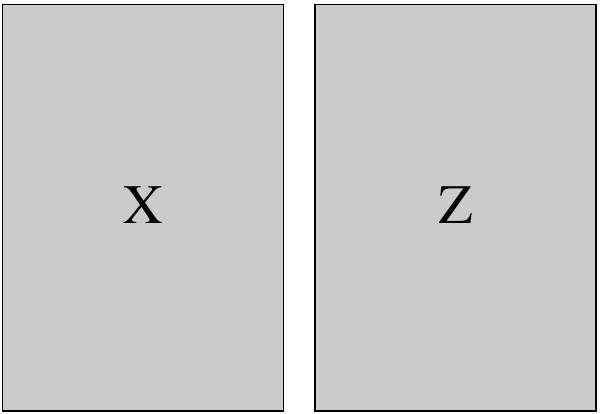} &
\includegraphics[width=0.24\textwidth]{./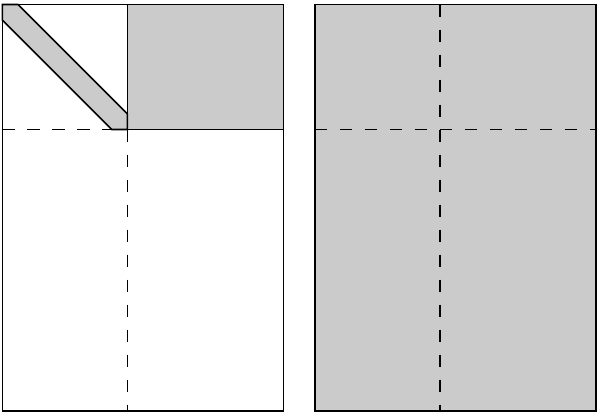} &
\includegraphics[width=0.24\textwidth]{./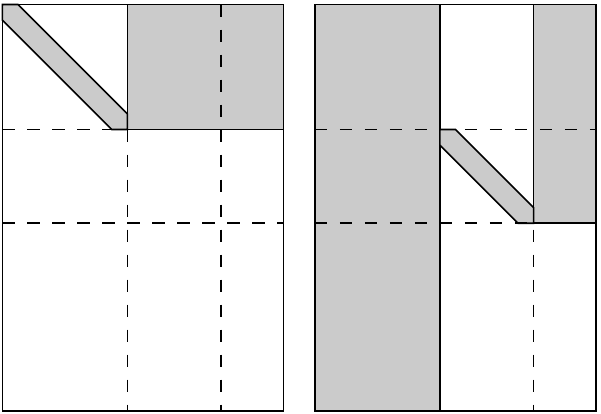} \\
({\bf{a}}) & ({\bf{b}}) & ({\bf{c}})\\[8pt]
\includegraphics[width=0.24\textwidth]{./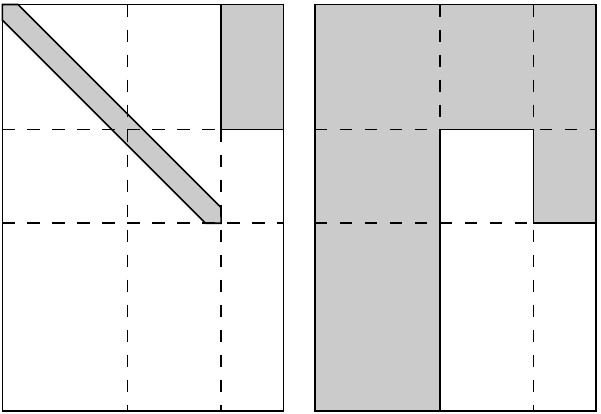} &
\includegraphics[width=0.24\textwidth]{./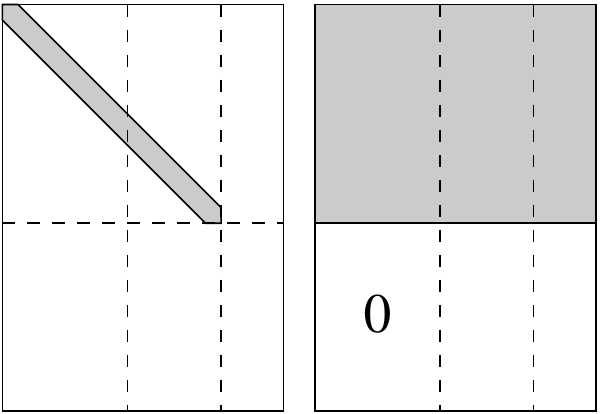} &
\includegraphics[width=0.24\textwidth]{./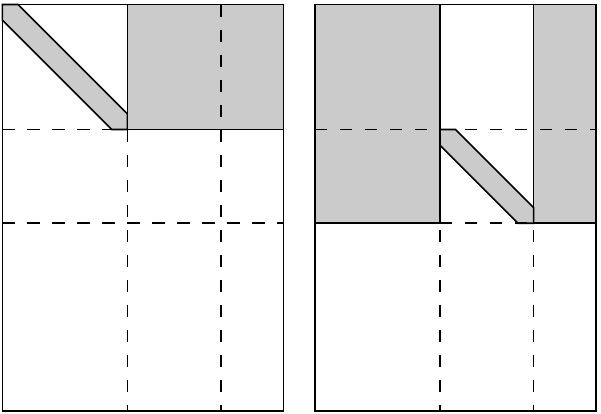} \\
({\bf{d}}) & ({\bf{e}}) & ({\bf{f}})
\end{tabular}
\caption{Diagonalization of the X block with (a) the initial tableau;
  and the situation after (b) partial diagonalization within the X block; 
  (c) continued diagonalization in Z; (d) combination of the diagonal
  parts through application of Hadamard operations; and (e) the final
  result after sweeping the top-right segment of X. Plot (f) shows the
actual zero pattern after diagonalizing part of Z.}\label{Fig:DiagonalizeX}
\end{figure}

\begin{algorithm}[t!]
\caption{Diagonalization of the X block.}\label{Alg:DiagonalizeX}
\hspace*{0pt}{\vspace*{-3pt}}\\
\hspace*{0pt}{{\bf Input}: the input to this function is a tableau
  $T=[X,Z,S]$ of size $m\times 2n+1$, consisting of the X and Z blocks,
  as well as a sign vector $S$. We use the convention that indexing of
  the X or Z blocks corresponds to indexing the tableau at the
  corresponding location. Swapping or sweeping rows applies to the
  entire tableau. Swapping columns $i$ and $j$ means swapping these
  columns in
  both the $X$ and $Z$ blocks.}\\
\hspace*{0pt}{{\bf Output}: updated tableau with off-diagonal entries
  in the X block set to zero.}\\
\hspace*{0pt}{{\bf Complexity}: $\mathcal{O}(n^2\max(m,n))$}\\[-12pt]
\begin{algorithmic}[1]
\State{$k\gets 1$}
\Repeat\label{Line:DiagX_Begin_Stage1}
   \State{Search for index $(i,j)$ with $k \leq i \leq m$ and $k \leq j \leq
   n$ such that $X[i,j] = 1$}
   \If{(index found)}
      \State{Swap rows $i$ and $k$; swap columns $j$ and $k$}
      \For{$i \in [m]$ such that $i\neq k$ and $X[i,k]=1$}
         \State{Sweep row $i$ with row $k$}
      \EndFor
      \State{$k\gets k+1$}
   \EndIf
\Until{(index could not be found)}\label{Line:DiagX_End_Stage1}
\State{$k_x \gets k$}
\Repeat\label{Line:DiagX_Begin_Stage2}
   \State{Search for index $(i,j)$ with $k \leq i \leq m$ and $k \leq j \leq
   n$ such that $Z[i,j] = 1$}
   \If{(index found)}
      \State{Swap rows $i$ and $k$; swap columns $j$ and $k$}
      \For{$i \in [m]$ such that $i\neq k$ and $Z[i,k]=1$}
         \State{Sweep row $i$ with row $k$}
      \EndFor
      \State{$k\gets k+1$}
   \EndIf
\Until{(index could not be found)}\label{Line:DiagX_End_Stage2}
\For{$j \in \{k_x,\ldots,k-1\}$}\label{Line:DiagX_Begin_Stage3}
      \State{Apply gate H$(j)$}
\EndFor\label{Line:DiagX_End_Stage3}
\For{$i \in \{1,\ldots,k\}$ and $j\in\{k,\ldots,n\}$ such that
  $X[i,j]=1$}\label{Line:DiagX_Begin_Stage4}
    \State{Apply gate CNOT$(i,j)$}
\EndFor\label{Line:DiagX_End_Stage4}
\end{algorithmic}
\end{algorithm}

\subsection{Updating Z and clearing X}

After diagonalizing the X block, we need to update the Z block, such
that all nonzero columns in X are matched with a zero or identical
column in Z. Application of combinations of Hadamard and phase gates
then allows us to zero out X and obtain the circuit for simultaneous
diagonalization. In this section we consider three algorithms to
achieve this.

\subsubsection{Pairwise elimination}\label{Sec:PairwiseElimination}

Application of the controlled-Z operation on qubits $a$ and $b$ is
equivalent to successively applying H$(b)$, \cnot$(a,b)$, and
H$(b)$. The overall effect, as illustrated in
Figure~\ref{Fig:GateOperations}, is the sweeping of columns $a$
and $b$ in Z with respectively columns $b$ and $a$ of X. This
operation can therefore simultaneously eliminate $Z[a,b]$ and
$Z[b,a]$ whenever both elements are one. The following result shows
that and off-diagonal one is matched by the reflected element:

\begin{theorem}\label{Thm:SymmetricZ}
Given a tableau $T$ corresponding to a set of commuting Paulis of rank
$k$, and apply the diagonalization procedure. Then the top-left $k$-by-$k$
sub-block of the resulting $Z$ is symmetric.
\end{theorem}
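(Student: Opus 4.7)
The proof plan rests on the symplectic characterization of commutation: two $n$-Pauli operators represented by rows $(X_i, Z_i)$ and $(X_j, Z_j)$ commute if and only if their symplectic inner product vanishes, i.e.\
\[
X_i \cdot Z_j \oplus Z_i \cdot X_j = 0 \pmod 2.
\]
I would first observe that every operation used in Algorithm~\ref{Alg:DiagonalizeX}, whether a row swap, a row sweep (multiplication of one Pauli by another), a column swap, or a Clifford gate, preserves pairwise commutativity of the rows of the tableau. Row swaps and column swaps are obvious; row sweeps amount to replacing $P_i$ by $P_iP_k$, and since $[P_i,P_k]=[P_j,P_k]=0$ we have $[P_iP_k,P_j]=0$ as well; Clifford conjugation preserves the symplectic form by definition.

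The next step is to read off the shape of the tableau at the output of the procedure, using the analysis from Section~\ref{Sec:DiagonalizeX} (culminating in the zero pattern of Figure~\ref{Fig:DiagonalizeX}(e)/(f)). For the top-left $k\times k$ sub-block, where $k$ equals the rank $r$, the X block has $X[i,j]=\delta_{ij}$ for $i,j\in[k]$, and moreover (as shown in the discussion leading up to Theorem~5 of the preceding subsection) $X[i,j]=0$ for $i\in[k]$ and $j>k$. In other words, for $i\in[k]$ the X row is exactly the standard basis vector $e_i$.

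With this in hand, the symmetry claim is essentially immediate. Fix any $i,j\in[k]$ and apply the commutation relation above to rows $i$ and $j$ of the transformed tableau:
\[
e_i\cdot Z_j \;\oplus\; Z_i \cdot e_j \;=\; Z_j[i] \oplus Z_i[j] \;=\; 0 \pmod 2,
\]
which gives $Z_i[j]=Z_j[i]$ for all $i,j\in[k]$, i.e.\ the top-left $k\times k$ sub-block of $Z$ is symmetric.

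I do not expect any real obstacle here; the only subtlety is verifying invariance of commutativity under the row-sweep (non-Clifford) tableau operations, which is a one-line algebraic check, and correctly invoking the structural statements from Section~\ref{Sec:DiagonalizeX} so that the X rows are precisely standard basis vectors on the first $k$ coordinates. Once those two facts are in place, the symplectic identity does all the work.
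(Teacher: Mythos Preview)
Your proof is correct and follows essentially the same approach as the paper: both argue that the diagonalization operations preserve commutativity, then exploit the fact that the transformed $X$ rows are the standard basis vectors $e_i$ to force $Z[i,j]=Z[j,i]$ from the commutation condition. The only difference is cosmetic: you phrase the commutation condition as the vanishing of the symplectic inner product $X_iZ_j^{\mathsf T}\oplus Z_iX_j^{\mathsf T}$, whereas the paper argues term-by-term on the Pauli strings (noting that $P_i[i]\in\{\Pauli{X},\Pauli{Y}\}$ anticommutes with $P_j[i]\in\{\Pauli{I},\Pauli{Z}\}$ precisely when $Z[j,i]=1$), but these are two phrasings of the same identity.
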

\begin{proof}
  Consider any pair of distinct indices $i,j \in [k]$, and denote the
  string representation of the corresponding Pauli operators of the
  updated tableau $T$ by $P_i$ and $P_j$. The operations performed
  during diagonalization preserve commutativity, and $P_i$ and $P_j$
  therefore commute. For commutativity, we can focus on the symbols at
  locations $i$ and $j$; all others are either $\sigma_i$ or
  $\sigma_z$. It can be verified that symbols $P_i[i]$ and $P_j[i]$ commute iff
  $Z[j,i] = 0$. Likewise, symbols $P_i[j]$ and $P_j[j]$
  commute iff $Z[i,j] = 0$. It follows that in order for the Pauli
  operators to commute, we must have $Z[i,j] = Z[j,i]$. The result
  follows by the fact that indices $i$ and $j$ were arbitrary.
\end{proof}

\noindent
With this, the algorithm for updating the Z block simply reduces to
eliminating the lower-triangular entries in Z (the corresponding
upper-triangular entries will be eliminated simultaneously). This
process is summarized in
lines~\ref{Line:PairwiseZ_Begin_Stage1}--\ref{Line:PairwiseZ_End_Stage1}
of Algorithm~\ref{Alg:PairwiseUpdateZ}. After this first step we are
ready to clear the X block using single-qubit gates, by considering
the values of the diagonal entries in Z. This is done in
lines~\ref{Line:PairwiseZ_Begin_Stage2}--\ref{Line:PairwiseZ_End_Stage2}
of the algorithm. One notable benefit of the algorithm is that the
elimination process only affects the targeted entries, which means
that there is no fill-in. Together with the diagonalization of X in
Section~\ref{Sec:DiagonalizeX}, we obtain a classical complexity of
$\mathcal{O}(n^2\max(m,n))$, along with the following result:

\begin{theorem}
  Given a tableau for commuting $n$-Paulis
  with rank $n$. We can diagonalize the operators using
  H-CZ-S-H stages with $\mathcal{O}(n^2)$ CZ gates.
\end{theorem}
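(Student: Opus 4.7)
The plan is to match the H-\cz-S-H template by carrying out, in order: Hadamard-only diagonalization of $X$, pairwise \cz{} elimination of the off-diagonal entries of $Z$, a single layer of phase gates to clear the diagonal of $Z$, and a final layer of Hadamards to transport $I$ from $X$ to $Z$. The correctness of each stage reduces to tracking the Clifford update rules from Figure~\ref{Fig:GateOperations}, so the real work is in exhibiting the stages and counting the \cz{} gates in the middle one.

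For stage one, I invoke the preceding theorem: because the tableau has rank $n$, its $X$ block can be reduced to the identity using only Hadamard gates (together with row and column permutations, which produce no gates), so afterwards $X = I$ and $Z$ is some binary matrix. By Theorem~\ref{Thm:SymmetricZ}, this $Z$ is symmetric. For stage two, observe that with $X = I$ the \cz{} update rule $z_a \leftarrow z_a \oplus x_b,\ z_b \leftarrow z_b \oplus x_a$ (with $X$ unchanged) collapses to something very clean: a single $\cz(i,j)$ flips exactly $Z[i,j]$ and $Z[j,i]$ and leaves every other tableau entry fixed. Iterating over pairs $(i,j)$ with $i < j$ and $Z[i,j] = 1$ therefore zeros the off-diagonal of $Z$ with no fill-in and no disturbance to $X$, and uses at most $\binom{n}{2} = \mathcal{O}(n^2)$ \cz{} gates.

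For stage three I apply $S(i)$ for each $i$ with $Z[i,i] = 1$; the update $z_i \leftarrow z_i \oplus x_i$ combined with $x_i = 1$ zeros the diagonal entry and leaves $X$ unchanged, so after this stage $X = I$ and $Z = 0$. For stage four I apply $H(i)$ on every qubit, swapping $x_i$ with $z_i$ and producing $X = 0$ and $Z = I$. Each resulting row of the tableau is then a single $\sigma_z$ at its distinct qubit (with any induced signs absorbed into the rotation angles during exponentiation), which is precisely the desired simultaneous diagonalization. The only step that is not routine bookkeeping is stage two, and that step hinges entirely on the symmetry supplied by Theorem~\ref{Thm:SymmetricZ}; once that is in hand, the four-stage structure and the $\mathcal{O}(n^2)$ \cz{} count both follow directly.
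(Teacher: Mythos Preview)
Your proof is correct and follows essentially the same route as the paper: invoke the Hadamard-only diagonalization of $X$ in the full-rank case, use the symmetry of $Z$ from Theorem~\ref{Thm:SymmetricZ} to justify that each $\cz(i,j)$ eliminates the pair $Z[i,j],Z[j,i]$ without fill-in, then clean the diagonal with phase gates and swap $X$ and $Z$ with a final Hadamard layer. One small slip: the gate-free operations in stage one include row \emph{sweeps} (Pauli multiplications), not just row and column permutations, but this does not affect the argument since sweeps also produce no circuit gates.
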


\noindent Since the application of the CZ gates do not affect the
diagonal entries in the Z block, it is possible to apply the phase
gates first and obtain an {\it H-S-CZ-H} scheme. Note that it is
always possible to obtain a full-rank tableau by adding commuting
Paulis that were not in the original span. The resulting
diagonalization then has the stages as given above, and clearly
applies to the original set of Paulis as well. Doing so may however
come at the cost of an increased circuit complexity.

\begin{algorithm}
\caption{Pairwise update of Z, clear X.}\label{Alg:PairwiseUpdateZ}
\hspace*{0pt}{\vspace*{-3pt}}\\
\hspace*{0pt}{{\bf Input}: Tableau $T$ with diagonal $X$ of rank $k$.}\\
\hspace*{0pt}{{\bf Output}: Updated tableau with X block entries set to zero}\\
\hspace*{0pt}{{\bf Complexity}: $\mathcal{O}(k^3)$}
\begin{algorithmic}[1]
\For{$i \in \{2,\ldots,k\}$}\label{Line:PairwiseZ_Begin_Stage1}
   \For{$j \in \{1,\ldots,i-1\}$}
      \State{Apply CZ$(i,j)$ if $Z[i,j] = 1$}
   \EndFor
\EndFor\label{Line:PairwiseZ_End_Stage1}
\For{$i \in\{1,\ldots,k\}$}\label{Line:PairwiseZ_Begin_Stage2}
   \State{Apply S$(i)$ if $Z[i,i] = 1$}
   \State{Apply H$(i)$}
\EndFor\label{Line:PairwiseZ_End_Stage2}
\end{algorithmic}
\end{algorithm}

\subsubsection{Elimination using \cnot\ operations}\label{Sec:CNotDiagonalization}

\begin{figure}[b!]
\hspace*{-11pt}
\begin{tabular}{ccc}
\includegraphics[width=0.32\textwidth]{./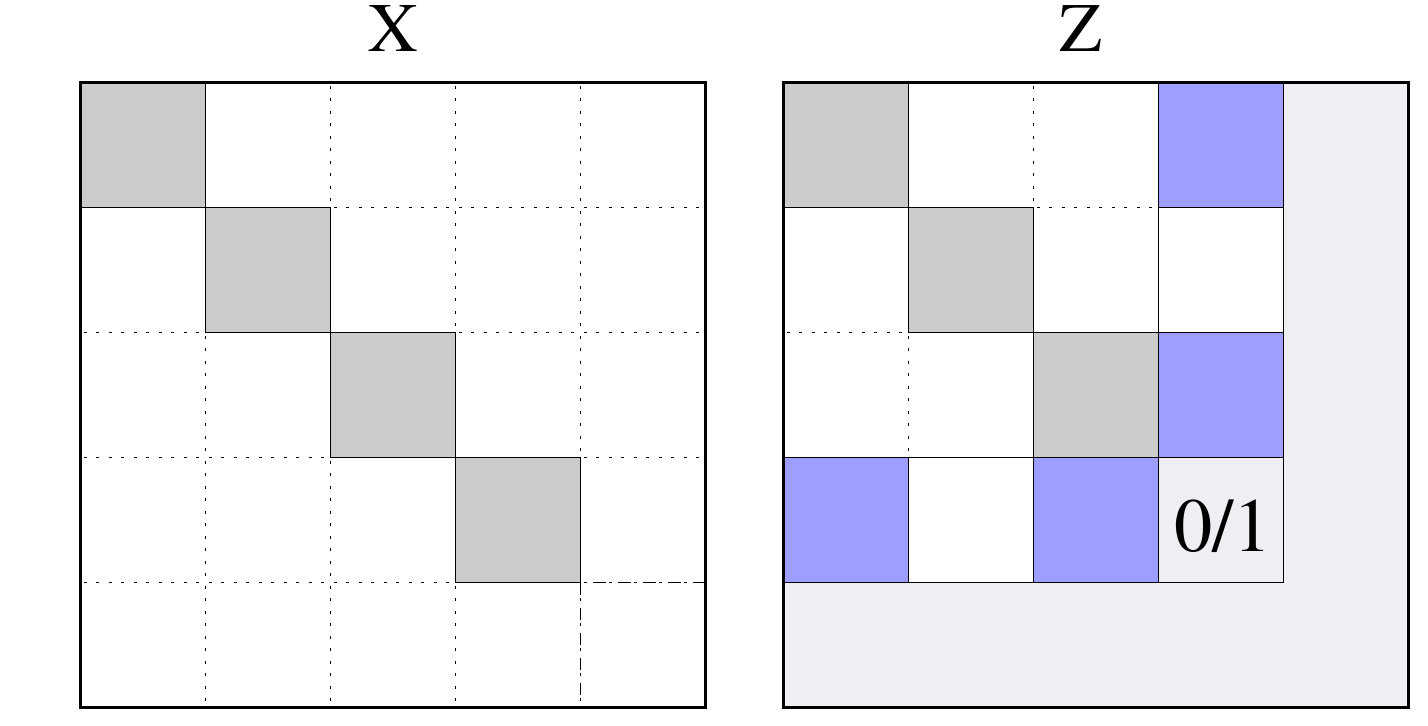}&
\includegraphics[width=0.32\textwidth]{./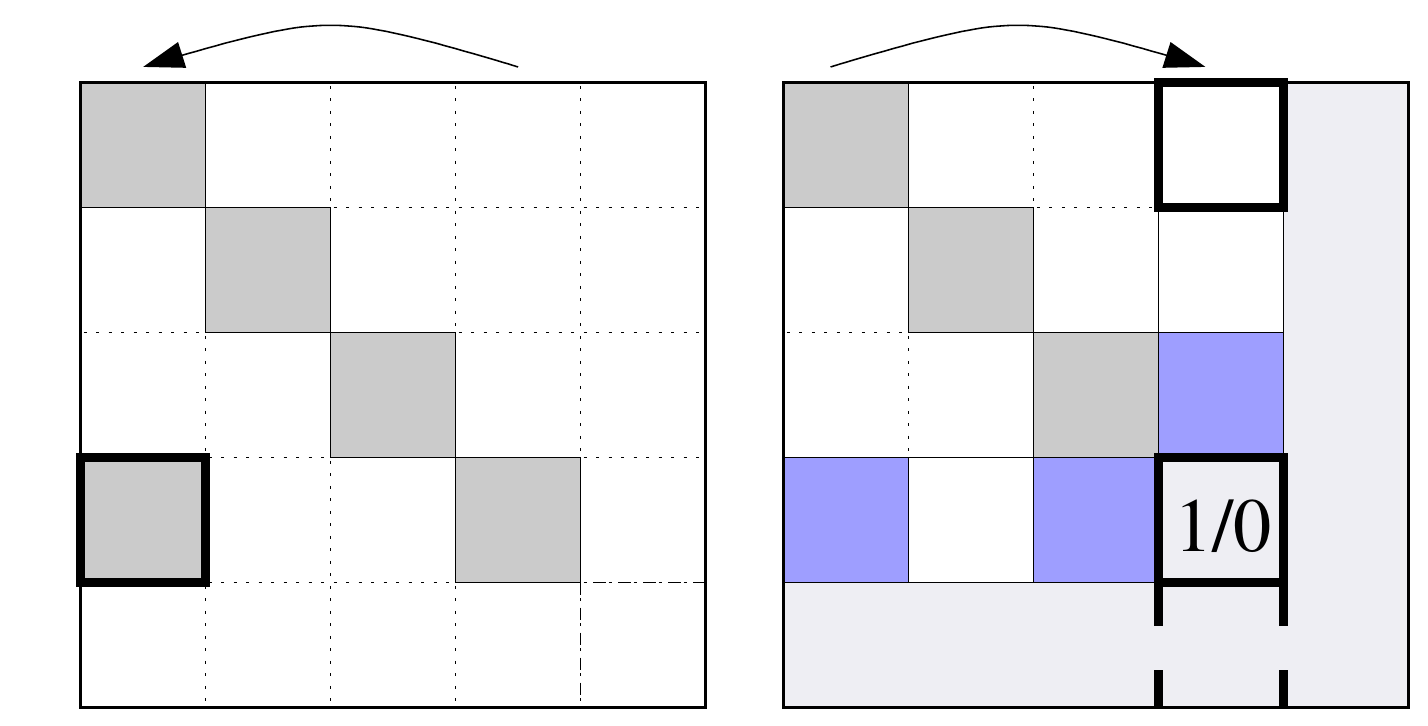}&
\includegraphics[width=0.32\textwidth]{./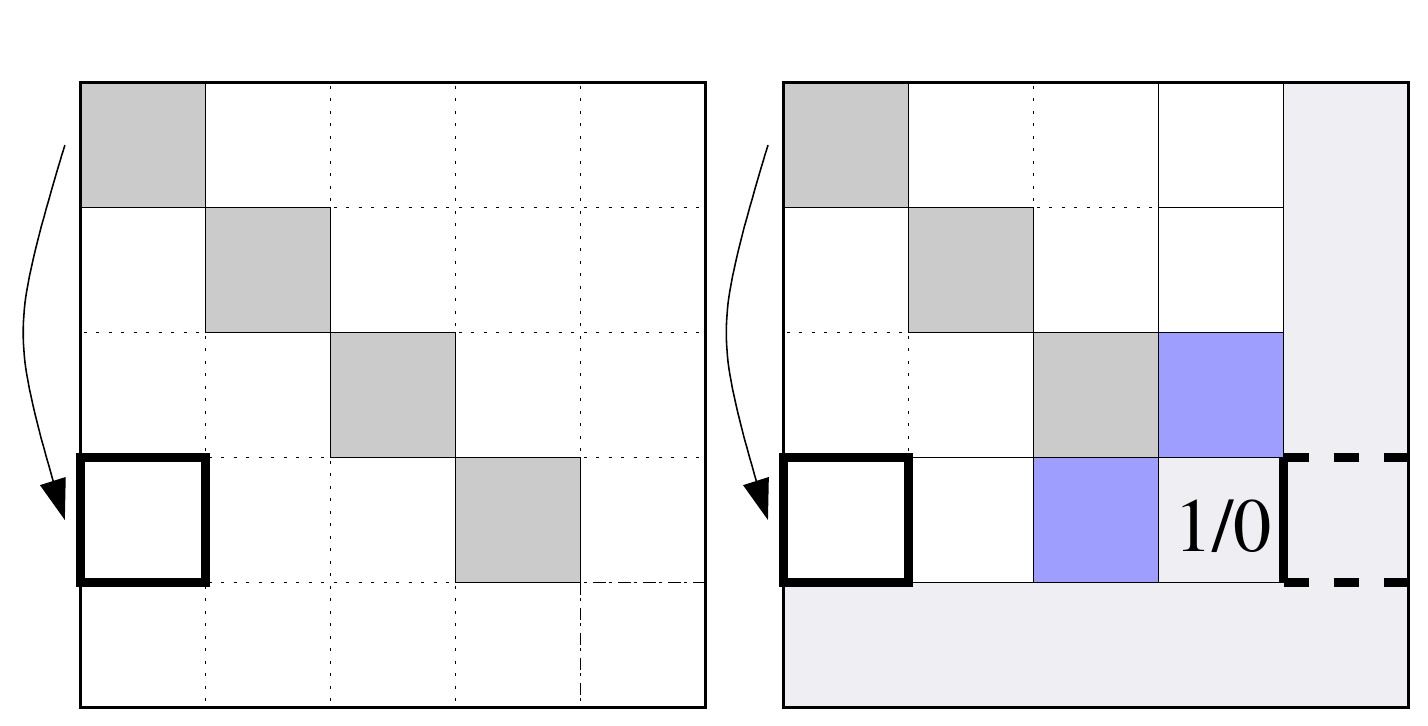}\\
({\bf{a}}) Initial tableau ($i=4$)&
({\bf{b}}) \cnot$(4,1)$ &
({\bf{c}}) sweep$(4,1)$\\[6pt]
\includegraphics[width=0.32\textwidth]{./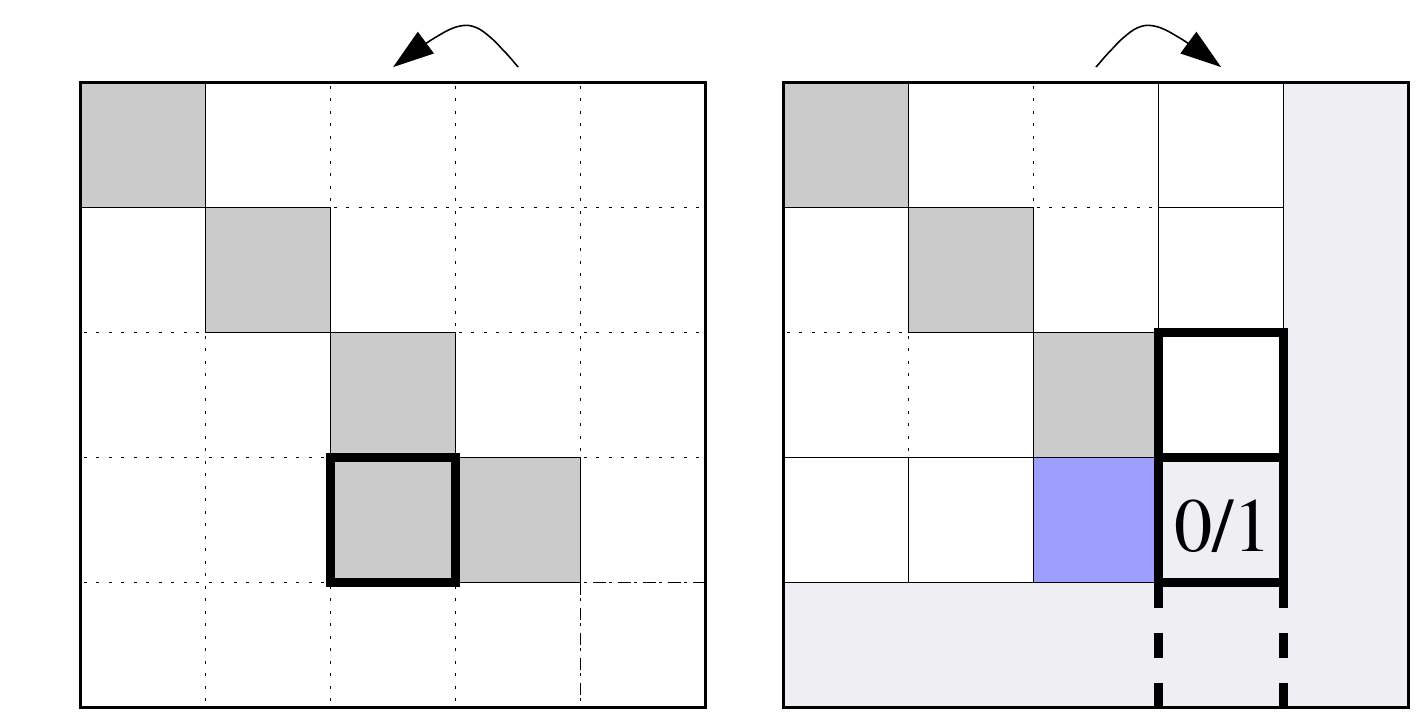}&
\includegraphics[width=0.32\textwidth]{./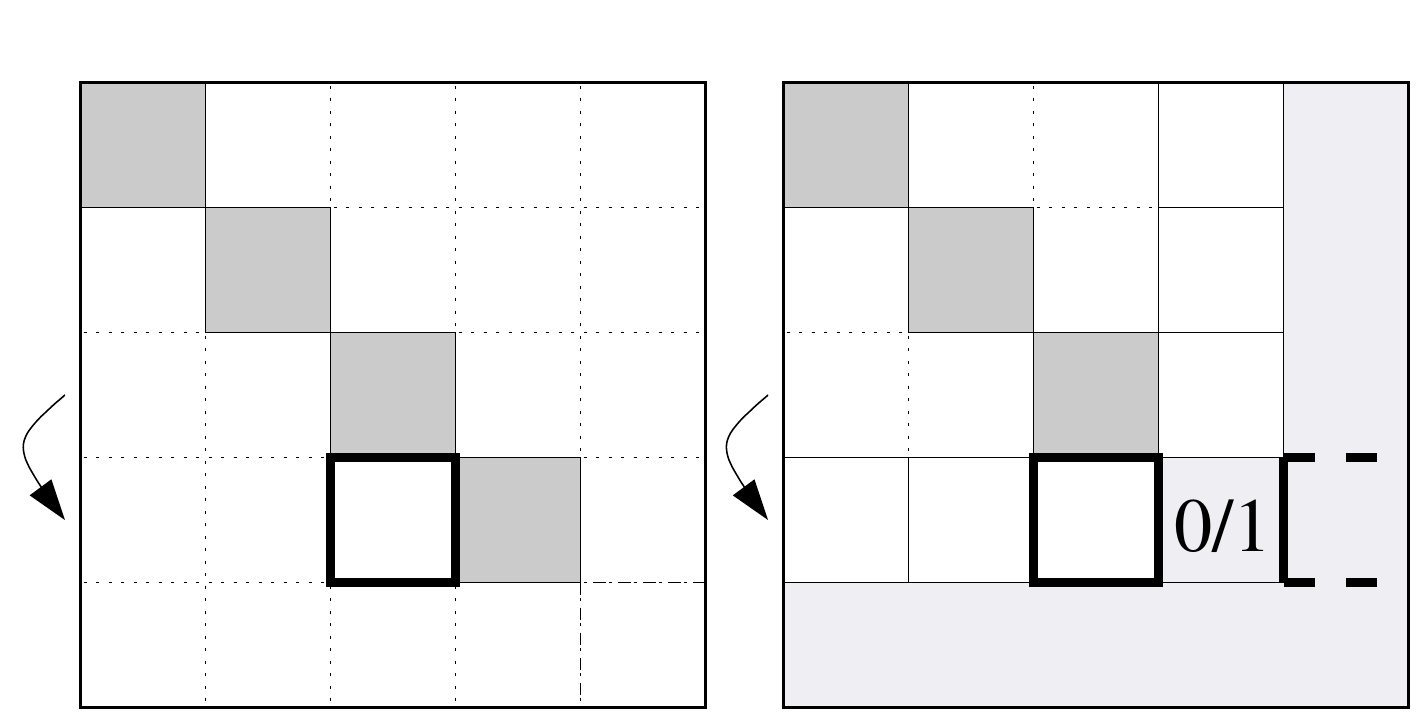}&
\includegraphics[width=0.32\textwidth]{./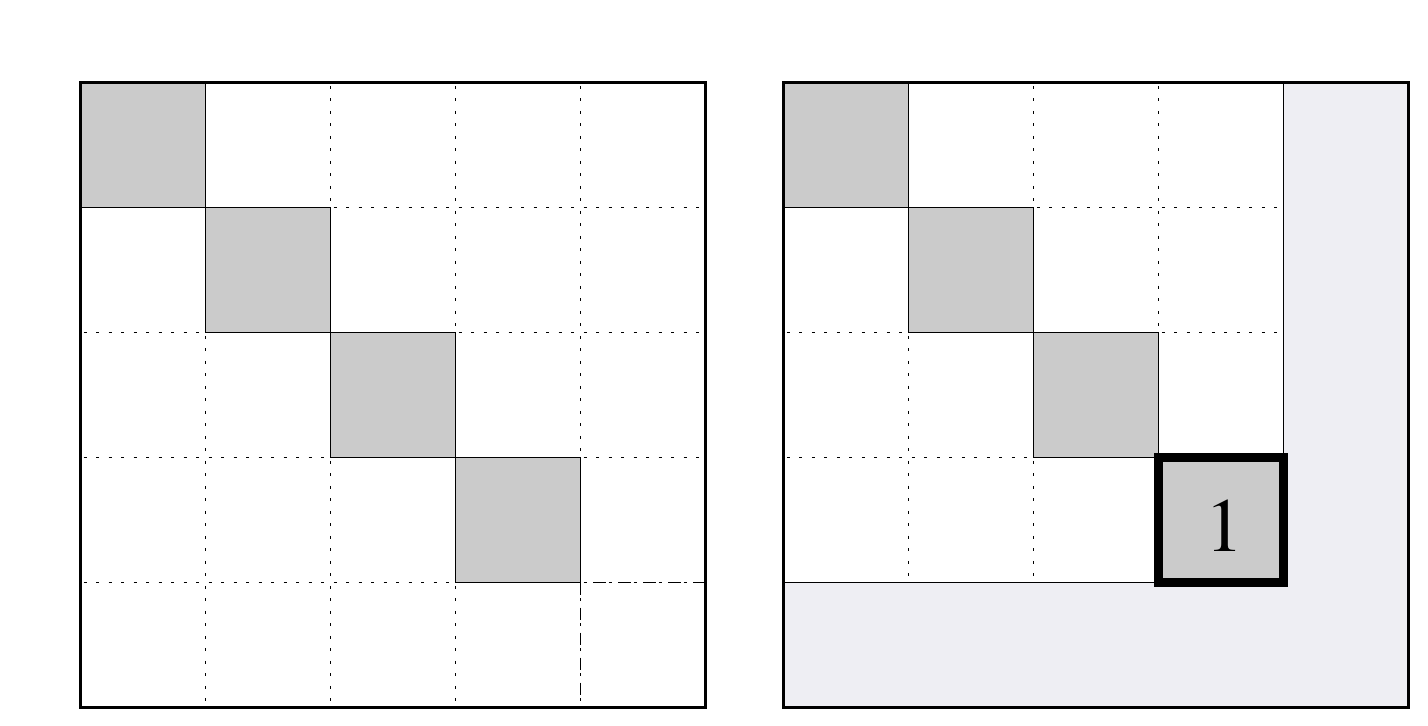}\\
({\bf{d}}) \cnot$(4,3)$ &
({\bf{e}}) sweep$(4,3)$ &
({\bf{f}}) final tableau  ($i=4$)\\
\end{tabular}
\caption{Principle behind the \cnot-based update of the Z block. The
  entries updated by each step are indicated by black boxes.}\label{Fig:CnotUpdateZ}
\end{figure}

Alternative way of updating Z that is based on \cnot\ operations is
given by Algorithm~\ref{Alg:CnotUpdateZ}. The main for-loop in
lines~\ref{Line:CnotZ_Begin_Stage1}--\ref{Line:CnotZ_End_Stage1}
iteratively ensures that the top-left $i\times i$ block of Z has ones
on the diagonal and zeroes elsewhere. The update process for a given
$i$ is illustrated in Figure~\ref{Fig:CnotUpdateZ}. At the beginning
of iteration $i$, the $(i-1)\times(i-1)$ block of Z is diagonal, and
to obtain the desired state at the end of the iteration we therefore
need to eliminate any nonzeros occurring in the first $i-1$ entries in
the $i$-th row and column of Z, and ensure that $Z[i,i]=1$. As an
example, consider the tableau in Figure~\ref{Fig:CnotUpdateZ}(a) at
the beginning of iteration $i$. During the iteration we will need to
eliminate entries $Z[4,1]$, $Z[4,3]$, and their reflections $Z[1,4]$
and $Z[3,4]$. For now we assume that that the entry $Z[i,i]$ is 0 or 1
respectively. To eliminate entry $Z[1,4]$ we first apply a
\cnot$(4,1)$ gate. In addition it also flips the value in $Z[i,i]$ to
1 or 0 respectively, and fills in element $X[4,1]$, as shown in
Figure~\ref{Fig:CnotUpdateZ}(b). Aside from this there are some
further updates to the entries of column $i$ with indices exceeding
$i$; these are irrelevant to the current iteration and will be dealt
with in later iterations. Next, we eliminate the undesirable fill of
element $X[4,1]$ by sweeping row 4 with row 1, which also clears up
element $Z[4,1]$. Note that this is no coincidence: since the X block
is diagonal again, if follows from Theorem~\ref{Thm:SymmetricZ} that
corresponding block in Z must be symmetric. We again ignore the
additional updates beyond the block boundaries. This leaves us at the
state shown in Figure~\ref{Fig:CnotUpdateZ}(c). As the next step we
eliminate entries $Z[3,4]$ and $Z[4,3]$ by applying \cnot$(4,3)$,
followed by a sweep of row 4 with row 4, as shown in
Figures~\ref{Fig:CnotUpdateZ}(d)
and~\ref{Fig:CnotUpdateZ}(e). Applying of the \cnot\ operation again
caused the value of $Z[i,i]$ to flip to 0 or 1 respectively. As a
final step, we now need to ensure that the $Z[i,i]$ entry is one. For
this we could check the latest value, and apply S$(i)$ whenever the
value is zero. Instead, we prefer to set the value appropriately at
the beginning, and ensure that at the end of all value flips it ends
at the one value. For this we can simply consider the value of
$Z[i,i]$ at the beginning and add the number of entries that need to
be eliminated and thus incur a flip. If this result value is even we
need to to change the initial value of $Z[i,i]$ by applying
S$(i)$. This is done in
lines~\ref{Line:CnotZ_Begin_Parity}--\ref{Line:CnotZ_End_Parity} of
Algorithm~\ref{Alg:CnotUpdateZ}. Once completed, the first $k$ columns
in Z exactly match those of X. We can therefore clear the X block by
applying phase and Hadamard operations on the first $k$ qubits, which
is done in
lines~\ref{Line:CnotZ_Begin_Stage2}--\ref{Line:CnotZ_End_Stage2}. Combined
with the diagonalization of X from Section~\ref{Sec:DiagonalizeX}, we
have the following result:

\begin{theorem}
  Given a tableau for commuting $n$-Paulis
  with rank $n$. We can diagonalize the operators using
  H-S-CX-S-H stages with $\mathcal{O}(n^2)$ CX gates.
\end{theorem}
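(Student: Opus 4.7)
The plan is to glue the X-diagonalization routine from Section~\ref{Sec:DiagonalizeX} together with Algorithm~\ref{Alg:CnotUpdateZ}, then verify that the resulting gate sequence partitions into five groups of the required types and finally bound the number of \cnot\ gates.

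First, I would invoke the preceding rank-$n$ theorem from Section~\ref{Sec:DiagonalizeX}: because the tableau has rank $n$, its X block can be brought to the $n\times n$ identity using Hadamard gates alone, producing the opening H stage. Theorem~\ref{Thm:SymmetricZ} then applies and tells us that the Z block is symmetric. Next, I would observe that the parity-based initial adjustments of Algorithm~\ref{Alg:CnotUpdateZ} consist only of single-qubit S gates on disjoint qubits determined from the initial tableau, hence form a single S stage that sits between the H and \cnot\ blocks.

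The main step is to analyze the \cnot\ loop and establish two invariants: at the end of iteration $i$, (i) the top-left $i\times i$ sub-block of Z equals the identity, and (ii) X is still diagonal, so Theorem~\ref{Thm:SymmetricZ} continues to apply. Invariant (ii) is what allows the combined operation $\cnot(i,j)$ followed by the row sweep of row $i$ with row $j$ to simultaneously eliminate $Z[i,j]$ and its reflection $Z[j,i]$, while also cancelling the fill-in in $X[i,j]$ caused by the \cnot. Because row sweeps generate no circuit gates, the whole loop contributes a single CX stage. Counting is then immediate: iteration $i$ uses at most $i-1$ \cnot\ gates, so the total number is $\sum_{i=1}^{n}(i-1) = n(n-1)/2 \in \mathcal{O}(n^2)$. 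After the loop, the parity bookkeeping guarantees $Z[i,i]=1$ for all $i$, and combined with the invariants this means every qubit carries a Pauli Y. A final S$(i)$ on each qubit then flips $Z[i,i]$ back to $0$, turning Y into X, and a final H$(i)$ swaps X and Z to produce $\sigma_z$. These form the closing S and H stages, completing the H-S-CX-S-H decomposition.

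The main obstacle I expect is the invariant bookkeeping for the \cnot\ loop, because each \cnot$(i,j)$ not only acts on the targeted entries but also induces updates to columns of $Z$ below row $i$ and to $X[i,j]$; one must check that the paired sweep precisely cancels what it needs to cancel in the targeted block while leaving the later iterations in a consistent state. Once this verification is done, the stage structure and the $\mathcal{O}(n^2)$ \cnot\ count drop out directly.
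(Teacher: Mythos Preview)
Your plan is correct and follows the paper's approach exactly: glue the Hadamard-only X diagonalization (the preceding theorem) onto Algorithm~\ref{Alg:CnotUpdateZ}, verify the loop invariants via Theorem~\ref{Thm:SymmetricZ}, and read off the stage structure together with the $\sum_{i=1}^{n}(i-1)=n(n-1)/2$ bound on \cnot\ gates. One small correction to your justification for the first S stage: the parity $\sum_{j=1}^{i} Z[i,j]$ tested at the start of iteration~$i$ is \emph{not} in general determined by the initial tableau, because each earlier \cnot$(i',j')$ adds column~$j'$ of~$Z$ to column~$i'$ and thereby modifies $Z[i,i']$ in row~$i$; the reason the conditional $S(i)$ gates nevertheless collapse into a single stage preceding the CX block is simply that no gate in iterations $1,\ldots,i-1$ touches qubit~$i$, so $S(i)$ commutes past all of them.
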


\noindent This result can be further improved using~\cite{PAT2008MHa}, which
shows that \cnot\ circuits consisting of $\mathcal{O}(n^2)$ gates can
be reduced to $\mathcal{O}(n^2/\log(n))$ gates. The overall classical
complexity of this diagonalization procedure is $\mathcal{O}(mn\min(m,n))$.

\begin{algorithm}[t!]
\caption{Update of Z using \cnot\ operations, clear X.}\label{Alg:CnotUpdateZ}
\hspace*{0pt}{\vspace*{-3pt}}\\
\hspace*{0pt}{{\bf Input}: Tableau $T$ with diagonal $X$ of rank $k$.}\\
\hspace*{0pt}{{\bf Output}: Updated tableau with X block entries set to zero}\\
\hspace*{0pt}{{\bf Complexity}: $\mathcal{O}(k^2n)$}
\begin{algorithmic}[1]
\For{$i \in \{1,\ldots,k\}$}\label{Line:CnotZ_Begin_Stage1}
   \If{$\left(\sum_{j=1}^{i}Z[i,j]\ \mbox{is even}\right)$}\label{Line:CnotZ_Begin_Parity}
      \State{Apply S$(i)$}
   \EndIf\label{Line:CnotZ_End_Parity}
   \For{$j \in \{1,\ldots,i-1\}$}
      \If{$Z[i,j] = 1$}
         \State{Apply \cnot$(i,j)$}
         \State{Sweep row $i$ with row $j$}
      \EndIf
   \EndFor
\EndFor\label{Line:CnotZ_End_Stage1}
\For{$i \in \{1,\ldots,k\}$}\label{Line:CnotZ_Begin_Stage2}
   \State{Apply S$(i)$, H$(i)$}
\EndFor\label{Line:CnotZ_End_Stage2}
\end{algorithmic}
\end{algorithm}

\subsubsection{Column-based elimination}\label{Sec:ColumnwiseDiagonalization}

In the two methods described so far, each iteration of the algorithm
for updating the Z block zeroes out exactly two elements. In many
cases we can do much better and clear multiple entries at once. To see
how, consider the situation where the X block is diagonal and the
initial Z block is as shown in Figure~\ref{Fig:ColumnZ}(a). The second
and third column are nearly identical, and sweeping one with the other
using a \cnot\ operation would leave only a single non-zero entry in
the updated column in the location where the two differed. This
suggests the following approach. Given a set of columns that is yet to
be swept, $\mathcal{I}$, we first determine the column
$i\in\mathcal{I}$ that has the minimal number of non-zero off-diagonal
elements; that is, the number of \cnot\ gates needed to clear them. We
then consider the Hamming distance between all pairs of columns
$i,j\in\mathcal{I}$, excluding rows $i$ and $j$. The reason for
excluding these entries is that the X block is diagonal, and we can
therefore easily update the diagonal entries in the Z block to the
desired value using Hadamard or phase gates. The total number of
\cnot\ operations to clear column $i$ with column $j$ is then equal to
their off-diagonal distance plus one for the column sweep itself. That
is, after sweeping the columns we still need to take care of the
remaining entries in the column using elementwise elimination. There
are many possible ways to combine these steps, but one approach is to
greedily determine the lowest number of \cnot\ operations needed to
clear any of the remaining columns in $\mathcal{I}$, an approach we
refer to as greedy-1. Once the column has been cleared aside from the
diagonal entry we can zero out the corresponding column in the X block
and remove the entry from $\mathcal{I}$.

\begin{figure}
\centering
\begin{tabular}{ccccc}
\multicolumn{1}{r}{\includegraphics[height=90pt]{./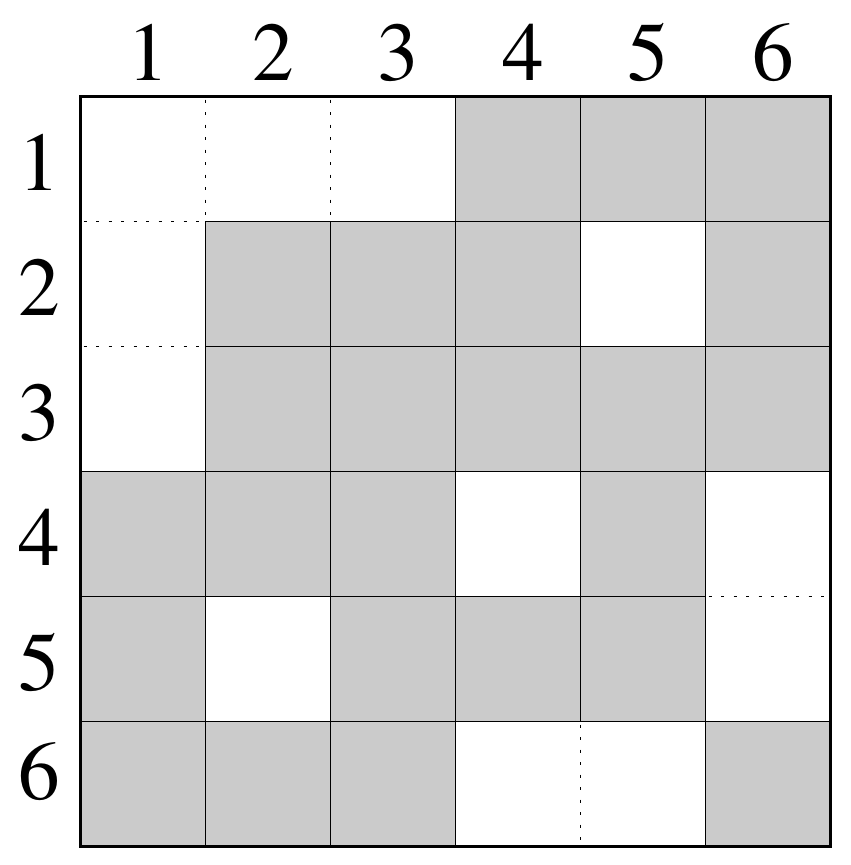}} &$\quad$&
\multicolumn{1}{r}{\includegraphics[height=90pt]{./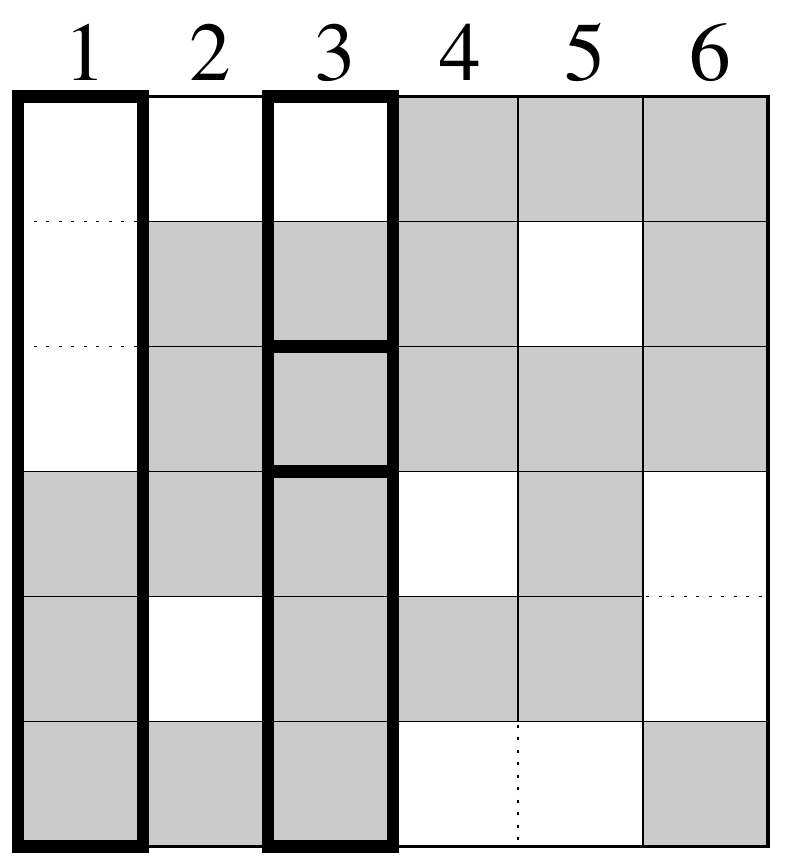}} &$\quad$&
\multicolumn{1}{r}{\includegraphics[height=90pt]{./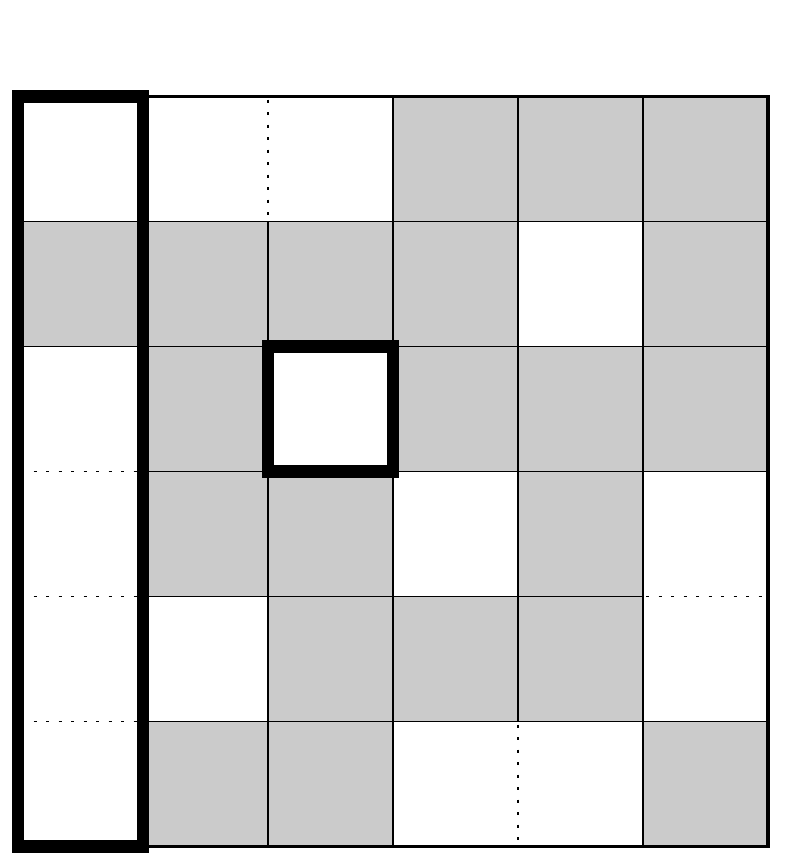}}\\[-1pt]
\hspace*{8pt}({\bf{a}}) && ({\bf{b}}) && ({\bf{c}}) \\[-3pt]
\multicolumn{1}{r}{\includegraphics[height=90pt]{./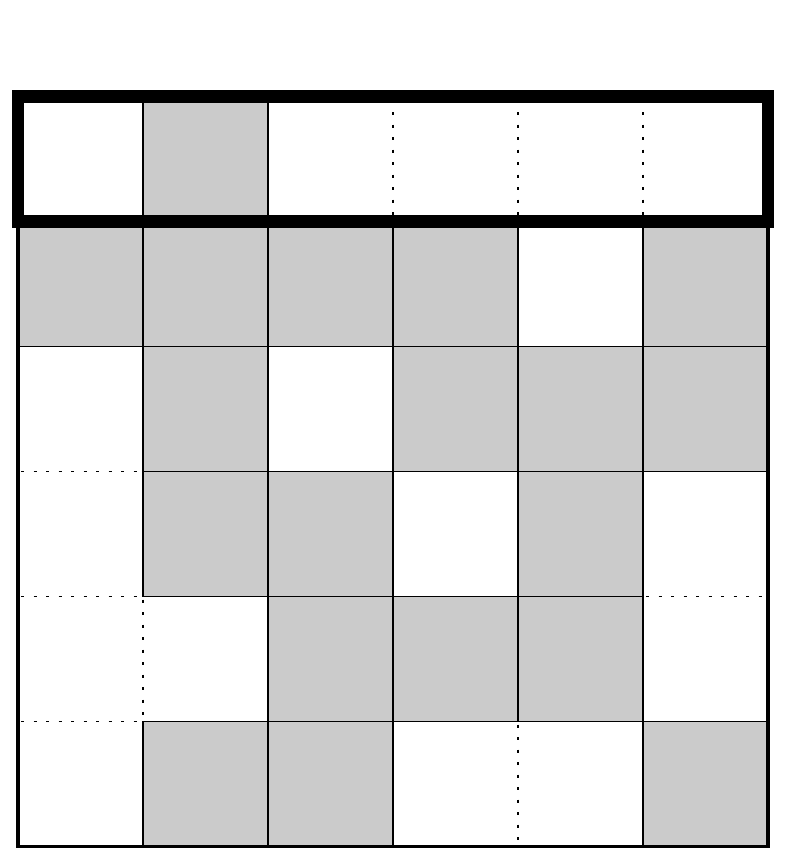}} &&
\multicolumn{1}{r}{\includegraphics[height=90pt]{./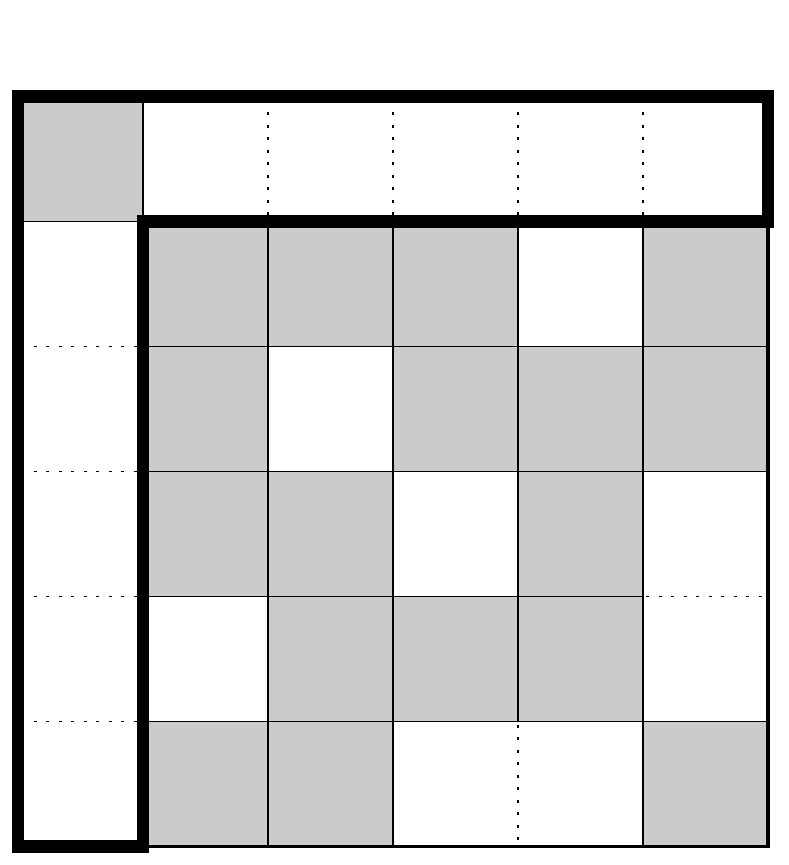}} &&
\multicolumn{1}{r}{\includegraphics[height=90pt]{./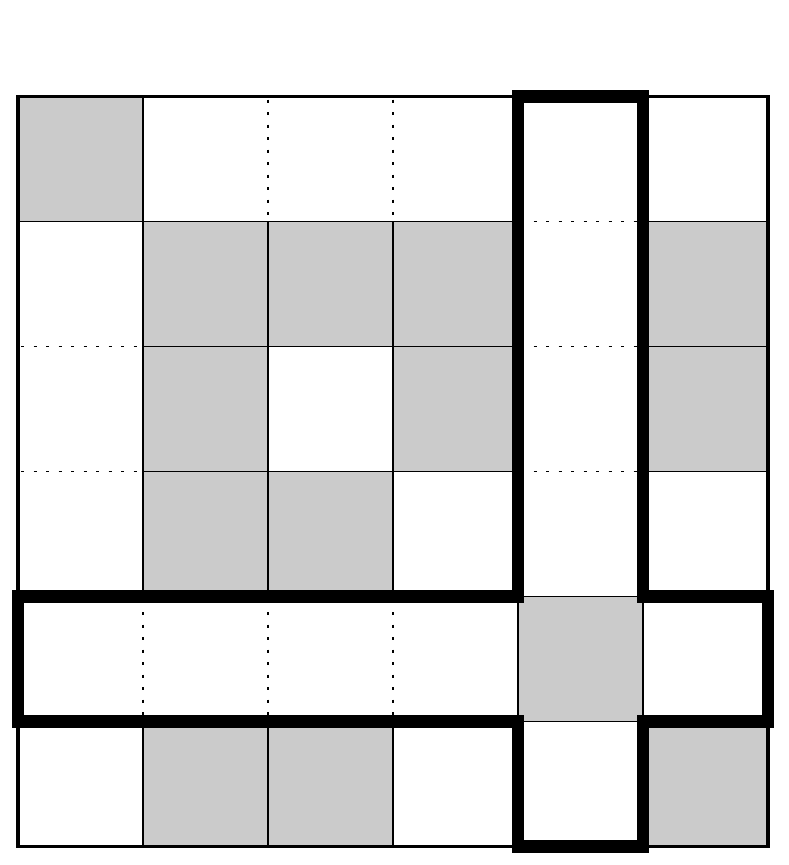}} \\[-1pt]
\hspace*{8pt}
({\bf{d}}) && ({\bf{e}}) && ({\bf{f}})
\end{tabular}
\caption{Normalization of the Z block using column sweeps and
  elementwise elimination; (a) initial situation, (b)--(e) steps in
  the first iteration to normalize column 1 by sweeping with column 3,
and (f) second iteration of normalizing column 5 with elementwise
elimination only.}\label{Fig:ColumnZ}
\end{figure}

As an example we apply this method to the example in
Figure~\ref{Fig:ColumnZ}(a). Starting with
$\mathcal{I}=\{1,2,3,4,5,6\}$ we first determine the number of
off-diagonal elements to sweep in each single column, which turns out
to be three. For elimination using pairs of columns, we see that the
distance between columns 1 and 3 is one, provided we update the
diagonal entry in column 3. Columns 2 and 3 also have an off-diagonal
distance of two, as do columns 4 and 5. At each iteration we choose
the first minimum we encounter, in this case columns 1 and 3, as
highlighted in Figure~\ref{Fig:ColumnZ}(b). To clear column 1 we first
update the diagonal entry in 3 by applying a phase gate. Next, we
apply a \cnot\ operation that sweeps column 1 with the updated column
3, to arrive at the Z block shown in Figure~\ref{Fig:ColumnZ}(c).  As
seen in Figure~\ref{Fig:CnotUpdateZ}, the \cnot\ operation causes
fill-in of the X block, which we can eliminate by sweeping row 1 with
row 3. Doing so restores diagonality of the X block, and symmetry of
the Z block. The result of this operation can be seen in
Figure~\ref{Fig:ColumnZ}(d) . What remains is to pairwise eliminate
the remaining entries in column 1, and by symmetry of row 1, and clear
column 1 of the X block. This finalizes the clearance of column 1, so
we can remove it from the active set $\mathcal{I}$, and leaves us with
the tableau shown in Figure~\ref{Fig:ColumnZ}(e). Starting with a new
iteration, we again count the number of off-diagonal entries to sweep
per column. The minimum of two occurs in column 5. Pairwise sweeping
does not improve on this, and we therefore use the technique from
Section~\ref{Sec:PairwiseElimination} to clear these entries
directly. We then clear column 5 of the X block and remove the column
from $\mathcal{I}$. The algorithm continues in this fashion until
$\mathcal{I}$ is empty.

So far, we have only considered the number of \cnot\ operations. An
alternative approach, referred to in the experiments section as
greedy-2, takes into account the number of single-qubit gates when the
number of \cnot\ gates match. Recall that in the first iteration there
were several pairs of columns with a minimal off-diagonal distance of
one. The greedy-1 strategy chooses to clear column 1 with column 3,
which requires one phase gate to clear the diagonal entry of column 3,
a \cnot\ and \cz\ operation respectively for sweeping the column and
remaining off-diagonal entry, and finally a Hadamard operation to
clear column 1 of the X block. Alternatively choosing to clear column
2 with column 3 would require an initial \cnot\ for the column sweep,
a \cz\ for removing the remaining off-diagonal entry, and a Hadamard
operation to clear column 2 of X. The latter approach requires the
same number of \cnot\ operations, but requires one fewer single-qubit
gate. The greedy-2 method would therefore choose this option. For this
particular example, pairwise elimination requires ten \cnot\
operations, whereas the greedy approach require seven and six \cnot\
operations, respectively. For all three algorithms, the number of
single-qubit operations is six. The complexity of column-based
elimination of the Z block is $\mathcal{O}(k^4)$, where $k$ the rank
of the tableau. This assumes that at each stage of the algorithm we
recompute the distance between all pairs of remaining columns, and
more efficient implementations may be possible.

\subsection{Ordering of terms}\label{Sec:OrderingTerms}

Once the X block in the tableau has been cleared we can either undo
all row sweep and row swap operations, or reapply all Clifford
operators on the initial tableau, to obtain the diagonalized Pauli
terms corresponding to the given set of commuting Paulis.
Figure~\ref{Fig:ReorderZ}(a) shows the transpose of the resulting Z
block for a set of 20 Paulis over 7 qubits, represented as columns. In
the plot gray cells represents a Pauli \Pauli{Z} terms, while white
cells represent identity terms \Pauli{I}. For exponentiation we need
to add \cnot\ gates for each of the \Pauli{Z} terms. As illustrated in
Figure~\ref{Fig:Circuit3}, we can cancel \cnot\ operators between
successive \Pauli{Z} terms one the same qubit. The resulting number of
\cnot\ gates for each of the seven qubits is given on the right of
Figure~\ref{Fig:ReorderZ}(a), for a total of 72 \cnot\ gates. (For
ease of counting, imaging all-identity Paulis before the first and
after the last operator and count the number of transitions from white
to gray and vice versa.) In order to reduce the number of transitions
we can permute the order of the operators within the commuting
set. This is done in Figure~\ref{Fig:ReorderZ}(b), where we first sort
all operators in qubit one. We then recursively partition the
operators in the \Pauli{I} set, such that all \Pauli{I} operators
appear before \Pauli{Z} operators, and vice versa for the \Pauli{Z}
set. The resulting binary tree like structure in
Figure~\ref{Fig:ReorderZ}(b) reduces the total number of \cnot\ gates
needed to implement the circuit from the original 72 down to 58. The
order in which the qubits are traversed can make a big difference.
Figure~\ref{Fig:ReorderZ}(c) shows a histogram of the number of \cnot\
gates required for all possible permutations of traversal order,
ranging from 38 to 60 gates. The large range in gate count indicates
that there is still a lot of room for improvement for the ordering
strategy. As seen in Figure~\ref{Fig:ReorderZ}(b), qubits that appears
earlier in the ordering tend to require fewer \cnot\ gates. This can
be leveraged when optimizing the circuit for a particular quantum
processor where operators between non-neighboring qubits are
implemented using intermediate swap operations. In this case we can
reduce the number of \cnot\ operations between topologically distant
qubits by having them appear in the ordering earlier. Alternative
implementations where \cnot\ gates are connected to qubits of
successive \Pauli{Z} terms are possible, but will not be considered in
this paper. Ordering of operators in the Z block has a classical
complexity of $\mathcal{O}(mn)$.

\begin{figure}
\centering
\begin{tabular}{ccc}
\raisebox{11pt}{\includegraphics[width=0.34\textwidth]{./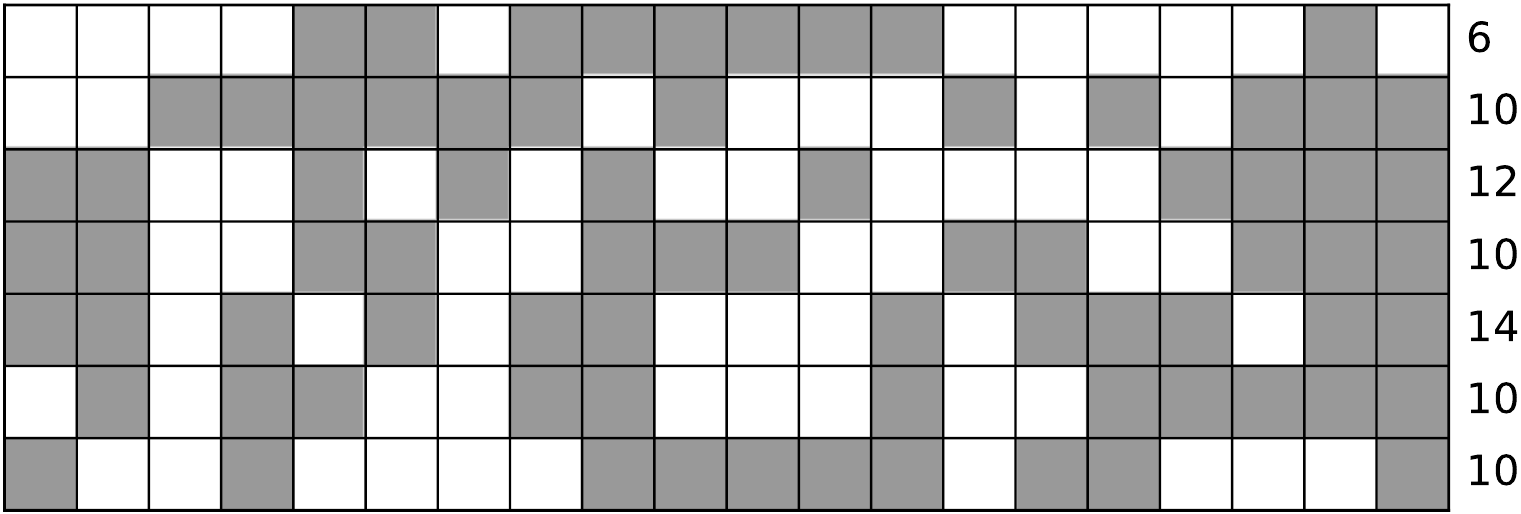}}&
\raisebox{11pt}{\includegraphics[width=0.34\textwidth]{./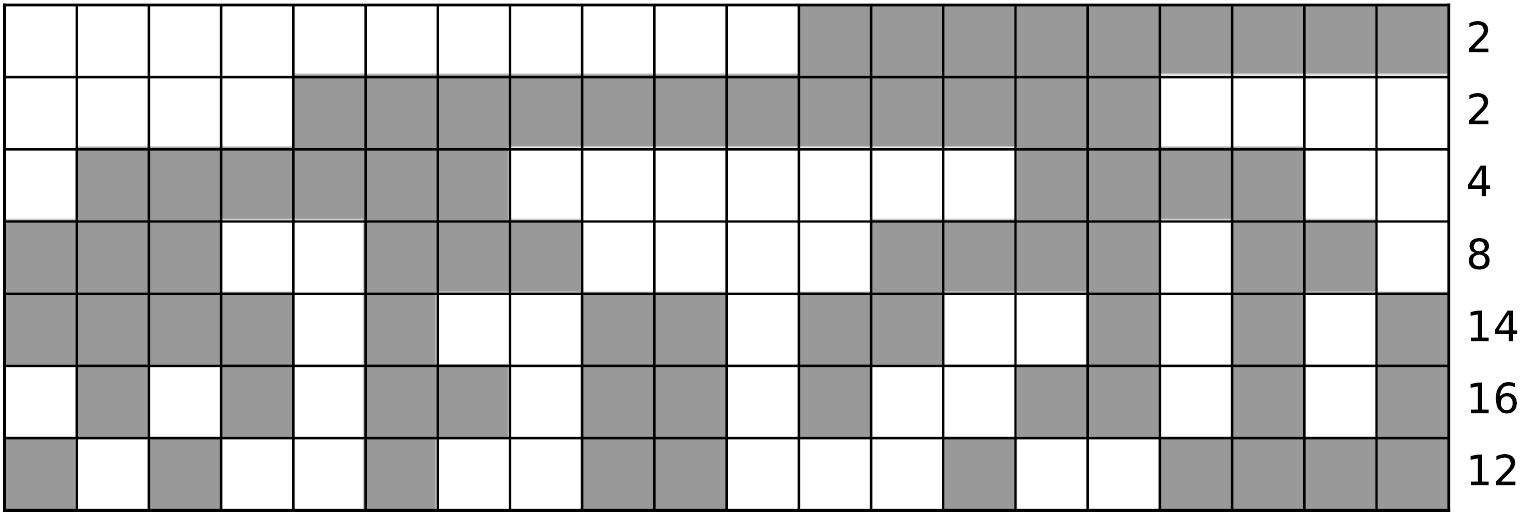}}&
\includegraphics[width=0.26\textwidth]{./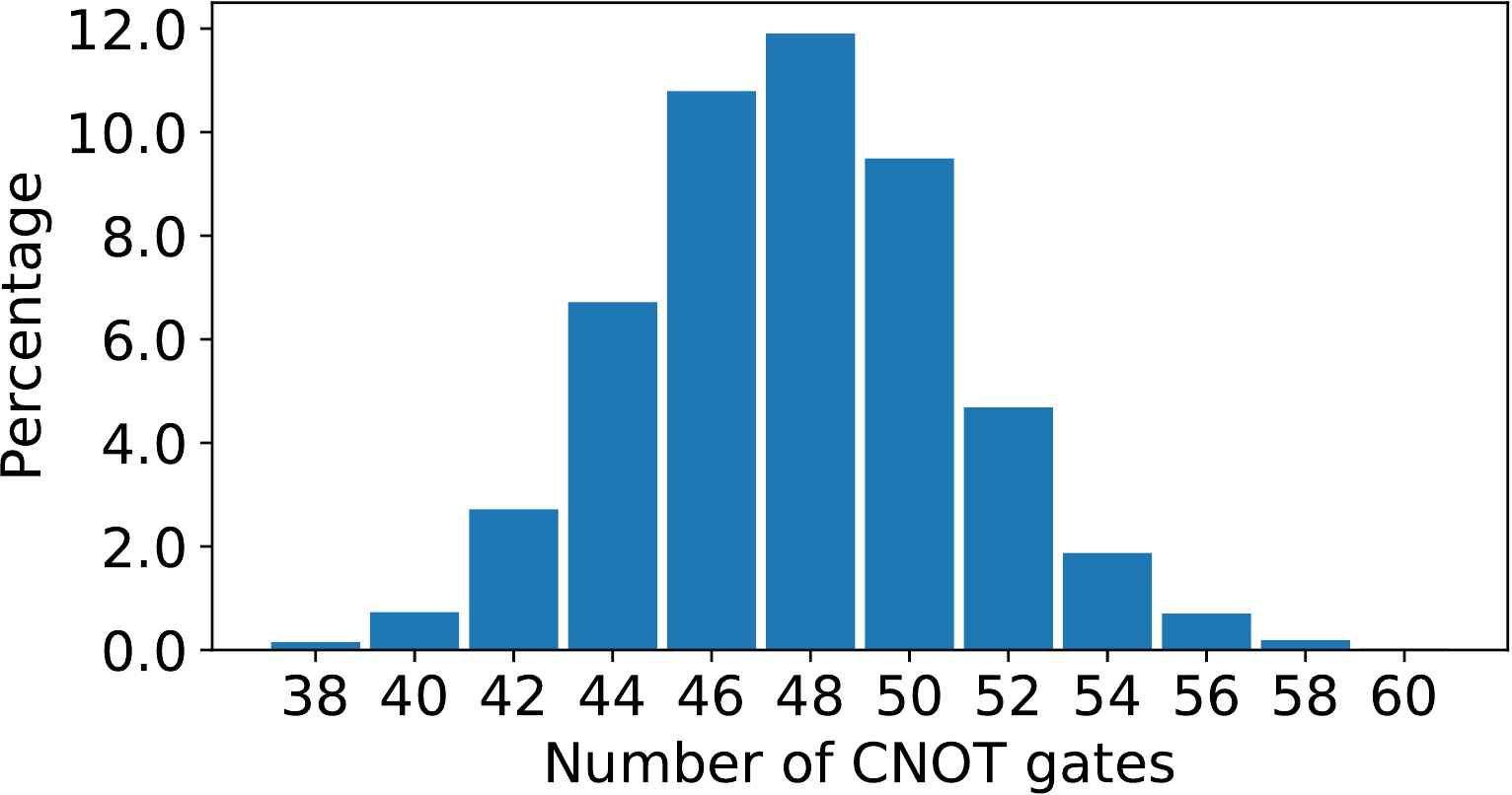}\\
({\bf{a}}) & ({\bf{b}}) & ({\bf{c}})
\end{tabular}
\caption{Transpose of the Z block with columns representing
  diagonalized Pauli operators (gray for \Pauli{Z} and white for
  \Pauli{I}), (a) directly after diagonalization, and (b) after
  reordering the columns. The required number of \cnot\ gates per
  qubit are given on the right of each rows. The total number of
  \cnot\ operations required for the circuit are 72 and 58,
  respectively. The histogram in plot (c) shows the percentage of all
  possible qubit orderings that require a certain number of \cnot\
  operations, ranging from 38 to 60.}\label{Fig:ReorderZ}
\end{figure}

\section{Experiments}\label{Sec:Experiments}

We now consider the practical application of the methods described in
earlier sections. In the experiments we consider the number of \cnot\
and single-qubit operations, as well as the circuit depth. The number
of \cnot\ gates that appear in the circuit are especially important
for processors with limited qubit connectivity. In particular, \cnot\
operations between qubits that are not physically connected may
require a substantial number of swap operations. We use
Qiskit~\cite{Qiskit} circuit optimization where indicated, and also
use the package to determine all circuit depths.

\subsection{Random Paulis}\label{Sec:ExpRandom}

\paragraph{Pauli bases.} As a first set of experiments we consider the
circuit complexity for diagonalizing random sets of commuting Paulis.
In order to run these experiments we need an algorithm for sampling
bases of commuting Paulis uniformly at random. For this we proceed in
two stages: first we uniformly sample a canonical generator set, and
second we sample a full-rank binary matrix. The resulting set of
Paulis is then obtained by multiplication of the generator set tableau
generator set with the binary matrix. Many of the random generators
can be sampled by setting the $X$ block in the tableau to the
identity, followed by randomly sampling a symmetric $Z$ block, as
required by Theorem~\ref{Thm:SymmetricZ}. Besides these there are
generators with one or more of the diagonal entries in $X$ set to
zero. Such entries are generated by clearing out the entries on,
below, and to the right of the given diagonal element in the $Z$ block
and exchanging the associated columns in the $X$ and $Z$ blocks.
Zeroing out the entries is needed to ensure that the diagonal element
in the $X$ block cannot be set to one using row exchanges. The
algorithm for stage one is summarized in
Algorithm~\ref{Alg:GeneratorCommuting}. For the first row of the
tableau we have $2^n$ possibilities for $Z$ if the diagonal of $X$ is
set, and a single possibility otherwise, for a total of $1+2^n$. For
the second row we can only set $n-1$ entries in $Z$ due to the
symmetry requirement, therefore giving a total of $1+2^{n-1}$. The
total number of possible generators thus obtained is indeed the
maximum~\cite{SAR2019Ba}:
\[
\prod_{i=0}^{n-1}(1+2^{n-i}).
\]
For the second stage we generate a binary $n\times n$ matrix with
entries selected uniformly at random. The probability that the given
matrix is full rank is given by~\cite{BER1980a}:
\[
\prod_{i=0}^{n-1} (1-2^{-(n-i)}) \leq \prod_{i=1}^{\infty}(1-2^{-i}) = 0.288789\ldots
\]
After sampling a matrix we therefore need to check whether the matrix
is full rank. If not we need to sample another matrix, until we find a
full-rank one. The expected number of matrix samples is no more than
five for any matrix size.

\begin{algorithm}[t!]
\caption{Random generator sets for commuting Paulis.}\label{Alg:GeneratorCommuting}
\hspace*{0pt}{\vspace*{-3pt}}\\
\hspace*{0pt}{{\bf Input}: Pauli size $n$.}\\
\hspace*{0pt}{{\bf Output}: Tableau with $n$ generators for
  random maximally commuting $n$-Pauli set.}\\
\hspace*{0pt}{{\bf Complexity}: $\mathcal{O}(n^2)$.}\\[-12pt]
\begin{algorithmic}[1]
\State{Initialize an empty tableau $T = [X,Z]$ with blocks of size
  $n\times n$}
\For{$i \in [n]$}
   \State{Draw integer $r$ uniformly at random from $[0,\ldots,2^{n+1-i}]$}
   \State{Set $X[i,i] = 1$}
   \If{$r = 2^{n+1-i}$}
      \State{Exchange $X[:,i]$ and $Z[:,i]$}
   \Else
       \For{$j \in [i,n]$}
          \State{Set $Z[i,j] = Z[j,i]= (r \mod 2)$}
          \State{$r \gets \lfloor r/2\rfloor$}
       \EndFor
   \EndIf
\EndFor
\State{Return the tableau with random signs}
\end{algorithmic}
\end{algorithm}

Using this procedure we generated twenty random sets of commuting
$n$-Pauli operators of size $n$ ranging from 3 to 25. The resulting
tableaus are guaranteed to have rank $n$ by construction. For each set
we apply the diagonalization procedure from
Section~\ref{Sec:PairwiseElimination} (cz), the \cnot{}-based approach
from Section~\ref{Sec:CNotDiagonalization}, either directly (cnot), or
using the \cnot{} reduction from~\cite{PAT2008MHa} with block size
equal to $\log_2(n)$ or the optimal block size in the range $1$
through $n$; labeled (cnot-log2) and (cnot-best), respectively. In
addition to the two greedy methods (greedy-1, greedy-2) described in
Section~\ref{Sec:ColumnwiseDiagonalization}, we also applied the
tableau normalization procedure described in~\cite{GAR2013MCa}, and
denoted (gmc).

The results averaged over the twenty problem instances of a given size
are summarized in Table~\ref{Table:Basic}. The first column of results
list the number of \cnot\ operations, the number of single-qubit
gates, and the depth of the generated circuit for diagonalizing the
set of Paulis. The second and third columns summarize the circuit
complexity when the methods are applied to simulate products of the
Pauli exponentials. We will first focus on the circuit complexity of
the diagonalization and consider the simulation results later. For the
diagonalization process we also provide an aggregated comparison of
the performance of the different methods in
Figure~\ref{Fig:Basic}. This figure gives the percentage of problem
instances, across all problem sizes, for which the method on the
vertical axis strictly outperforms the method on the horizontal axis.

From the results in Table~\ref{Table:Basic} we see that the
performance of the gmc method is closest to that of the cnot
method. Overall, though we still find that the cnot method requires
fewer \cnot\ gates for 62\% of the problems and fewer single-qubit
gates in 84\% of the cases. In terms of depth of the diagonalization
circuit, we see that gmc generally outperforms cnot-best and both
greedy methods. However, the latter three methods require far fewer
\cnot\ and single-qubit gates than gmc. The cz method generally
outperforms gmc and the three cnot methods in terms of both gate
counts and circuit depth. The greedy approaches excel at reducing the
number of \cnot\ gates, but generally have a larger circuit depth. The
greedy-2 approach additionally outperforms all methods in terms of the
number of single-qubit gates, although this difference is only
marginal for the cz method. The cnot-best method chooses a block size
that minimizes the \cnot\ count across all possible block sizes, and
by definition is therefore never outperformed by cnot-log2. The number
of single qubit gates is not affected by the optimization of the
\cnot\ operations and is therefore identical for all three cnot
methods. The optimal choice of blocksize was relatively small and
equal to two for 48\% of the test problems, three for 28\%, and 4 for
some 10\% of the problems. For problems with $n$ between 20 and 25,
the frequencies changed to 48\%, 40\%, and 10\%, respectively.
For the very small problem sizes it was often found that the
unoptimized \cnot\ circuit was at least as good as the optimized
one, and amounted to around 12\% over all test problems.

\begin{figure}[t!]
\centering
\begin{tabular}{rrr}
\includegraphics[width=165.6pt]{./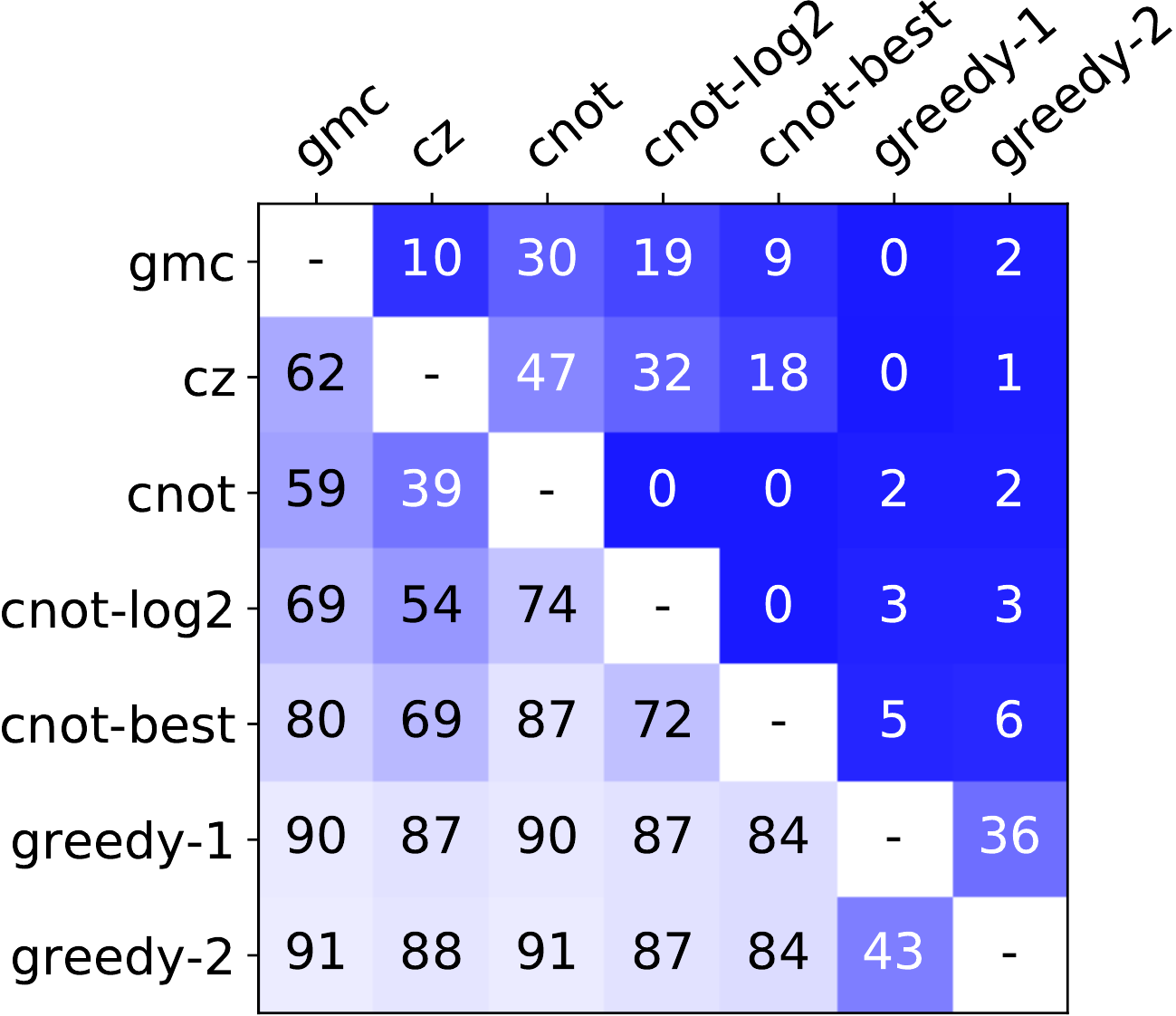}\hspace*{-5pt}&
\includegraphics[height=142pt]{./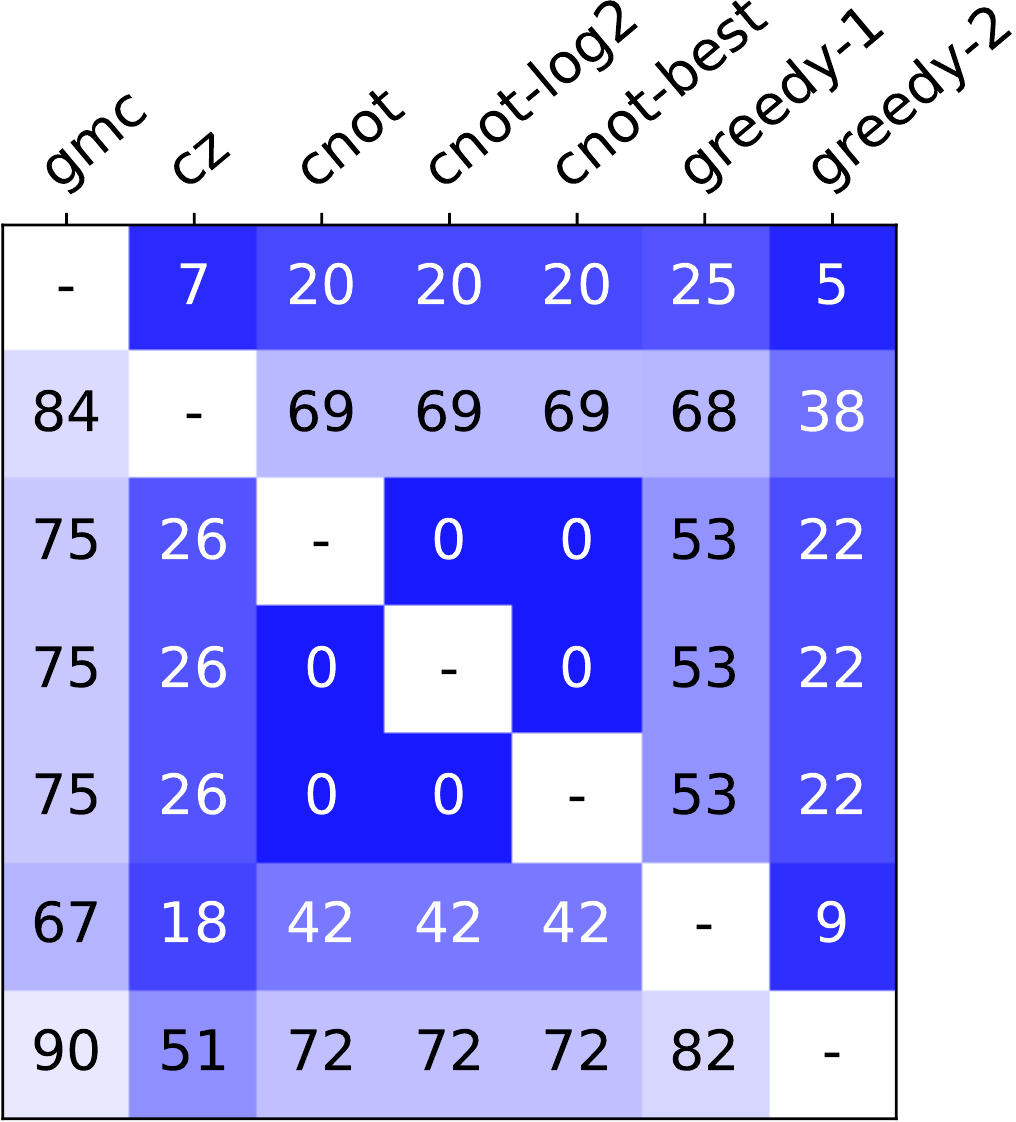}\hspace*{-5pt}&
\includegraphics[height=142pt]{./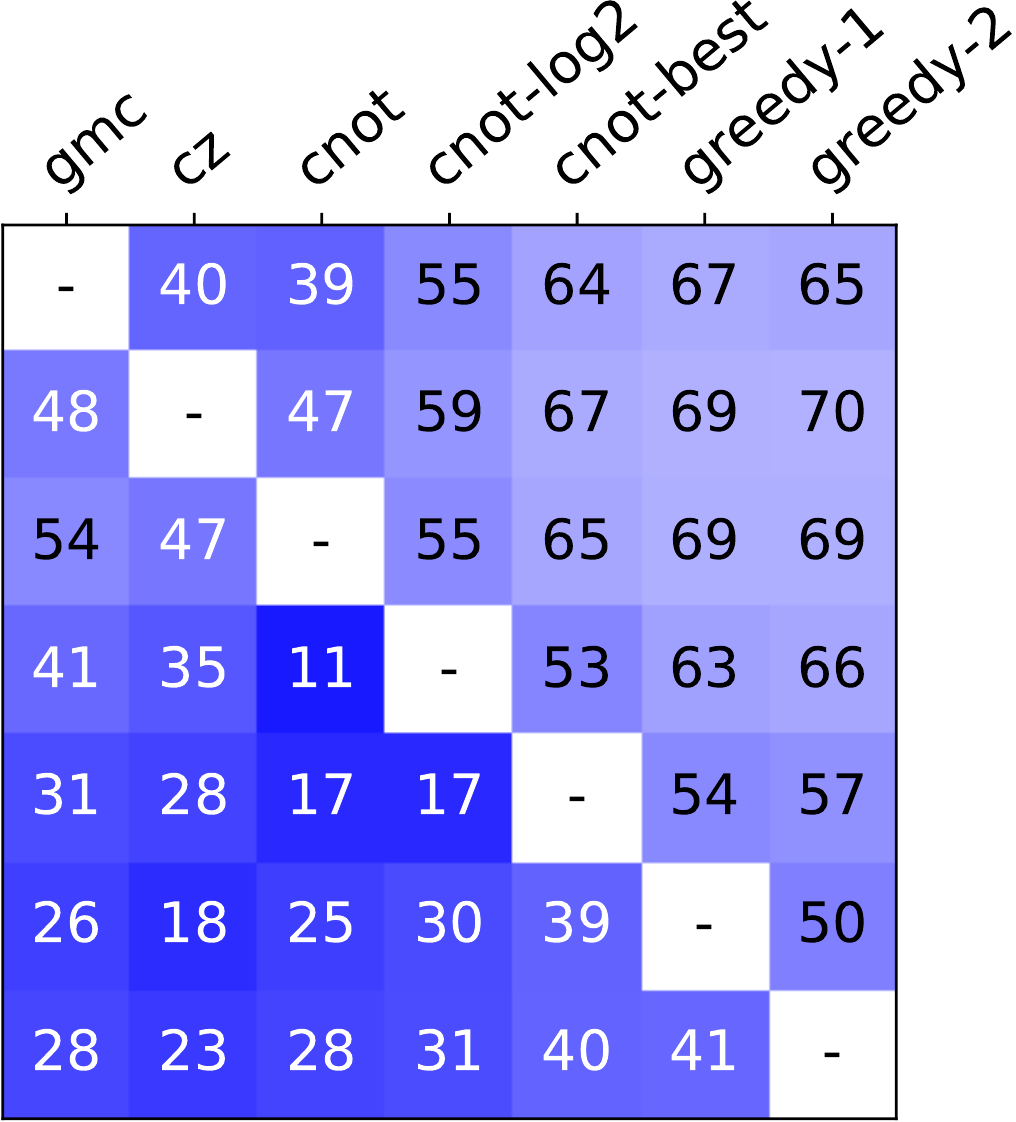}\\
\multicolumn{1}{c}{\hspace*{32pt}\cnot{}} &
\multicolumn{1}{c}{single qubit\hspace*{18pt}} &
\multicolumn{1}{c}{circuit depth\hspace*{18pt}}
\end{tabular}
\caption{Comparison of the different diagonalization methods on random
  Pauli basis. The percentage in each block (along with the associated
  color) indicates how often the method on the vertical axis is
  strictly better than the method along the horizontal axis in terms
  of the \cnot\ count, single-qubit gate count, and the circuit depth.}\label{Fig:Basic}
\end{figure}

We now consider the performance of the different methods in evaluating
the product of exponentials of the Paulis in each set. For this we
include the direct method, which was described in
Section~\ref{Sec:Existing}. The circuits generated are pre-optimized
by omitting gates that clearly cancel. For the direct method we
additionally apply level-two circuit optimization as proved by
Qiskit. The results of these experiments are summarized in the two
simulation columns of Table~\ref{Table:Basic}. The second of these
columns gives the result after optimizing the order of the Pauli
operators. For the diagonalization-based approaches we use the
procedure described in Section~\ref{Sec:OrderingTerms}, with sorting
applied according to the canonical qubit order. For the direct
approach we adopt a greedy approach in which we iteratively pick an
unused operator whose addition requires the smallest number of
additional \cnot\ gates, and in case of a tie, the smallest the number
of single-qubit gates.

\begin{table}
\centering
\begin{tabular}{|l|l|rrr|rrrr|rrrr|}
\hline
$n$ & Algorithm & \multicolumn{3}{c|}{Circuit $\mathcal{U}$} & \multicolumn{4}{c|}{Simulation} & \multicolumn{4}{c|}{Simulation (optimized)}\\
&&    &    &    & \multicolumn{2}{c}{\cnot{}} &     &    & \multicolumn{2}{c}{\cnot{}} &     &  \\
&& \cnot{} & single & depth & total  & exp.   & single & depth & total  & exp.   & single & depth\\
\hline
5& gmc &6&12&8&28&16&28&34&25&\cellcolor{highlight}{\color{darkblue}13}&28&32\\
& cz &5&11&8&25&15&27&33&23&\cellcolor{highlight}{\color{darkblue}13}&27&32\\
& cnot &5&14&7&25&15&33&33&23&\cellcolor{highlight}{\color{darkblue}13}&33&32\\
& cnot-log2 &\cellcolor{highlight}{\color{darkblue}4}&14&7&25&17&33&33&22&14&33&31\\
& cnot-best &\cellcolor{highlight}{\color{darkblue}4}&14&7&24&16&33&32&22&14&33&31\\
& greedy-1 &\cellcolor{highlight}{\color{darkblue}4}&11&8&\cellcolor{highlight}{\color{darkblue}22}&\cellcolor{highlight}{\color{darkblue}14}&28&32&\cellcolor{highlight}{\color{darkblue}21}&\cellcolor{highlight}{\color{darkblue}13}&28&31\\
& greedy-2 &\cellcolor{highlight}{\color{darkblue}4}&\cellcolor{highlight}{\color{darkblue}9}&\cellcolor{highlight}{\color{darkblue}6}&23&15&\cellcolor{highlight}{\color{darkblue}23}&\cellcolor{highlight}{\color{darkblue}31}&\cellcolor{highlight}{\color{darkblue}21}&\cellcolor{highlight}{\color{darkblue}13}&\cellcolor{highlight}{\color{darkblue}23}&\cellcolor{highlight}{\color{darkblue}29}\\
& direct &--&--&--&30&30&34&43&28&28&33&40\\
\hline
10& gmc &25&28&17&106&56&66&94&95&\cellcolor{highlight}{\color{darkblue}45}&66&84\\
& cz &22&23&16&100&56&57&92&89&\cellcolor{highlight}{\color{darkblue}45}&57&82\\
& cnot &22&26&\cellcolor{highlight}{\color{darkblue}14}&100&56&63&\cellcolor{highlight}{\color{darkblue}89}&90&46&63&\cellcolor{highlight}{\color{darkblue}80}\\
& cnot-log2 &21&26&15&96&\cellcolor{highlight}{\color{darkblue}54}&63&90&87&\cellcolor{highlight}{\color{darkblue}45}&63&81\\
& cnot-best &19&26&15&93&55&63&90&84&46&63&82\\
& greedy-1 &\cellcolor{highlight}{\color{darkblue}14}&25&17&86&58&60&95&\cellcolor{highlight}{\color{darkblue}74}&46&60&83\\
& greedy-2 &\cellcolor{highlight}{\color{darkblue}14}&\cellcolor{highlight}{\color{darkblue}21}&16&\cellcolor{highlight}{\color{darkblue}85}&57&\cellcolor{highlight}{\color{darkblue}53}&95&\cellcolor{highlight}{\color{darkblue}74}&46&\cellcolor{highlight}{\color{darkblue}53}&84\\
& direct &--&--&--&118&118&123&142&109&109&114&132\\
\hline
15& gmc &58&45&24&234&\cellcolor{highlight}{\color{darkblue}118}&104&174&216&100&104&157\\
& cz &51&\cellcolor{highlight}{\color{darkblue}35}&\cellcolor{highlight}{\color{darkblue}23}&221&119&\cellcolor{highlight}{\color{darkblue}85}&\cellcolor{highlight}{\color{darkblue}172}&202&100&\cellcolor{highlight}{\color{darkblue}85}&\cellcolor{highlight}{\color{darkblue}154}\\
& cnot &52&38&\cellcolor{highlight}{\color{darkblue}23}&225&121&92&174&205&101&92&157\\
& cnot-log2 &48&38&26&217&121&92&181&198&102&92&163\\
& cnot-best &45&38&27&210&120&92&183&191&101&92&165\\
& greedy-1 &\cellcolor{highlight}{\color{darkblue}32}&40&29&184&120&94&186&\cellcolor{highlight}{\color{darkblue}162}&\cellcolor{highlight}{\color{darkblue}98}&94&165\\
& greedy-2 &\cellcolor{highlight}{\color{darkblue}32}&\cellcolor{highlight}{\color{darkblue}35}&31&\cellcolor{highlight}{\color{darkblue}182}&\cellcolor{highlight}{\color{darkblue}118}&\cellcolor{highlight}{\color{darkblue}85}&189&163&99&\cellcolor{highlight}{\color{darkblue}85}&171\\
& direct &--&--&--&256&256&268&292&234&234&244&270\\
\hline
20& gmc &102&62&33&414&210&144&287&388&184&144&262\\
& cz &96&\cellcolor{highlight}{\color{darkblue}46}&32&402&210&\cellcolor{highlight}{\color{darkblue}112}&286&376&184&\cellcolor{highlight}{\color{darkblue}112}&260\\
& cnot &95&51&\cellcolor{highlight}{\color{darkblue}31}&398&\cellcolor{highlight}{\color{darkblue}208}&123&\cellcolor{highlight}{\color{darkblue}281}&373&183&123&\cellcolor{highlight}{\color{darkblue}258}\\
& cnot-log2 &90&51&35&388&\cellcolor{highlight}{\color{darkblue}208}&123&289&364&184&123&266\\
& cnot-best &82&51&43&372&\cellcolor{highlight}{\color{darkblue}208}&123&306&348&184&123&282\\
& greedy-1 &58&56&47&326&210&133&313&296&\cellcolor{highlight}{\color{darkblue}180}&133&284\\
& greedy-2 &\cellcolor{highlight}{\color{darkblue}56}&51&47&\cellcolor{highlight}{\color{darkblue}322}&210&123&314&\cellcolor{highlight}{\color{darkblue}292}&\cellcolor{highlight}{\color{darkblue}180}&123&284\\
& direct &--&--&--&458&458&459&505&424&424&423&469\\
\hline
25& gmc &151&81&41&626&324&186&417&586&\cellcolor{highlight}{\color{darkblue}284}&186&\cellcolor{highlight}{\color{darkblue}380}\\
& cz &147&\cellcolor{highlight}{\color{darkblue}60}&\cellcolor{highlight}{\color{darkblue}40}&617&323&\cellcolor{highlight}{\color{darkblue}144}&\cellcolor{highlight}{\color{darkblue}416}&578&\cellcolor{highlight}{\color{darkblue}284}&\cellcolor{highlight}{\color{darkblue}144}&\cellcolor{highlight}{\color{darkblue}380}\\
& cnot &150&63&\cellcolor{highlight}{\color{darkblue}40}&630&330&150&423&590&290&150&386\\
& cnot-log2 &142&63&51&614&330&150&444&573&289&150&408\\
& cnot-best &129&63&62&588&330&150&466&548&290&150&429\\
& greedy-1 &\cellcolor{highlight}{\color{darkblue}92}&74&69&\cellcolor{highlight}{\color{darkblue}506}&\cellcolor{highlight}{\color{darkblue}322}&173&472&470&286&173&438\\
& greedy-2 &\cellcolor{highlight}{\color{darkblue}92}&66&71&510&326&157&481&\cellcolor{highlight}{\color{darkblue}469}&285&157&440\\
& direct &--&--&--&707&707&714&764&651&651&660&708\\
\hline
\end{tabular}

\caption{The average circuit complexity over twenty random Pauli bases
  of size $n$ for the diagonalization circuit $\mathcal{U}$, and the
  entire simulation circuit, including exponentiation. The optimized
  simulation block gives the circuit complexity after appropriately
  reordering the Pauli operators. The best values are highlighted.}\label{Table:Basic}
\end{table}

Even with these relatively simple optimizations we can see that the
number of \cnot\ gates and circuit depth exhibit a noticeable
reduction. The same applies to the number of single-qubit gates in the
direct approach, where the gates for individual diagonalization of
neighboring operators can cancel. For the diagonalization approaches
the number of single-qubit gates is unaffected, since the optimization
only affects the central part of the circuit, which consists entirely
of \cnot\ and $R_z$ gates, and none of the $R_z$ gates can be
simplified, unless some of the Pauli operators are repeated.  Despite
the small number of Pauli terms in the exponentiation, the overhead of
applying simultaneous diagonalization and its adjoin is still small
enough for the overall number of \cnot\ gates, and certainly the
number of single-qubit gates to compare very favorably against the
direct method. The same applies for the circuit depth, where we
observe a puzzling phenomenon for the diagonalization methods, seen
across the different problem sizes: methods with a larger number of
\cnot\ gates tend to have a smaller circuit depth. The total depth of
the circuit is approximately twice the diagonalization circuit depth,
plus the number of \cnot\ gates in the central part responsible for
exponentiation, plus an additional single-qubit $R_z$ gate for each of
the $n$ operators. From the \cnot\ exp.~column in
Table~\ref{Table:Basic} we see that the number of \cnot\ gates in the
central part of the circuit is nearly identical for the different
methods, and the difference must therefore be due to the depth of the
diagonalization circuits. Having more \cnot\ gates in a shallower
circuit indicates a higher level of parallelism where two or more
gates can be applied simultaneously. This also suggests an improvement
to the cz approach: instead of simply sweeping the entries row by row,
we could process the entries in a way that promotes parallelism by
avoiding repeated dependence on a single qubit.  Another possible
modification, which applies to all methods, is to connect the \cnot\
gates between pairs of qubits where the Pauli term is \Pauli{Z}, and
only eventually connect the partial parity values to the ancilla. This
approach can help improve locality of the \cnot\ operators, and enable
a higher level of parallelism, as the cost of potentially more complex
optimization and circuit generation code. Finally, recent
  work~\cite{BRA2020Ma} has shown that any Clifford operator can be
  implemented with a circuit of depth at most $9n$ on linear
  nearest-neighbor architectures. Although this limit is not reached
  by the Clifford operators for diagonalization, $\mathcal{U}$, in
  Table~\ref{Table:Basic} it is an interesting direction for future
  study.

\paragraph{General sets of Paulis.} 

When ignoring the sign, the number of $n$-Pauli operators that can
mutually commute is $2^n$. We can therefore expect that the number of
commuting Paulis in a set exceeds $n$, which was used in the
experiments above. In our next set of experiments we consider sets of
size $m$. We generate these by multiplying the XZ blocks of the
initial tableaus used earlier by a full-rank $m\times n$ binary
matrix, thereby generating a new tableau with X and Z block sizes
equal to $m\times n$. The sign column of the tableau is initialized at
random.

\begin{table}[ht!]
\centering
\begin{tabular}{|l|l|rrr|rrr|rrr|}
\hline
$m$ & Algorithm & \multicolumn{3}{c|}{\cnot\ count} & \multicolumn{3}{c|}{Single qubit} & \multicolumn{3}{c|}{Depth}\\
&& base & opt & rnd & base & opt & rnd & base & opt & rnd\\
\hline
3& cz & \cellcolor{highlight}{\color{darkblue}74}& \cellcolor{highlight}{\color{darkblue}75}& \cellcolor{highlight}{\color{darkblue}72}& \cellcolor{highlight}{\color{darkblue}38}& \cellcolor{highlight}{\color{darkblue}38}& \cellcolor{highlight}{\color{darkblue}38}& 65& \cellcolor{highlight}{\color{darkblue}66}& 62\\
& csw-cz & 94& 95& 93& 52& 52& 52& 68& 70& 68\\
& csw-cnot & 173& 173& 170& 126& 126& 126& 86& 86& 84\\
& cnot & \cellcolor{highlight}{\color{darkblue}74}& \cellcolor{highlight}{\color{darkblue}75}& \cellcolor{highlight}{\color{darkblue}72}& 43& 43& 43& 65& \cellcolor{highlight}{\color{darkblue}66}& 62\\
& cnot-log2 & \cellcolor{highlight}{\color{darkblue}74}& \cellcolor{highlight}{\color{darkblue}75}& \cellcolor{highlight}{\color{darkblue}72}& 43& 43& 43& 65& \cellcolor{highlight}{\color{darkblue}66}& 63\\
& cnot-best & \cellcolor{highlight}{\color{darkblue}74}& \cellcolor{highlight}{\color{darkblue}75}& \cellcolor{highlight}{\color{darkblue}72}& 43& 43& 43& 65& \cellcolor{highlight}{\color{darkblue}66}& 62\\
& greedy-1 & \cellcolor{highlight}{\color{darkblue}74}& \cellcolor{highlight}{\color{darkblue}75}& \cellcolor{highlight}{\color{darkblue}72}& 39& 39& 39& 65& \cellcolor{highlight}{\color{darkblue}66}& 62\\
& greedy-2 & \cellcolor{highlight}{\color{darkblue}74}& 76& \cellcolor{highlight}{\color{darkblue}72}& \cellcolor{highlight}{\color{darkblue}38}& \cellcolor{highlight}{\color{darkblue}38}& \cellcolor{highlight}{\color{darkblue}38}& \cellcolor{highlight}{\color{darkblue}64}& \cellcolor{highlight}{\color{darkblue}66}& \cellcolor{highlight}{\color{darkblue}61}\\
& direct & 77& 77& 76& 77& 78& 76& 85& 84& 84\\
\hline
10& cz & 227& 214& 201& \cellcolor{highlight}{\color{darkblue}72}& \cellcolor{highlight}{\color{darkblue}72}& \cellcolor{highlight}{\color{darkblue}72}& 159& 147& 135\\
& csw-cz & 258& 250& 235& 93& 93& 93& 164& 157& 141\\
& csw-cnot & 299& 293& 277& 132& 132& 132& 173& 168& 152\\
& cnot & 227& 212& 200& 76& 76& 76& \cellcolor{highlight}{\color{darkblue}158}& \cellcolor{highlight}{\color{darkblue}146}& \cellcolor{highlight}{\color{darkblue}132}\\
& cnot-log2 & 225& 210& 198& 76& 76& 76& 159& 147& 134\\
& cnot-best & 220& 206& 194& 76& 76& 76& 163& 151& 137\\
& greedy-1 & 212& 201& 186& 79& 79& 79& 162& 152& 137\\
& greedy-2 & \cellcolor{highlight}{\color{darkblue}211}& \cellcolor{highlight}{\color{darkblue}199}& \cellcolor{highlight}{\color{darkblue}185}& \cellcolor{highlight}{\color{darkblue}72}& \cellcolor{highlight}{\color{darkblue}72}& \cellcolor{highlight}{\color{darkblue}72}& 165& 153& 138\\
& direct & 236& 220& 203& 242& 221& 204& 259& 244& 226\\
\hline
50& cz & 702& 602& 569& \cellcolor{highlight}{\color{darkblue}142}& \cellcolor{highlight}{\color{darkblue}142}& \cellcolor{highlight}{\color{darkblue}142}& 616& 518& 483\\
& csw-cz & 702& 598& 564& 173& 173& 173& 615& 514& \cellcolor{highlight}{\color{darkblue}479}\\
& csw-cnot & 704& 597& 568& 153& 153& 153& 617& 512& 483\\
& cnot & 701& 589& 566& 153& 153& 153& \cellcolor{highlight}{\color{darkblue}614}& \cellcolor{highlight}{\color{darkblue}505}& 480\\
& cnot-log2 & 691& 580& 556& 153& 153& 153& 621& 512& 489\\
& cnot-best & 675& 564& 544& 153& 153& 153& 638& 529& 508\\
& greedy-1 & \cellcolor{highlight}{\color{darkblue}626}& 526& 491& 163& 163& 163& 643& 546& 508\\
& greedy-2 & 628& \cellcolor{highlight}{\color{darkblue}525}& \cellcolor{highlight}{\color{darkblue}488}& 153& 153& 153& 650& 548& 510\\
& direct & 1134& 1005& 959& 1152& 1018& 977& 1251& 1117& 1068\\
\hline
200& cz & 2209& 1601& 1544& \cellcolor{highlight}{\color{darkblue}292}& \cellcolor{highlight}{\color{darkblue}292}& \cellcolor{highlight}{\color{darkblue}292}& 2273& 1668& 1609\\
& csw-cz & 2204& 1600& 1532& 322& 322& 322& 2269& 1669& 1599\\
& csw-cnot & 2203& 1596& 1531& 302& 302& 302& 2266& \cellcolor{highlight}{\color{darkblue}1662}& \cellcolor{highlight}{\color{darkblue}1597}\\
& cnot & 2186& 1598& 1536& 303& 303& 303& \cellcolor{highlight}{\color{darkblue}2249}& 1664& 1601\\
& cnot-log2 & 2177& 1588& 1522& 303& 303& 303& 2257& 1671& 1605\\
& cnot-best & 2161& 1572& 1506& 303& 303& 303& 2273& 1688& 1622\\
& greedy-1 & \cellcolor{highlight}{\color{darkblue}2123}& \cellcolor{highlight}{\color{darkblue}1518}& \cellcolor{highlight}{\color{darkblue}1457}& 313& 313& 313& 2290& 1690& 1625\\
& greedy-2 & 2128& \cellcolor{highlight}{\color{darkblue}1518}& 1459& 303& 303& 303& 2300& 1692& 1631\\
& direct & 4526& 3798& 3714& 4574& 3823& 3752& 4986& 4238& 4151\\
\hline
\end{tabular}

\caption{Average complexity of the complete circuit, including
  diagonalization and exponentiation, over twenty
  problem instances of $m$ Pauli operators on $20$
  qubits using no optimization (base), single-pass optimization (opt),
  or the best of 100 randomized optimizations (rnd). The best values
  are highlighted.}\label{Table:Nonsquare}
\end{table}

We perform three types of optimization regarding the operator
order. The base option uses the operators in the order they are
provided. The opt strategy applies the ordering described above for
our experiments with sets of size $n$. The final optimization strategy
(rnd) aims to minimize the number of \cnot\ gates based on random
permutations. In particular, for the diagonalization methods, we use
permutations of $[n]$ to determine the qubit sorting order, as
described in Section~\ref{Sec:OrderingTerms}. For the direct approach
we use permutations of $[m]$ to shuffle the operator order before
applying the greedy optimization procedure described above; the first
permutation is the canonical ordering to ensure the results are at
least as good as those of the opt strategy. For our experiments we use
100 random permutations per setting and then select the result that
has the lowest number of \cnot\ gates. The gmc method as given
in~\cite{GAR2013MCa} does not apply to non-square tableaus and we
therefore do not use it in subsequent experiments. Instead, we
consider the \cz\ (csw-cz) and \cnot\ (csw-cnot) diagonalization
algorithms described in~\cite{CRA2019SWPa}. The average circuit
complexities for simulation, obtained for the three optimization
procedures for $n = 20$ and varying values of $m$, are shown in
Table~\ref{Table:Nonsquare}. Results in the table are grouped by the
resource type: \cnot\ and single-qubit counts and depth. Note that
this differs from Table~\ref{Table:Basic}, where the results were
grouped by optimization type (base or opt).  Looking at
Table~\ref{Table:Nonsquare} we see that, aside from the csw methods,
diagonalization-based simulation is uniformly better than the direct
method on our test problems, even for $m$ much less than $n$. If
needed, the csw methods augment tableaus with additional rows to make
them full rank. For $m < n$ rows are always added, which leads to an
increased circuit complexity due to additional entries that need to be
cleared. For $m\geq n$ this overhead generally disappears and reduces
the csw-cz complexity to that of method cz. The csw-cnot algorithm has
an additional \cnot\ stage that slightly increases the circuit
complexity compared to method cnot (the \cnot\ reduction technique
from~\cite{PAT2008MHa} also applies to the csw-cnot algorithm but was
not implemented here).  The diagonalization part of the circuit has a
complexity that is essentially constant for $m \geq n$, and the
overhead therefore diminishes as $m$ grows, thereby leading to a
further improvement over the direct method. Aside for $m=3$ we see
that the single optimization step used in opt can significantly reduce
the \cnot\ gate count and circuit depth. As before, the number of
single-qubit gates is unaffected by optimization for the
diagonalization-based methods, but reduced substantially for the
direct method. Randomized optimization helps further lower the circuit
complexity, although the improvement is much less pronounced.

In Table~\ref{Table:Nonsquare} we purposely omit results on the
complexity of the diagonalization circuit, as they were found to be
similar for $m < n$ and identical for $m\geq n$ to the ones shown in
Table~\ref{Table:Basic}.  The fact that we obtain identical circuits
for $m\geq n$ may seem surprising at first, but becomes apparent when
noting that a circuit that diagonalizes a generating set for Paulis
automatically diagonalizes all Paulis in the group it generates. We
here show the result for a slightly different procedure of
diagonalizing the X block, as summarized in
Algorithm~\ref{Alg:NormalizeFullRank}.

\begin{theorem}\label{Thm:NormalizeFullRank}
  Given a full-rank tableau $T=[X,Z]$ in
  $\mathbb{F}_2^{n\times 2n}$ Then the output of
  Algorithm~\ref{Alg:NormalizeFullRank} applied to tableau $B\cdot T$
  gives the same tableau and index set $\mathcal{I}$ for any
  full-rank $B\in \mathbb{F}_2^{m\times n}$ with $m\geq n$.
\end{theorem}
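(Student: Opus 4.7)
The plan is to exploit the observation that left-multiplication by a full-rank $B\in\mathbb{F}_2^{m\times n}$ with $m\geq n$ preserves the $\mathbb{F}_2$-row span of $T$, and then to argue that every decision made by Algorithm~\ref{Alg:NormalizeFullRank} depends on the tableau only through this row span. Since $B$ has rank $n$, its rows generate $\mathbb{F}_2^n$, so the $n$-dimensional row span of $B\cdot T$ equals that of $T$. The row swaps and row sweeps used inside the algorithm are $\mathbb{F}_2$-invertible operations on rows, and the Clifford gates act as a fixed invertible $\mathbb{F}_2$-linear map on the column pairs $(x_j,z_j)$; both preserve this accumulated row span at every iteration.

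Next I would trace through Algorithm~\ref{Alg:NormalizeFullRank} to verify that each individual choice it makes is determined by the current row span. The algorithm reduces the $X$ block by a deterministic (top-most, left-most) pivot search, and the set of pivot columns so produced is a classical row-span invariant: it coincides with the pivot set of the reduced row echelon form of the $X$ block. The subsequent column swaps and single- or two-qubit Clifford gates are indexed exclusively by these pivot positions and by the expansion coefficients of the remaining $Z$ columns against the pivot basis, which are themselves span-invariants. Hence for any two admissible $B_1,B_2$ the sequences of column operations coincide, so the resulting Clifford circuits and the index sets $\mathcal{I}$ agree, and the final tableaus agree after discarding the all-zero rows contributed by the $m-n$ redundant rows.

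The main obstacle is to make this span argument rigorous in the presence of the linearly dependent rows that $B\cdot T$ contains when $m>n$. I would carry out an induction on the iteration counter $k$ of the algorithm, maintaining the invariant that after $k$ steps (i) the two runs have produced identical accumulated Clifford circuits, (ii) they have selected identical pivot positions, and (iii) their tableaus have nonzero rows spanning the same subspace of $\mathbb{F}_2^{2n}$. The work is to verify that the pivot search never selects a row lying in the span of the rows already processed (so the dependent rows are swept to zero early and remain zero), and that the termination test depends only on whether any nonzero entry remains in the unprocessed sub-block, which is itself a row-span invariant. Once these bookkeeping points are in place the identity of the final tableau and of $\mathcal{I}$ follows at once from the invariant at termination.
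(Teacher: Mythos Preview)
Your row-span strategy is sound and does lead to a valid proof, but several passages suggest you are describing Algorithm~\ref{Alg:DiagonalizeX} rather than Algorithm~\ref{Alg:NormalizeFullRank}. Algorithm~\ref{Alg:NormalizeFullRank} runs a fixed loop over $k=1,\ldots,n$ with no termination test; its only Clifford operation is the column exchange $X[:,k]\leftrightarrow Z[:,k]$ (a Hadamard), and it never emits two-qubit gates or reads off ``expansion coefficients of the remaining $Z$ columns.'' These details matter because the invariant you actually need is simpler than what you wrote: it suffices to track (a) the membership decision $k\in\mathcal{I}$ and (b) the common row span of the two current tableaus. The row index $i$ located by the pivot search is \emph{not} a span invariant---different $B$ can surface different rows first---so if your invariant~(ii) is meant to assert that the two runs pick the same row $i$, it is false; only the decision of whether the search succeeds in $X$ or falls through to $Z$ is determined by the span.

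There is also a genuine gap at the end. Your invariant~(iii), that the nonzero rows span the same subspace, does not by itself give identical tableaus. For that you must observe that after the $n$-th iteration the $X$-block (in the swapped column ordering) is exactly the $n\times n$ identity on rows $1,\ldots,n$, so that row $k$ is the \emph{unique} vector in the common row span with that $X$-pattern, and rows $n+1,\ldots,m$ are forced to zero. This is the step that converts equal spans into equal rows, and it should be stated explicitly rather than claimed to ``follow at once.''

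For comparison, the paper takes a different and shorter route: it defers all Hadamards to the end so that the body of the algorithm consists purely of row operations, observes that the resulting first $n$ rows lie in the canonical generator form produced by Algorithm~\ref{Alg:GeneratorCommuting}, and invokes the counting identity $\prod_{i=0}^{n-1}(1+2^{n-i})$ established there to conclude that this canonical form is unique for a given commuting set. Your direct linear-algebra argument, once the issues above are repaired, avoids that external counting and is more self-contained; the paper's argument is terser but leans on the enumeration result.
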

\begin{proof}
  For analysis it will be easier to update the algorithm to omit column
  exchanges between the $X$ and $Z$ blocks, and instead sweep directly
  based on the entries in the column of $X$ if the index was found,
  there or based on the entries in the column of $Z$, otherwise. Note
  that full-rankness of the tableau guarantees that at least one of
  the indices exists. Although we do not apply the column exchanges,
  we do maintain index set $\mathcal{I}$. Applying the Hadamard
  operator to the columns (qubits) in $\mathcal{I}$ after
  normalization, then gives the original algorithm since row-based
  operations commute with Hadamard.

  All tableaus are generated as linear combinations of rows in $T$. It
  then follows from full-rankedness of $B$ that all Paulis
  corresponding to the tableaus can be instantiated using the same
  generator set.  The updated normalization algorithm produces
  generator sets of the same form used in
  Algorithm~\ref{Alg:GeneratorCommuting}. From the analysis of the
  latter algorithm we know that representation in this form is unique;
  no generator set has more than one tableau
  representation. Algorithm~\ref{Alg:NormalizeFullRank} must therefore
  return the same tableau and index set $\mathcal{I}$.
\end{proof}

Given that the tableaus after diagonalization of the $X$ block the
number of Hadamard gates used in the process are identical, it follows
that the circuit complexity for simultaneous diagonalization is the
same for $m \geq n$. For \cz-based diagonalization, the expected
\cnot\ count then follows directly from the construction of random
Pauli bases in Algorithm~\ref{Alg:GeneratorCommuting}. For each of the
rows that are set in the $Z$ block, on average half of the entries will be
one. In case of the column swap, no additional entries are set to one,
and the expected number of elements to sweep is therefore
\[
\sum_{i=0}^{n-1} \frac{n-i}{2}\cdot\frac{2^{n-i}}{2^{n-i}+1} \leq n(n-1)/4.
\]
A consequence of Theorem~\ref{Thm:NormalizeFullRank} is that
Algorithm~\ref{Alg:NormalizeFullRank} can be used to generate a unique
representation of a stabilizer state, irrespective of its original
representation. Moreover, the $Z$ block and index set $\mathcal{I}$
can be concisely represented as a $n\times n+1$ binary matrix.
Similarly, the technique can be used to check if two sets of commuting
Paulis have a common generator set up to signs. Note that our
condition of full-rankedness of the tableau $T$ can be relaxed; if
needed the tableau can be augmented by adding rows with the missing
diagonal elements. These basis vectors are never used in linear
combinations of the original rows in $T$ and can be discarded after
normalization.

\begin{algorithm}[t!]
\caption{Normalization of full-rank tableau.}\label{Alg:NormalizeFullRank}
\hspace*{0pt}{\vspace*{-3pt}}\\
\hspace*{0pt}{{\bf Input}: Full-rank tableau $T=[X,Z]$ with block size
  $m\times n$ such that $m\geq n$.}\\[-12pt]
\begin{algorithmic}[1]
\State{Initialize $\mathcal{I} = \emptyset$}
\For{$k \in [n]$}
   \State{Search for index $i$ with $i\geq k$ such that $X[i,k]=1$}
   \If{index not found}
      \State{$\mathcal{I} \gets \mathcal{I} \cup \{k\}$}
      \State{Exchange $X[:,k]$ and $Z[:,k]$}
      \State{Search for index $i$ with $i\geq k$ such that $X[i,k]=1$}
   \EndIf
   \State{Swap rows $i$ and $k$}
   \For{$i \in [m]$ such that $i\neq k$ and $X[i,k]=1$}
      \State{Sweep row $i$ with row $k$}
   \EndFor
\EndFor
\State{Return the updated tableau along with $\mathcal{I}$.}
\end{algorithmic}
\end{algorithm}

\subsection{Quantum chemistry}\label{Sec:QuantumChemistry}

\begin{table}[t!]
\centering
\begin{tabular}{|llrrl|rrr|rrr|rrr|}
\multicolumn{1}{l}{Mol.} & basis & \# & paulis & \multicolumn{1}{r}{rep.} & \multicolumn{3}{c}{largest first}&\multicolumn{3}{c}{independent set}&\multicolumn{3}{c}{sequential}\\
\hline
BeH$_{2}$ & STO3g & 14 & 666 & BK& 33 & 19 & 54& 23 & 22 & 106& 32 & 14 & 106\\
&&&& JW& 33 & 19 & 52& 25 & 16 & 106& 31 & 14 & 106\\
&&&& P& 37 & 16 & 47& 21 & 27 & 106& 31 & 13 & 106\\
\hline
C$_{2}$ & STO3g & 20 & 3079 & BK& 68 & 28 & 204& 68 & 35 & 211& 75 & 34 & 211\\
&&&& JW& 63 & 36 & 204& 70 & 34 & 211& 76 & 33 & 177\\
&&&& P& 63 & 32 & 204& 74 & 30 & 211& 79 & 26 & 211\\
\hline
H$_{2}$O & 6-31G* & 36 & 41915 & BK& 483 & 68 & 667& -- & -- & --& 414 & 76 & 667\\
&&&& JW& 469 & 68 & 667& -- & -- & --& 426 & 75 & 667\\
&&&& P& 469 & 68 & 667& -- & -- & --& 422 & 74 & 667\\
\hline
H$_{2}$O & 6-31G & 26 & 12732 & BK& 204 & 50 & 352& 210 & 46 & 352& 202 & 48 & 352\\
&&&& JW& 204 & 50 & 352& 199 & 49 & 352& 193 & 50 & 352\\
&&&& P& 193 & 56 & 342& 204 & 46 & 352& 202 & 46 & 352\\
\hline
H$_{2}$O & STO3g & 14 & 1086 & BK& 48 & 23 & 72& 43 & 20 & 106& 47 & 20 & 106\\
&&&& JW& 48 & 23 & 72& 45 & 18 & 106& 49 & 16 & 106\\
&&&& P& 48 & 23 & 75& 44 & 18 & 106& 45 & 20 & 106\\
\hline
H$_{2}$O & ccpvdz & 48 & 128793 & BK& -- & -- & --& -- & -- & --& 796 & 116 & 1177\\
&&&& JW& -- & -- & --& -- & -- & --& 802 & 114 & 1177\\
&&&& P& -- & -- & --& -- & -- & --& 821 & 112 & 1177\\
\hline
H$_{2}$ & 6-31G & 8 & 185 & BK& 9 & 20 & 32& 8 & 20 & 37& 9 & 16 & 37\\
&&&& JW& 9 & 20 & 29& 8 & 20 & 37& 11 & 16 & 37\\
&&&& P& 10 & 16 & 29& 8 & 20 & 37& 9 & 16 & 37\\
\hline
H$_{2}$ & STO3g & 4 & 15 & BK& 2 & 7 & 11& 2 & 7 & 11& 2 & 7 & 11\\
&&&& JW& 2 & 7 & 11& 2 & 7 & 11& 2 & 7 & 11\\
&&&& P& 2 & 7 & 11& 2 & 7 & 11& 2 & 7 & 11\\
\hline
HCl & STO3g & 20 & 5851 & BK& 117 & 42 & 199& 149 & 33 & 211& 126 & 33 & 211\\
&&&& JW& 113 & 38 & 184& 144 & 34 & 211& 125 & 36 & 211\\
&&&& P& 115 & 40 & 192& 147 & 32 & 211& 123 & 36 & 211\\
\hline
LiH & STO3g & 12 & 631 & BK& 38 & 10 & 68& 25 & 25 & 79& 38 & 12 & 79\\
&&&& JW& 38 & 11 & 62& 24 & 24 & 79& 35 & 12 & 79\\
&&&& P& 38 & 10 & 68& 24 & 25 & 79& 37 & 10 & 79\\
\hline
NH$_{3}$ & STO3g & 16 & 3057 & BK& 92 & 24 & 137& 86 & 28 & 137& 96 & 26 & 137\\
&&&& JW& 93 & 22 & 137& 87 & 28 & 137& 94 & 24 & 137\\
&&&& P& 96 & 25 & 137& 85 & 28 & 137& 93 & 26 & 137\\
\hline
\end{tabular}

\caption{Problem instances of different molecules when discretized in
  the given bases, along with the number of qubits (\#) and the
  resulting number of Pauli terms in the Hamiltonian. The entries in
  columns for largest-first, independent-set, and sequential greedy
  partitioning methods give the number of sets in the partition, as
  well as the median and maximum size of the sets, respectively, for
  each of the three encodings: Bravyi-Kitaev (BK), Jordan-Wigner (JW), and parity
  (P).}\label{Table:Cliques}
\end{table}

The Hamiltonians we have considered so far were randomly generated,
and may therefore be structurally different from those found in
practical applications. In this section we look at the time evolution
of Hamiltonians arising from fermionic many-body quantum systems. We
use the spin Hamiltonians obtained in~\cite{CHO2020MCa} by using the
second quantization formalism of the fermionic system followed by a
conversion to interacting spin models by applying the Jordan-Wigner,
Bravyi-Kitaev, or parity encodings~\cite{JOR1928Wa,BRA2002Ka}. The
resulting Hamiltonians are expressed as a weighted set of Paulis, as
desired. Table~\ref{Table:Cliques} summarizes the molecular
Hamiltonians, along with the basis sets~\cite{SZA1989Oa} used in the
discretization. In order to apply simultaneous diagonalization we
first need to partition the Hamiltonian terms into sets of commuting
Paulis. For this we use two different greedy coloring strategies
(largest first, and independent set) implemented in
NetworkX~\cite{HAG2008SSa}, along with a custom implementation of a
greedy algorithm in which each of the Paulis is sequentially added to
the first set it commutes with, creating a new set if needed (this
approach, which was also given in~\cite{CRA2019SWPa}, has the
additional advantage that no graph needs to be constructed).  Overall,
as seen in Table~\ref{Table:Cliques}, the three different partitioning
strategies give similar results in terms of number of partitions, as
well as median and maximum partition size. The same applies across the
different encoding schemes, but we assume that these are given; the
partitioning scheme can be freely chosen. Note that the maximum
partition size can be much larger than the number of qubits (terms in
each of the Paulis). In some cases the NetworkX graph coloring
algorithms either ran out of memory or did not return a result in a
reasonable amount of time. Throughout the results we indicate those
cases are by dashes.

Once the terms in the Hamiltonian are partitioned into commuting sets
we can apply the different simulation algorithms to each of the
individual partitions. We compare the diagonalization-based approaches
with direct exponentiation. As before, we apply level-two circuit
optimization as provided in Qiskit to the direct exponentiation
approach only as it was found not to give any improvements in circuit
complexity for the diagonalization-based circuits. We additionally use
the opt strategy described in Section~\ref{Sec:ExpRandom} to determine
the order of the Paulis within each partition. For the \cz\ and direct
methods we additional allow the use of the rnd optimization
strategy. In the determination of the circuit complexity we assume
that the $R_z$ operators has a single-gate implementation. We
determine the total circuit depth by simple adding the depths of the
circuits for each of the partitions. It might be possible to reduce
the depth and single-qubit counts due to potential simplifications at
the circuit boundaries; we expect this reduction to be negligible.

\begin{table}
\vspace*{-20pt}\hspace*{-25pt}
\begin{tabular}{|l|rrrrrrrrrrr|}
\hline
{\bf{Method}}& BeH$_{2}$& C$_{2}$& H$_{2}$O& H$_{2}$O& H$_{2}$O& H$_{2}$O& H$_{2}$& H$_{2}$& HCl& LiH& NH$_{3}$\\[-5pt]
& {\tiny{STO3g}}& {\tiny{STO3g}}& {\tiny{6-31G*}}& {\tiny{6-31G}}& {\tiny{STO3g}}& {\tiny{ccpvdz}}& {\tiny{6-31G}}& {\tiny{STO3g}}& {\tiny{STO3g}}& {\tiny{STO3g}}& {\tiny{STO3g}}\\
\hline
cz&2,162&10,232&177,800&45,956&3,370&621,416&438&\cellcolor{highlight}{\color{darkblue}28}&19,236&1,980&9,272\\
cz-rnd&\cellcolor{highlight}{\color{darkblue}1,870}&\cellcolor{highlight}{\color{darkblue}8,930}&\cellcolor{highlight}{\color{darkblue}150,568}&\cellcolor{highlight}{\color{darkblue}39,086}&\cellcolor{highlight}{\color{darkblue}2,884}&\cellcolor{highlight}{\color{darkblue}507,468}&\cellcolor{highlight}{\color{darkblue}392}&\cellcolor{highlight}{\color{darkblue}28}&\cellcolor{highlight}{\color{darkblue}16,536}&\cellcolor{highlight}{\color{darkblue}1,678}&\cellcolor{highlight}{\color{darkblue}7,980}\\
cnot&2,968&14,856&311,792&71,412&4,402&1,177,624&530&34&28,156&2,616&12,672\\
cnot-log2&2,888&14,672&308,400&70,482&4,314&1,168,166&512&32&27,864&2,552&12,516\\
cnot-best&2,832&14,534&305,228&69,792&4,258&1,159,544&498&32&27,472&2,508&12,366\\
greedy-1&2,152&10,190&176,174&45,830&3,372&614,688&438&\cellcolor{highlight}{\color{darkblue}28}&19,084&1,980&9,216\\
greedy-2&2,226&10,358&180,292&46,496&3,412&628,566&434&30&19,606&2,036&9,354\\
direct&3,662&19,732&366,152&94,394&6,210&1,284,042&896&40&39,274&3,276&14,462\\
direct-rnd&3,352&19,390&365,486&93,864&5,750&1,283,034&744&36&38,894&2,882&14,052\\
\hline
\multicolumn{12}{c}{\cnot\ count}\\[8pt]
\hline
cz&2,632&12,252&197,968&53,748&4,144&675,152&597&\cellcolor{highlight}{\color{darkblue}45}&23,071&2,476&11,463\\
cz-rnd&\cellcolor{highlight}{\color{darkblue}2,328}&\cellcolor{highlight}{\color{darkblue}10,860}&\cellcolor{highlight}{\color{darkblue}169,937}&\cellcolor{highlight}{\color{darkblue}46,763}&\cellcolor{highlight}{\color{darkblue}3,645}&\cellcolor{highlight}{\color{darkblue}558,418}&\cellcolor{highlight}{\color{darkblue}547}&\cellcolor{highlight}{\color{darkblue}45}&\cellcolor{highlight}{\color{darkblue}20,282}&\cellcolor{highlight}{\color{darkblue}2,164}&\cellcolor{highlight}{\color{darkblue}10,112}\\
cnot&3,374&16,289&305,025&74,961&5,145&1,124,391&695&61&30,689&3,087&14,531\\
cnot-log2&3,346&16,262&306,391&75,059&5,118&1,130,922&701&60&30,746&3,058&14,469\\
cnot-best&3,317&16,196&306,892&74,851&5,097&1,135,219&697&60&30,593&3,047&14,414\\
greedy-1&2,639&12,343&200,344&54,211&4,170&689,060&598&\cellcolor{highlight}{\color{darkblue}45}&23,135&2,484&11,511\\
greedy-2&2,706&12,418&204,638&54,827&4,165&704,256&585&\cellcolor{highlight}{\color{darkblue}45}&23,574&2,503&11,549\\
direct&5,148&26,591&468,176&124,024&8,640&1,604,480&1,297&67&52,855&4,689&19,896\\
direct-rnd&4,818&26,216&467,617&123,631&8,243&1,603,611&1,140&63&52,427&4,248&19,365\\
\hline
\multicolumn{12}{c}{Circuit depth}\\[8pt]
\hline
cz&1,442&6,197&77,687&23,704&2,376&223,043&411&25&11,527&1,345&6,179\\
cz-rnd&1,442&6,197&77,687&23,704&2,376&223,043&411&25&11,527&1,345&6,179\\
cnot&2,582&10,105&117,795&37,168&4,056&323,863&733&67&18,555&2,425&10,259\\
cnot-log2&2,582&10,105&117,795&37,168&4,056&323,863&733&67&18,555&2,425&10,259\\
cnot-best&2,582&10,105&117,795&37,168&4,056&323,863&733&67&18,555&2,425&10,259\\
greedy-1&\cellcolor{highlight}{\color{darkblue}1,192}&\cellcolor{highlight}{\color{darkblue}5,401}&\cellcolor{highlight}{\color{darkblue}65,891}&\cellcolor{highlight}{\color{darkblue}20,440}&\cellcolor{highlight}{\color{darkblue}2,050}&\cellcolor{highlight}{\color{darkblue}193,249}&\cellcolor{highlight}{\color{darkblue}339}&\cellcolor{highlight}{\color{darkblue}21}&\cellcolor{highlight}{\color{darkblue}10,045}&\cellcolor{highlight}{\color{darkblue}1,117}&\cellcolor{highlight}{\color{darkblue}5,551}\\
greedy-2&1,398&5,623&67,945&21,112&2,236&194,863&403&25&10,411&1,321&5,667\\
direct&3,920&21,393&328,223&93,220&6,880&1,036,434&1,061&43&41,249&3,577&17,097\\
direct-rnd&3,696&21,014&327,977&93,138&6,436&1,036,446&935&43&40,833&3,165&16,621\\
\hline
cz&1,472&6,367&80,767&24,726&2,352&\cellcolor{highlight}{\color{darkblue}228,193}&\cellcolor{highlight}{\color{darkblue}281}&\cellcolor{highlight}{\color{darkblue}19}&11,787&1,411&\cellcolor{highlight}{\color{darkblue}5,747}\\
cz-rnd&1,472&6,367&80,767&24,726&2,352&\cellcolor{highlight}{\color{darkblue}228,193}&\cellcolor{highlight}{\color{darkblue}281}&\cellcolor{highlight}{\color{darkblue}19}&11,787&1,411&\cellcolor{highlight}{\color{darkblue}5,747}\\
cnot&2,642&10,589&118,335&38,334&4,088&326,919&657&71&18,995&2,577&10,811\\
cnot-log2&2,642&10,589&118,335&38,334&4,088&326,919&657&71&18,995&2,577&10,811\\
cnot-best&2,642&10,589&118,335&38,334&4,088&326,919&657&71&18,995&2,577&10,811\\
greedy-1&\cellcolor{highlight}{\color{darkblue}1,328}&\cellcolor{highlight}{\color{darkblue}5,901}&\cellcolor{highlight}{\color{darkblue}80,063}&\cellcolor{highlight}{\color{darkblue}23,966}&\cellcolor{highlight}{\color{darkblue}2,194}&232,743&\cellcolor{highlight}{\color{darkblue}281}&\cellcolor{highlight}{\color{darkblue}19}&\cellcolor{highlight}{\color{darkblue}11,105}&\cellcolor{highlight}{\color{darkblue}1,281}&5,759\\
greedy-2&1,402&6,339&80,657&24,824&2,302&232,527&\cellcolor{highlight}{\color{darkblue}281}&\cellcolor{highlight}{\color{darkblue}19}&11,721&1,425&5,959\\
direct&4,258&22,530&369,157&105,192&7,262&1,206,990&619&\cellcolor{highlight}{\color{darkblue}19}&44,943&3,667&15,715\\
direct-rnd&3,732&22,071&368,375&104,296&6,610&1,206,104&531&\cellcolor{highlight}{\color{darkblue}19}&44,080&3,349&15,150\\
\hline
cz&1,366&\cellcolor{highlight}{\color{darkblue}6,195}&\cellcolor{highlight}{\color{darkblue}79,305}&\cellcolor{highlight}{\color{darkblue}24,224}&\cellcolor{highlight}{\color{darkblue}2,034}&\cellcolor{highlight}{\color{darkblue}231,859}&\cellcolor{highlight}{\color{darkblue}297}&\cellcolor{highlight}{\color{darkblue}19}&\cellcolor{highlight}{\color{darkblue}10,737}&1,243&\cellcolor{highlight}{\color{darkblue}5,757}\\
cz-rnd&1,366&\cellcolor{highlight}{\color{darkblue}6,195}&\cellcolor{highlight}{\color{darkblue}79,305}&\cellcolor{highlight}{\color{darkblue}24,224}&\cellcolor{highlight}{\color{darkblue}2,034}&\cellcolor{highlight}{\color{darkblue}231,859}&\cellcolor{highlight}{\color{darkblue}297}&\cellcolor{highlight}{\color{darkblue}19}&\cellcolor{highlight}{\color{darkblue}10,737}&1,243&\cellcolor{highlight}{\color{darkblue}5,757}\\
cnot&2,646&10,719&118,173&38,272&4,020&329,561&633&71&18,259&2,689&10,371\\
cnot-log2&2,646&10,719&118,173&38,272&4,020&329,561&633&71&18,259&2,689&10,371\\
cnot-best&2,646&10,719&118,173&38,272&4,020&329,561&633&71&18,259&2,689&10,371\\
greedy-1&\cellcolor{highlight}{\color{darkblue}1,334}&6,591&85,931&25,296&2,048&256,419&313&\cellcolor{highlight}{\color{darkblue}19}&11,353&\cellcolor{highlight}{\color{darkblue}1,199}&6,099\\
greedy-2&1,414&6,693&85,821&25,728&2,158&253,069&309&\cellcolor{highlight}{\color{darkblue}19}&11,595&1,295&6,145\\
direct&3,710&20,131&340,798&95,026&6,020&1,115,800&625&\cellcolor{highlight}{\color{darkblue}19}&39,977&3,277&15,926\\
direct-rnd&3,392&19,599&340,765&94,680&5,678&1,115,394&559&\cellcolor{highlight}{\color{darkblue}19}&39,058&2,841&14,978\\
\hline
\multicolumn{12}{c}{Single-qubit count}
\end{tabular}
\caption{Results based on the greedy sequential partitioning method,
  with the \cnot\ count and circuit depth for the Jordan-Wigner
  encoding, as well as the single-qubit count for the Jordan-Wigner,
  Bravyi-Kitaev, and parity encodings, respectively. The best values
  are highlighted.}\label{Table:SequentialJW}
\end{table}

The circuit complexity when partitioning the Hamiltonians with the
greedy sequential approach is shown in
Table~\ref{Table:SequentialJW}. The first thing we note is that the
\cnot{}-based diagonalization performs substantially worse than both
the \cz{} and greedy-based approaches, in stark contrast to the
results on random Paulis in Section~\ref{Sec:ExpRandom}, where the
performance of the \cnot\ approach closely matched that of \cz{}. This
could be caused by the fill-in during normalization the Z block, which
is present in the \cnot\ approach, but absent in the other two
diagonalization approaches. Despite its relatively poor performance,
the \cnot\ method still consistently outperforms direct exponentiation
in terms of circuit depth and \cnot\ count. This difference is much
more substantial for the \cz{} and greedy approaches, where we see
reductions of up to $50\%$. Application of the rnd optimization
strategy shows good improvements for \cz\ diagonalization. For the
direct method, however, the improvements are only marginal.

When applied to randomized Hamiltonians, we saw that the greedy
approach generally required fewer \cnot\ gates than the \cz\
approach. We concluded that this was mostly due to the \cnot\ count in
the diagonalization part of the circuit, rather than exponentiation
part. For the experiments here we see that the difference between the
different methods is minimal at best.  A similar pattern emerges for
the remaining experiments using different partitioning algorithms. The
\cnot\ counts and circuit depths for these experiments are summarized
in Tables~\ref{Table:SimulateOtherCNot}
and~\ref{Table:SimulateOtherDepth}, respectively. Due to their poor
relative performance we omit the results for \cnot{}-based
diagonalization, and also leave out the single-qubit counts, as these
are very similar to the ones given in Table~\ref{Table:SequentialJW}.

Somewhat surprisingly, we see that across the different simulation
methods the results for independent-set greedy partitioning are
substantially worse than those of the other two partitioning methods,
despite the similarity of the partition metrics shown in
Table~\ref{Table:Cliques}.  To get a better understanding of what
causes this difference we plot the number of \cnot\ gates for the
diagonalization circuit for each of the partitions against the size of
the partition. The resulting plots, shown in
Figure~\ref{Fig:PartitioningCNot}, indicate that the largest first and
sequential strategies behave very similar. The independent set
coloring strategy on the other hand often requires a substantial
larger number of \cnot\ gates for small partitions. This difference is
seen across all molecules and encoding schemes, but is especially
apparent for the Jordan-Wigner encoding. Given that, among the three
coloring strategies considered, the independent-set strategy is the
most time-consuming anyway, we would not recommend its use in this
setting.

Overall, we see from the results in
Tables~\ref{Table:SimulateOtherCNot}
and~\ref{Table:SimulateOtherDepth} that the circuits for the
Hamiltonians based on the Jordan-Wigner encoding tend to be simpler
than those for the Bravyi-Kitaev and parity encodings. Finally, we
note that each subset of commuting Paulis can be simulated
independently. It is therefore possible to choose a different method
per partition. For instance, we could select the direct method for
some partitions and the diagonalization-based method for others. To
implement this idea we modified the experiments based on
diagonalization such that the direct method was used if it was found
to have a circuit with fewer \cnot\ gates. The improvements obtained
with this approach were very minor and in fact showed that in most
cases the diagonalized-based approach was not outperformed by the
direct method on any of partitions.

\begin{table}
\hspace*{-21pt}
\begin{tabular}{|l|rrrrrrrrrrr|}
\hline
{\bf{Method}}& BeH$_{2}$& C$_{2}$& H$_{2}$O& H$_{2}$O& H$_{2}$O& H$_{2}$O& H$_{2}$& H$_{2}$& HCl& LiH& NH$_{3}$\\[-5pt]
& {\tiny{STO3g}}& {\tiny{STO3g}}& {\tiny{6-31G*}}& {\tiny{6-31G}}& {\tiny{STO3g}}& {\tiny{ccpvdz}}& {\tiny{6-31G}}& {\tiny{STO3g}}& {\tiny{STO3g}}& {\tiny{STO3g}}& {\tiny{STO3g}}\\
\hline
cz&2,162&10,232&177,800&45,956&\cellcolor{highlight}{\color{darkblue}3,370}&621,416&438&\cellcolor{highlight}{\color{darkblue}28}&19,236&\cellcolor{highlight}{\color{darkblue}1,980}&9,272\\
greedy-1&\cellcolor{highlight}{\color{darkblue}2,152}&\cellcolor{highlight}{\color{darkblue}10,190}&\cellcolor{highlight}{\color{darkblue}176,174}&\cellcolor{highlight}{\color{darkblue}45,830}&3,372&\cellcolor{highlight}{\color{darkblue}614,688}&438&\cellcolor{highlight}{\color{darkblue}28}&\cellcolor{highlight}{\color{darkblue}19,084}&\cellcolor{highlight}{\color{darkblue}1,980}&\cellcolor{highlight}{\color{darkblue}9,216}\\
greedy-2&2,226&10,358&180,292&46,496&3,412&628,566&\cellcolor{highlight}{\color{darkblue}434}&30&19,606&2,036&9,354\\
direct&3,662&19,732&366,152&94,394&6,210&1,284,042&896&40&39,274&3,276&14,462\\
\hline
cz&2,682&14,492&237,834&63,612&4,316&728,514&478&\cellcolor{highlight}{\color{darkblue}24}&27,238&2,602&12,524\\
greedy-1&\cellcolor{highlight}{\color{darkblue}2,618}&\cellcolor{highlight}{\color{darkblue}13,932}&\cellcolor{highlight}{\color{darkblue}223,976}&\cellcolor{highlight}{\color{darkblue}61,042}&\cellcolor{highlight}{\color{darkblue}4,264}&\cellcolor{highlight}{\color{darkblue}692,734}&478&\cellcolor{highlight}{\color{darkblue}24}&\cellcolor{highlight}{\color{darkblue}26,114}&\cellcolor{highlight}{\color{darkblue}2,552}&\cellcolor{highlight}{\color{darkblue}12,144}\\
greedy-2&2,678&14,056&228,688&62,740&4,380&704,886&\cellcolor{highlight}{\color{darkblue}450}&\cellcolor{highlight}{\color{darkblue}24}&26,672&2,566&12,238\\
direct&4,162&23,032&391,524&108,832&7,256&1,301,306&788&30&46,888&3,862&15,928\\
\hline
cz&2,778&13,350&230,076&61,308&4,302&777,212&488&\cellcolor{highlight}{\color{darkblue}26}&24,096&2,492&12,122\\
greedy-1&\cellcolor{highlight}{\color{darkblue}2,716}&\cellcolor{highlight}{\color{darkblue}12,628}&\cellcolor{highlight}{\color{darkblue}211,994}&\cellcolor{highlight}{\color{darkblue}57,628}&\cellcolor{highlight}{\color{darkblue}4,200}&\cellcolor{highlight}{\color{darkblue}714,516}&\cellcolor{highlight}{\color{darkblue}484}&\cellcolor{highlight}{\color{darkblue}26}&\cellcolor{highlight}{\color{darkblue}23,182}&\cellcolor{highlight}{\color{darkblue}2,442}&\cellcolor{highlight}{\color{darkblue}11,658}\\
greedy-2&2,792&12,996&217,898&59,424&4,226&734,266&486&\cellcolor{highlight}{\color{darkblue}26}&23,792&2,450&11,730\\
direct&4,250&22,062&423,198&111,836&6,964&1,509,184&844&32&45,018&3,658&16,434\\
\hline
\multicolumn{12}{c}{sequential}\\[6pt]
\hline
cz&2,140&10,040&179,120&\cellcolor{highlight}{\color{darkblue}45,912}&3,296&--&\cellcolor{highlight}{\color{darkblue}418}&\cellcolor{highlight}{\color{darkblue}28}&17,944&2,034&9,116\\
greedy-1&\cellcolor{highlight}{\color{darkblue}2,136}&9,966&\cellcolor{highlight}{\color{darkblue}177,984}&46,060&\cellcolor{highlight}{\color{darkblue}3,286}&--&\cellcolor{highlight}{\color{darkblue}418}&\cellcolor{highlight}{\color{darkblue}28}&\cellcolor{highlight}{\color{darkblue}17,738}&\cellcolor{highlight}{\color{darkblue}2,032}&\cellcolor{highlight}{\color{darkblue}9,082}\\
greedy-2&2,204&\cellcolor{highlight}{\color{darkblue}9,908}&182,060&47,012&3,358&--&420&30&18,204&2,100&9,246\\
direct&3,542&19,512&378,236&95,912&5,920&--&864&40&38,596&3,296&14,730\\
\hline
cz&2,806&13,988&231,924&60,952&4,340&--&\cellcolor{highlight}{\color{darkblue}468}&\cellcolor{highlight}{\color{darkblue}24}&26,330&2,556&12,026\\
greedy-1&\cellcolor{highlight}{\color{darkblue}2,752}&\cellcolor{highlight}{\color{darkblue}13,500}&\cellcolor{highlight}{\color{darkblue}222,066}&\cellcolor{highlight}{\color{darkblue}58,570}&\cellcolor{highlight}{\color{darkblue}4,276}&--&\cellcolor{highlight}{\color{darkblue}468}&\cellcolor{highlight}{\color{darkblue}24}&\cellcolor{highlight}{\color{darkblue}25,474}&\cellcolor{highlight}{\color{darkblue}2,536}&\cellcolor{highlight}{\color{darkblue}11,728}\\
greedy-2&2,842&13,706&227,460&60,128&4,294&--&472&\cellcolor{highlight}{\color{darkblue}24}&25,802&2,540&11,858\\
direct&4,340&23,018&395,596&108,644&7,064&--&788&30&46,152&3,904&16,154\\
\hline
cz&2,838&12,630&224,882&58,914&4,310&--&534&\cellcolor{highlight}{\color{darkblue}26}&23,638&2,462&12,218\\
greedy-1&\cellcolor{highlight}{\color{darkblue}2,724}&\cellcolor{highlight}{\color{darkblue}12,208}&\cellcolor{highlight}{\color{darkblue}210,390}&\cellcolor{highlight}{\color{darkblue}55,478}&\cellcolor{highlight}{\color{darkblue}4,200}&--&\cellcolor{highlight}{\color{darkblue}520}&\cellcolor{highlight}{\color{darkblue}26}&\cellcolor{highlight}{\color{darkblue}22,656}&\cellcolor{highlight}{\color{darkblue}2,404}&\cellcolor{highlight}{\color{darkblue}11,684}\\
greedy-2&2,804&12,494&218,376&57,012&4,252&--&522&\cellcolor{highlight}{\color{darkblue}26}&23,338&2,420&11,914\\
direct&4,168&21,594&438,016&111,836&7,066&--&802&32&44,208&3,570&16,320\\
\hline
\multicolumn{12}{c}{largest first}\\[6pt]
\hline
cz&3,124&18,490&--&94,094&5,178&--&522&\cellcolor{highlight}{\color{darkblue}28}&38,610&2,764&16,154\\
greedy-1&2,862&\cellcolor{highlight}{\color{darkblue}16,528}&--&\cellcolor{highlight}{\color{darkblue}84,402}&\cellcolor{highlight}{\color{darkblue}4,754}&--&\cellcolor{highlight}{\color{darkblue}500}&\cellcolor{highlight}{\color{darkblue}28}&\cellcolor{highlight}{\color{darkblue}33,648}&2,476&\cellcolor{highlight}{\color{darkblue}14,572}\\
greedy-2&\cellcolor{highlight}{\color{darkblue}2,810}&16,828&--&84,956&4,760&--&502&30&33,790&\cellcolor{highlight}{\color{darkblue}2,460}&14,578\\
direct&4,596&26,876&--&144,680&7,708&--&940&48&55,002&4,142&19,430\\
\hline
cz&3,192&20,096&--&100,522&5,764&--&510&\cellcolor{highlight}{\color{darkblue}24}&42,154&2,808&16,374\\
greedy-1&\cellcolor{highlight}{\color{darkblue}2,988}&\cellcolor{highlight}{\color{darkblue}18,214}&--&\cellcolor{highlight}{\color{darkblue}90,070}&5,434&--&504&\cellcolor{highlight}{\color{darkblue}24}&\cellcolor{highlight}{\color{darkblue}37,584}&2,706&15,110\\
greedy-2&3,056&18,388&--&91,384&\cellcolor{highlight}{\color{darkblue}5,430}&--&\cellcolor{highlight}{\color{darkblue}490}&\cellcolor{highlight}{\color{darkblue}24}&37,874&\cellcolor{highlight}{\color{darkblue}2,654}&\cellcolor{highlight}{\color{darkblue}15,044}\\
direct&5,028&29,124&--&145,340&8,500&--&876&30&59,796&4,582&19,402\\
\hline
cz&2,930&20,270&--&100,352&5,478&--&\cellcolor{highlight}{\color{darkblue}512}&\cellcolor{highlight}{\color{darkblue}26}&39,898&2,538&16,252\\
greedy-1&\cellcolor{highlight}{\color{darkblue}2,774}&\cellcolor{highlight}{\color{darkblue}17,724}&--&\cellcolor{highlight}{\color{darkblue}88,224}&\cellcolor{highlight}{\color{darkblue}5,136}&--&520&\cellcolor{highlight}{\color{darkblue}26}&\cellcolor{highlight}{\color{darkblue}34,830}&\cellcolor{highlight}{\color{darkblue}2,434}&14,686\\
greedy-2&2,778&17,908&--&89,642&5,180&--&524&\cellcolor{highlight}{\color{darkblue}26}&35,304&2,488&\cellcolor{highlight}{\color{darkblue}14,676}\\
direct&4,938&28,008&--&149,600&8,164&--&820&32&57,350&4,274&20,216\\
\hline
\multicolumn{12}{c}{independent set}
\end{tabular}
\caption{The \cnot\ counts for different exponentiation methods for
  the sequential, largest-first, and independent-set partitioning
  methods. The results per method correspond to different encodings
  given by, from top to bottom, Jordan-Wigner, Bravyi-Kitaev, and
  parity. The best values are
  highlighted.}\label{Table:SimulateOtherCNot}
\end{table}

\begin{table}
\hspace*{-26pt}
\begin{tabular}{|l|rrrrrrrrrrr|}
\hline
{\bf{Method}}& BeH$_{2}$& C$_{2}$& H$_{2}$O& H$_{2}$O& H$_{2}$O& H$_{2}$O& H$_{2}$& H$_{2}$& HCl& LiH& NH$_{3}$\\[-5pt]
& {\tiny{STO3g}}& {\tiny{STO3g}}& {\tiny{6-31G*}}& {\tiny{6-31G}}& {\tiny{STO3g}}& {\tiny{ccpvdz}}& {\tiny{6-31G}}& {\tiny{STO3g}}& {\tiny{STO3g}}& {\tiny{STO3g}}& {\tiny{STO3g}}\\
\hline
cz&\cellcolor{highlight}{\color{darkblue}2,632}&\cellcolor{highlight}{\color{darkblue}12,252}&\cellcolor{highlight}{\color{darkblue}197,968}&\cellcolor{highlight}{\color{darkblue}53,748}&\cellcolor{highlight}{\color{darkblue}4,144}&\cellcolor{highlight}{\color{darkblue}675,152}&597&\cellcolor{highlight}{\color{darkblue}45}&\cellcolor{highlight}{\color{darkblue}23,071}&\cellcolor{highlight}{\color{darkblue}2,476}&\cellcolor{highlight}{\color{darkblue}11,463}\\
greedy-1&2,639&12,343&200,344&54,211&4,170&689,060&598&\cellcolor{highlight}{\color{darkblue}45}&23,135&2,484&11,511\\
greedy-2&2,706&12,418&204,638&54,827&4,165&704,256&\cellcolor{highlight}{\color{darkblue}585}&\cellcolor{highlight}{\color{darkblue}45}&23,574&2,503&11,549\\
direct&5,148&26,591&468,176&124,024&8,640&1,604,480&1,297&67&52,855&4,689&19,896\\
\hline
cz&\cellcolor{highlight}{\color{darkblue}3,106}&\cellcolor{highlight}{\color{darkblue}15,385}&\cellcolor{highlight}{\color{darkblue}232,718}&\cellcolor{highlight}{\color{darkblue}66,870}&\cellcolor{highlight}{\color{darkblue}5,031}&\cellcolor{highlight}{\color{darkblue}713,895}&665&\cellcolor{highlight}{\color{darkblue}40}&\cellcolor{highlight}{\color{darkblue}29,451}&\cellcolor{highlight}{\color{darkblue}3,012}&\cellcolor{highlight}{\color{darkblue}14,476}\\
greedy-1&3,126&15,760&241,321&68,108&5,128&747,577&669&\cellcolor{highlight}{\color{darkblue}40}&29,949&3,045&14,510\\
greedy-2&3,127&15,778&246,652&69,765&5,170&761,381&\cellcolor{highlight}{\color{darkblue}632}&\cellcolor{highlight}{\color{darkblue}40}&30,479&3,035&14,575\\
direct&5,462&28,990&480,574&135,023&9,339&1,579,383&1,082&47&58,898&5,055&20,482\\
\hline
cz&\cellcolor{highlight}{\color{darkblue}3,211}&\cellcolor{highlight}{\color{darkblue}14,547}&\cellcolor{highlight}{\color{darkblue}222,324}&\cellcolor{highlight}{\color{darkblue}64,066}&\cellcolor{highlight}{\color{darkblue}5,067}&\cellcolor{highlight}{\color{darkblue}723,684}&\cellcolor{highlight}{\color{darkblue}675}&\cellcolor{highlight}{\color{darkblue}42}&\cellcolor{highlight}{\color{darkblue}27,072}&2,972&\cellcolor{highlight}{\color{darkblue}14,006}\\
greedy-1&3,226&14,794&230,178&65,109&5,077&761,581&691&\cellcolor{highlight}{\color{darkblue}42}&27,548&2,979&14,199\\
greedy-2&3,263&15,023&236,216&66,650&5,078&783,561&685&\cellcolor{highlight}{\color{darkblue}42}&28,017&\cellcolor{highlight}{\color{darkblue}2,942}&14,081\\
direct&5,482&27,492&505,346&135,925&8,905&1,768,728&1,109&49&55,852&4,816&20,747\\
\hline
\multicolumn{12}{c}{sequential}\\[6pt]
\hline
cz&2,637&12,047&\cellcolor{highlight}{\color{darkblue}202,774}&\cellcolor{highlight}{\color{darkblue}54,353}&\cellcolor{highlight}{\color{darkblue}4,114}&--&\cellcolor{highlight}{\color{darkblue}578}&\cellcolor{highlight}{\color{darkblue}45}&22,043&\cellcolor{highlight}{\color{darkblue}2,526}&\cellcolor{highlight}{\color{darkblue}11,410}\\
greedy-1&\cellcolor{highlight}{\color{darkblue}2,625}&12,073&205,027&54,882&4,143&--&582&\cellcolor{highlight}{\color{darkblue}45}&\cellcolor{highlight}{\color{darkblue}21,995}&2,534&11,443\\
greedy-2&2,669&\cellcolor{highlight}{\color{darkblue}11,976}&208,817&55,735&4,169&--&587&\cellcolor{highlight}{\color{darkblue}45}&22,348&2,572&11,541\\
direct&5,055&25,724&472,407&124,417&8,341&--&1,201&67&51,747&4,720&19,977\\
\hline
cz&3,249&\cellcolor{highlight}{\color{darkblue}15,164}&\cellcolor{highlight}{\color{darkblue}235,537}&\cellcolor{highlight}{\color{darkblue}65,397}&\cellcolor{highlight}{\color{darkblue}5,087}&--&658&\cellcolor{highlight}{\color{darkblue}40}&\cellcolor{highlight}{\color{darkblue}29,047}&\cellcolor{highlight}{\color{darkblue}2,976}&\cellcolor{highlight}{\color{darkblue}14,158}\\
greedy-1&\cellcolor{highlight}{\color{darkblue}3,220}&15,452&242,472&66,241&5,113&--&\cellcolor{highlight}{\color{darkblue}657}&\cellcolor{highlight}{\color{darkblue}40}&29,471&3,020&14,186\\
greedy-2&3,267&15,697&248,194&67,843&5,102&--&659&\cellcolor{highlight}{\color{darkblue}40}&29,630&2,988&14,238\\
direct&5,623&28,708&481,724&134,483&9,149&--&1,084&47&58,032&5,176&20,661\\
\hline
cz&\cellcolor{highlight}{\color{darkblue}3,243}&\cellcolor{highlight}{\color{darkblue}14,178}&\cellcolor{highlight}{\color{darkblue}226,834}&\cellcolor{highlight}{\color{darkblue}63,108}&\cellcolor{highlight}{\color{darkblue}5,056}&--&723&\cellcolor{highlight}{\color{darkblue}42}&\cellcolor{highlight}{\color{darkblue}26,766}&2,962&\cellcolor{highlight}{\color{darkblue}14,132}\\
greedy-1&3,245&14,492&231,382&63,627&5,102&--&720&\cellcolor{highlight}{\color{darkblue}42}&26,975&\cellcolor{highlight}{\color{darkblue}2,950}&14,185\\
greedy-2&3,319&14,700&239,259&65,045&5,151&--&\cellcolor{highlight}{\color{darkblue}713}&\cellcolor{highlight}{\color{darkblue}42}&27,482&2,953&14,233\\
direct&5,421&26,792&514,274&134,831&9,000&--&1,082&49&54,649&4,717&20,585\\
\hline
\multicolumn{12}{c}{largest first}\\[6pt]
\hline
cz&3,445&\cellcolor{highlight}{\color{darkblue}17,245}&--&\cellcolor{highlight}{\color{darkblue}81,963}&5,575&--&666&\cellcolor{highlight}{\color{darkblue}45}&\cellcolor{highlight}{\color{darkblue}34,713}&3,150&\cellcolor{highlight}{\color{darkblue}16,186}\\
greedy-1&3,397&18,131&--&89,008&5,544&--&667&\cellcolor{highlight}{\color{darkblue}45}&36,317&2,997&16,567\\
greedy-2&\cellcolor{highlight}{\color{darkblue}3,316}&18,485&--&90,229&\cellcolor{highlight}{\color{darkblue}5,521}&--&\cellcolor{highlight}{\color{darkblue}645}&\cellcolor{highlight}{\color{darkblue}45}&36,636&\cellcolor{highlight}{\color{darkblue}2,996}&16,682\\
direct&5,688&31,457&--&162,892&9,457&--&1,241&75&63,620&5,135&23,390\\
\hline
cz&\cellcolor{highlight}{\color{darkblue}3,369}&\cellcolor{highlight}{\color{darkblue}18,867}&--&\cellcolor{highlight}{\color{darkblue}87,745}&\cellcolor{highlight}{\color{darkblue}6,058}&--&686&\cellcolor{highlight}{\color{darkblue}40}&\cellcolor{highlight}{\color{darkblue}38,437}&3,167&\cellcolor{highlight}{\color{darkblue}16,917}\\
greedy-1&3,513&19,784&--&94,168&6,239&--&700&\cellcolor{highlight}{\color{darkblue}40}&40,071&3,218&17,381\\
greedy-2&3,559&20,031&--&96,185&6,153&--&\cellcolor{highlight}{\color{darkblue}673}&\cellcolor{highlight}{\color{darkblue}40}&40,559&\cellcolor{highlight}{\color{darkblue}3,151}&17,357\\
direct&6,172&33,873&--&164,848&10,275&--&1,156&47&68,945&5,638&23,333\\
\hline
cz&3,338&\cellcolor{highlight}{\color{darkblue}18,886}&--&\cellcolor{highlight}{\color{darkblue}90,278}&\cellcolor{highlight}{\color{darkblue}5,920}&--&\cellcolor{highlight}{\color{darkblue}681}&\cellcolor{highlight}{\color{darkblue}42}&\cellcolor{highlight}{\color{darkblue}37,505}&3,002&16,932\\
greedy-1&\cellcolor{highlight}{\color{darkblue}3,300}&19,683&--&93,367&5,936&--&721&\cellcolor{highlight}{\color{darkblue}42}&38,179&\cellcolor{highlight}{\color{darkblue}2,999}&16,961\\
greedy-2&3,304&19,882&--&95,243&5,956&--&720&\cellcolor{highlight}{\color{darkblue}42}&38,644&3,070&\cellcolor{highlight}{\color{darkblue}16,893}\\
direct&6,068&32,496&--&168,310&9,856&--&1,082&49&66,235&5,335&24,195\\
\hline
\multicolumn{12}{c}{independent set}
\end{tabular}
\caption{The circuit depth for different exponentiation methods for
  the sequential, largest-first, and independent-set partitioning
  methods. The results per method correspond to different encodings
  given by, from top to bottom, Jordan-Wigner, Bravyi-Kitaev, and
  parity. The best values are
  highlighted.}\label{Table:SimulateOtherDepth}
\end{table}

\begin{figure}
\centering
\begin{tabular}{ccc}
\hspace*{23pt}{\small{Bravyi-Kitaev}} & \hspace*{23pt}{\small{Jordan-Wigner}} & \hspace*{23pt}{\small{parity}}\\[1pt]
\includegraphics[width=0.3\textwidth]{./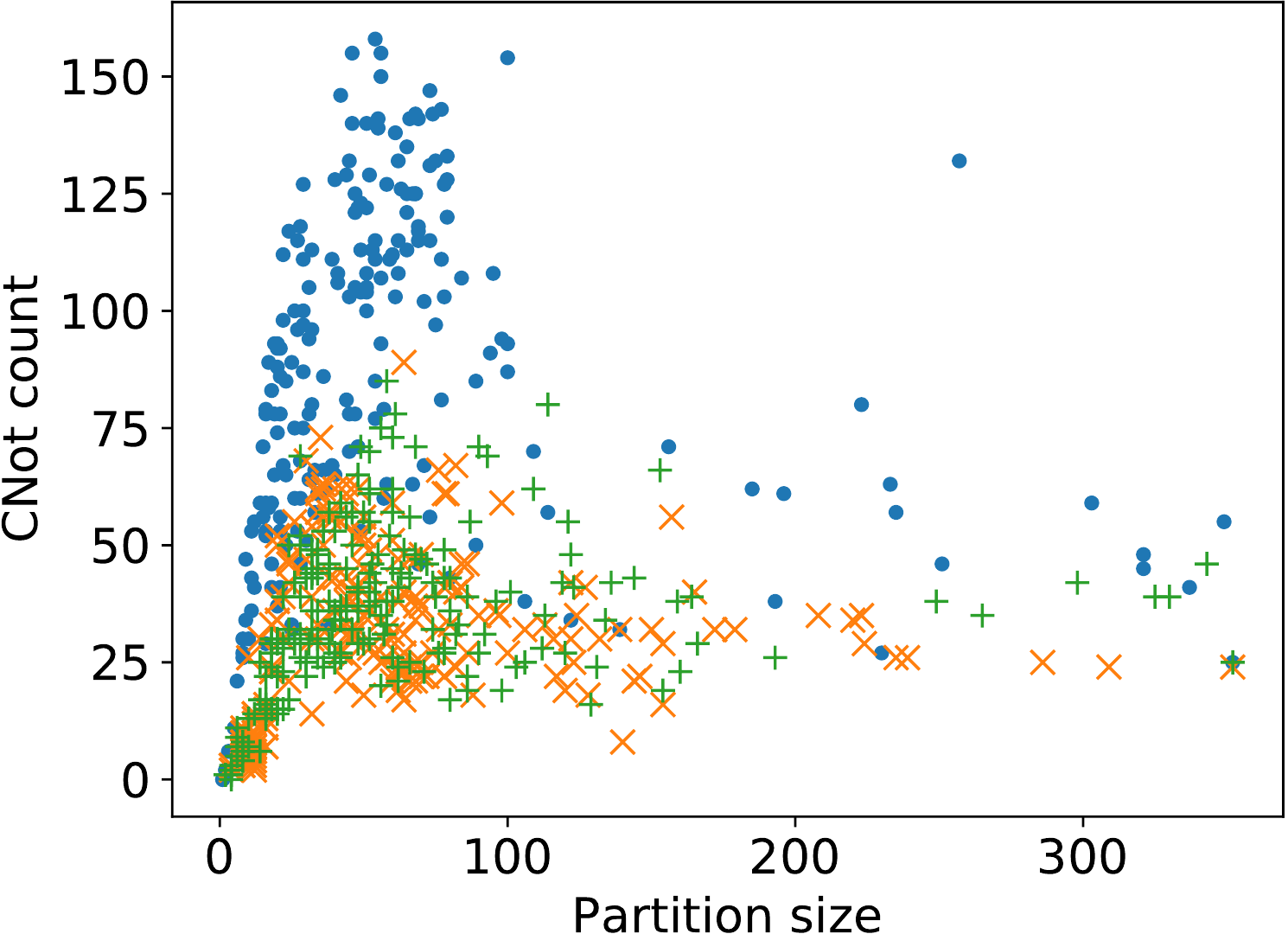}&
\includegraphics[width=0.3\textwidth]{./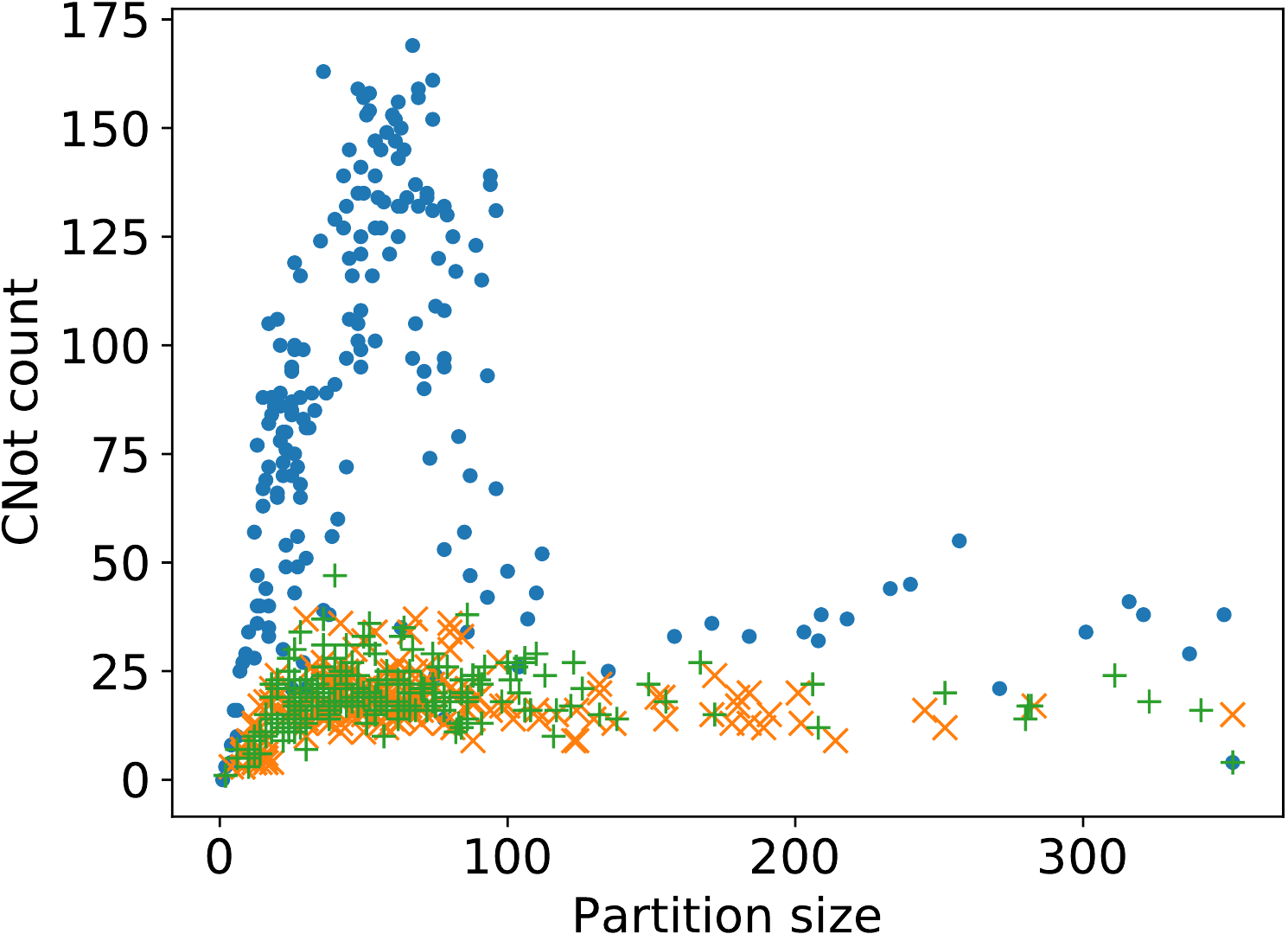}&
\includegraphics[width=0.3\textwidth]{./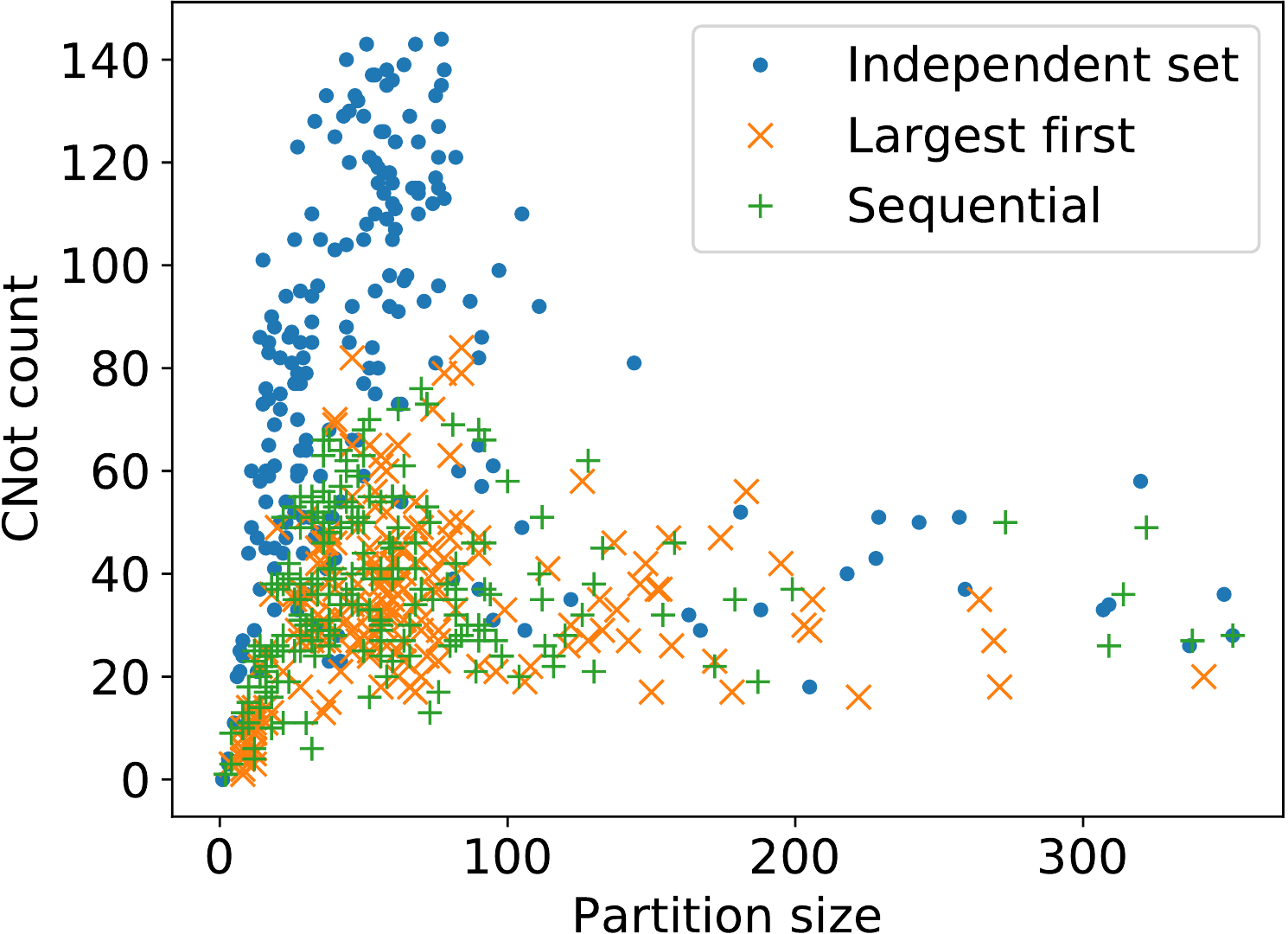}\\[3pt]
\includegraphics[width=0.3\textwidth]{./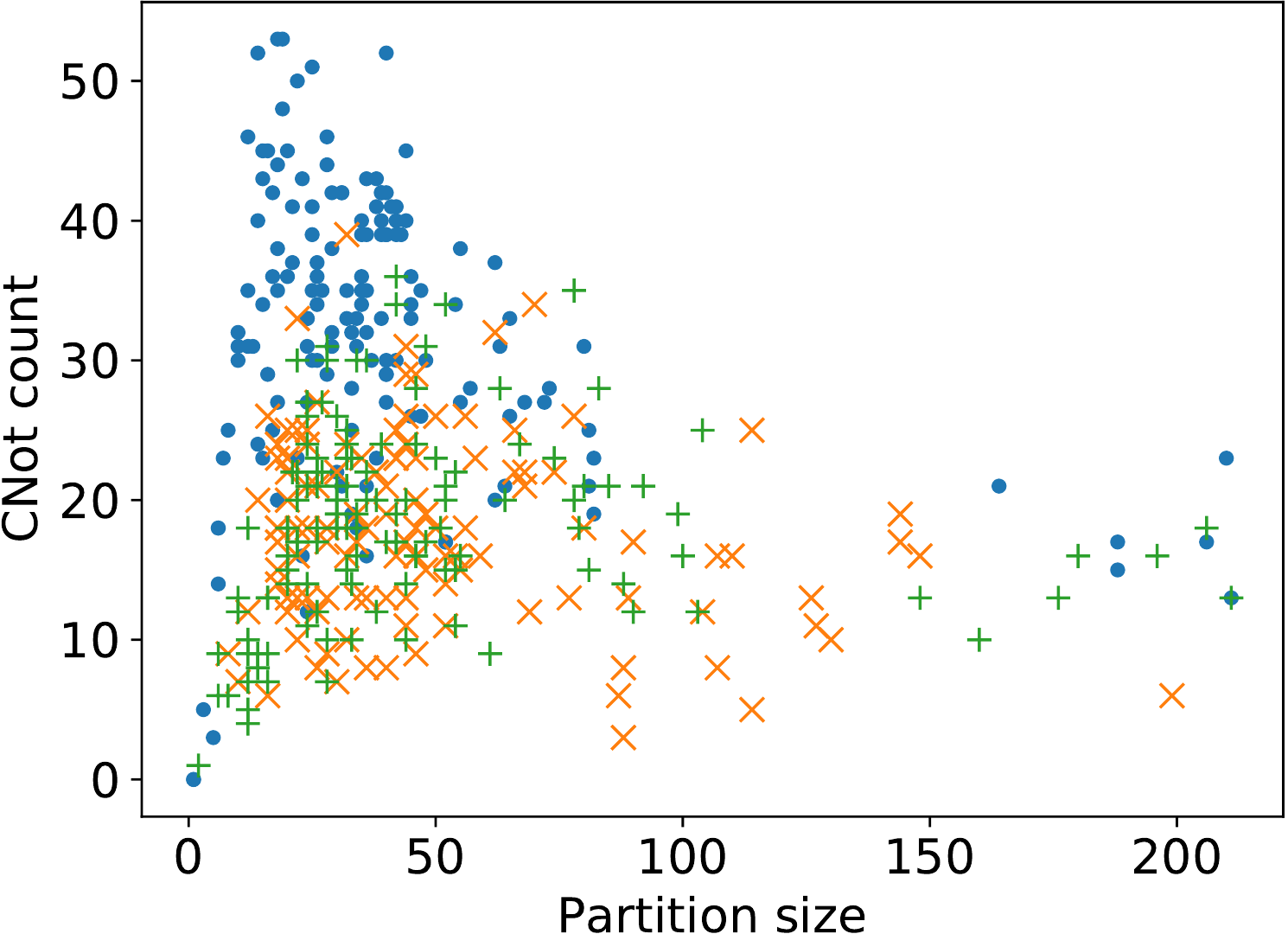}&
\includegraphics[width=0.3\textwidth]{./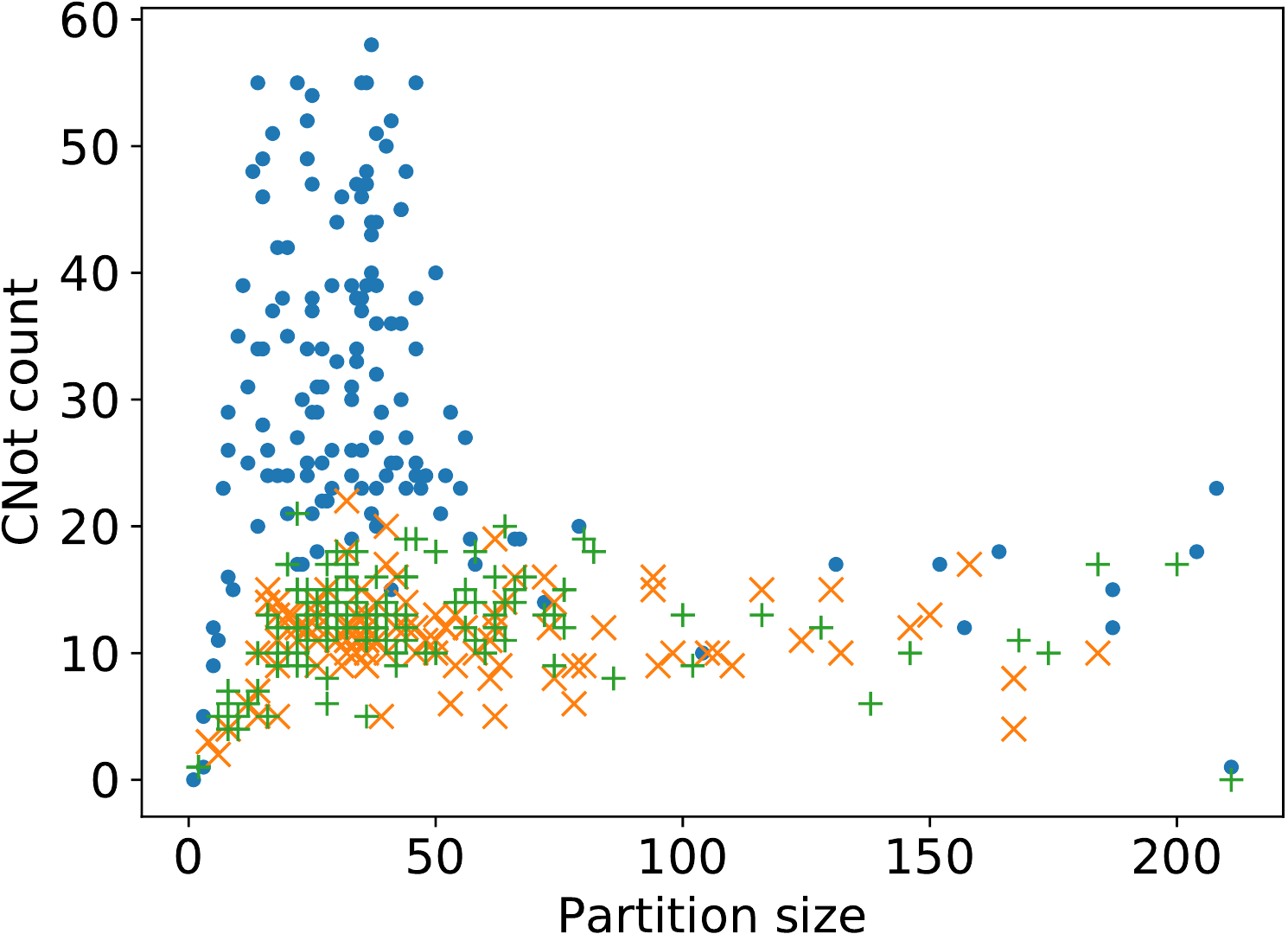}&
\includegraphics[width=0.3\textwidth]{./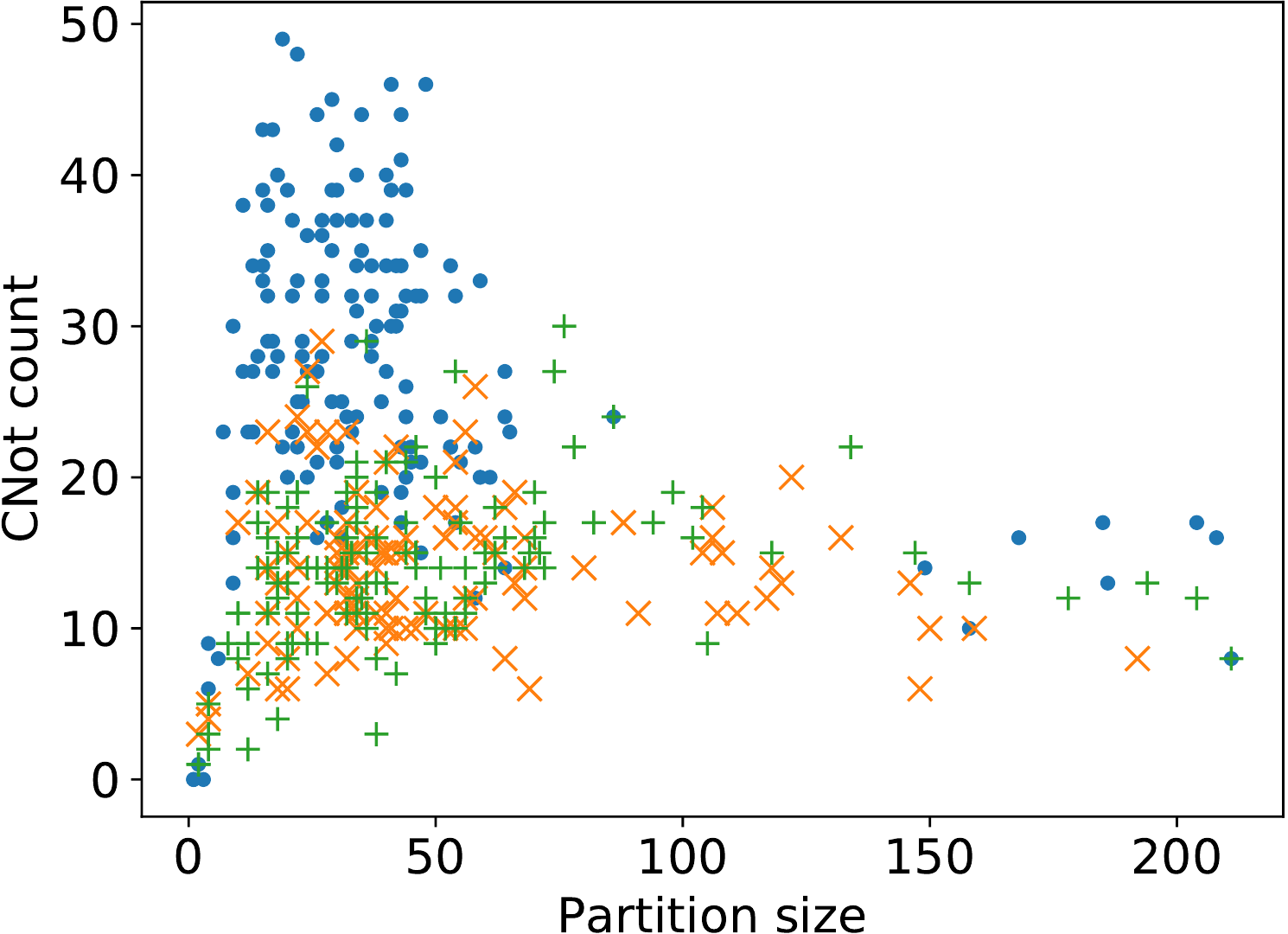}\\[3pt]
\end{tabular}
\caption{The \cnot\ count for the diagonalization circuit for each
  partition, plotted against the partition size for different graph
  coloring strategies. The top and bottom plots show the results for
  H$_2$O using the 6-31G basis and \cz\ diagonalization, and HCl using
  the STO3g basis and greedy-2 diagonalization respectively, for
  different encodings.}\label{Fig:PartitioningCNot}
\end{figure}

\section{Conclusions}\label{Sec:Conclusions}

In this paper we have shown that the use of simultaneous
diagonalization for Hamiltonian simulation can yield substantial
reduction of circuit complexity in terms of \cnot\ count and circuit
depth, compared to direct exponentiation of the individual Pauli
operators; to the best of our knowledge, this is the first time
simultaneous diagonalization has been used to reduce the circuit
complexity in Hamiltonian simulation. The proposed approach first
partitions the Pauli operators into sets of mutually commuting
operators. We used two different strategies provided by the NetworkX
package (independent set and largest first) and compared them against
a pure greedy scheme in which Paulis are added sequentially to the
first partition whose elements it commutes with. Given the need to
instantiate the entire commutativity graph in NetworkX prior to
coloring, the latter strategy is clearly favorable in terms of
computational complexity. For synthetic test problems we found the
three strategies to have similar performance, but a clear difference
was found in application to problems in quantum chemistry, where the
independent-set strategy was found to perform substantially worse
compared to the other two.

The next step is to generate circuits that simultaneously diagonalize
the operators in each set of commuting operators in the
partition. This can be done by representing the Pauli operators in a
tableau form consisting of X and Z blocks, along with a sign vector.
The operators are diagonalized when all entries in the X block are
eliminated using appropriate Clifford operators along with row and
column manipulations.  We introduce novel elimination schemes that
first diagonalize the X block using row operations and Hadamard gates
only. When applied to tableaus with full column rank, the resulting
schemes give circuits consisting of sequence of H-S-CZ-H and
H-S-CX-S-H gates respectively. The introduction of column-based
elimination of entries in the Z block can help reduce the \cnot\
count, and may have separate application in representing stabilizer
states.

We apply the proposed techniques to random sets of commuting Pauli
operators as well as practical problems arising in quantum
chemistry. To facilitate the generation of random test problems we
introduce an efficient new algorithm for uniformly sampling generator
sets of commuting Paulis. The resulting insights also lead to a
compact and unique representation in the form of a binary
$n\times n+2$ matrix for sets of commuting Paulis that can be
generated using the same generator set. This construction can also be
used in the representation of stabilizer states. For the chemistry
problems we show that the \cnot\ count can be reduced by a factor of
two to three compared to the direct approach. The circuit depth is
generally halved, but this may be further improved when considering
the mapping of the circuits to systems with limited qubit
connectivity.

\section*{Acknowledgments}

EvdB would like to thank Sergey Bravyi and Andrew Cross for useful
discussions.  The Hamiltonians used for the chemistry experiments were
kindly provided by Antonio Mezzacapo~\cite{CHO2020MCa}.

\bibliography{bibliography}

\end{document}